\tikzstyle{basic}=[fill=white, draw=black, shape=circle]
\tikzstyle{square}=[fill=white, draw=black, shape=rectangle]
\tikzstyle{big dashed}=[fill=white, draw=black, shape=circle, minimum width=1cm, dashed]
\tikzstyle{vertical ellipse dashed}=[fill=none, draw=blue, minimum width=0.75cm, minimum height=3cm, ellipse, dashed, tikzit shape=rectangle, tikzit draw=blue, tikzit fill=white]
\tikzstyle{small vertical ellipse dashed}=[fill=none, draw=blue, shape=circle, tikzit fill=white, tikzit draw=blue, dashed, minimum width=0.75cm, minimum height=1.5cm, tikzit shape=rectangle, ellipse]
\tikzstyle{tiny vertical ellipse dashed}=[fill=none, draw=blue, shape=circle, tikzit fill=white, ellipse, dashed, minimum width=0.75cm, minimum height=1cm, tikzit shape=rectangle]
\tikzstyle{red}=[fill=red, draw=black, shape=circle]
\tikzstyle{green}=[fill={rgb,255: red,0; green,128; blue,128}, draw=black, shape=circle]
\tikzstyle{blue}=[fill=blue, draw=black, shape=circle]
\tikzstyle{huge dashed}=[fill=white, draw=black, shape=circle, dashed, minimum width=2cm]
\tikzstyle{medium}=[fill=white, draw=black, shape=circle, minimum width=1cm]
\tikzstyle{pale green}=[fill={rgb,255: red,173; green,231; blue,0}, draw=black, shape=circle, minimum width=1cm]
\tikzstyle{horizontal ellipse dashed}=[fill=white, draw=black, tikzit draw=magenta, tikzit shape=rectangle, minimum width=3cm, minimum height=0.75cm, ellipse, dashed]
\tikzstyle{minsize}=[fill=white, draw=black, shape=circle, minimum width=0.75cm]
\tikzstyle{horizontal ellipse green}=[fill={rgb,255: red,191; green,255; blue,0}, draw=black, tikzit draw={rgb,255: red,191; green,255; blue,0}, tikzit shape=rectangle, minimum width=3cm, minimum height=0.75cm, ellipse, dashed]
\tikzstyle{horizontal ellipse blue}=[fill={rgb,255: red,107; green,203; blue,255}, draw=black, tikzit draw=blue, tikzit shape=rectangle, minimum width=3cm, minimum height=0.75cm, ellipse, dashed]
\tikzstyle{smallblack}=[fill=black, draw=black, shape=circle, inner sep=0 pt, minimum size=3 pt]
\tikzstyle{smallSquare}=[fill=white, draw=black, shape=rectangle, inner sep=0 pt, minimum size=6 pt]
\tikzstyle{smallCircle}=[fill=white, draw=black, shape=circle, inner sep=0 pt, minimum size=6 pt]
\tikzstyle{big vertical ellipse dashed}=[fill=none, draw=blue, shape=circle, tikzit shape=rectangle, ellipse, dashed, minimum width=0.95cm, minimum height=3.7cm]
\tikzstyle{smallred}=[fill=red, draw=red, shape=circle, inner sep=0 pt, minimum size=3 pt]
\tikzstyle{directed}=[->]
\tikzstyle{undirected}=[-, line width=1pt]
\tikzstyle{directed red}=[draw=red, ->, line width=1pt]
\tikzstyle{directed green}=[draw={rgb,255: red,0; green,128; blue,128}, ->, line width=1pt]
\tikzstyle{directed blue}=[draw=blue, ->, line width=1pt]
\tikzstyle{directed purple}=[draw={rgb,255: red,128; green,0; blue,128}, ->, line width=1pt]
\tikzstyle{undirected red}=[-, draw=red, line width=1pt]
\tikzstyle{undirected green}=[-, draw={rgb,255: red,0; green,107; blue,61}, line width=1pt]
\tikzstyle{undirected blue}=[-, draw=blue, line width=1pt]
\tikzstyle{undirected purple}=[-, draw={rgb,255: red,128; green,0; blue,128}, line width=1pt]
\tikzstyle{undirected dashed}=[-, line width=1pt, dashed]
\tikzstyle{orange dashed}=[-, draw={rgb,255: red,255; green,128; blue,0}, dashed, line width=1.5pt]
\tikzstyle{directed dash}=[->, dashed]
\tikzstyle{blue dashed}=[-, draw=blue, dashed, line width=1pt]
\tikzstyle{green dashed}=[-, draw={rgb,255: red,0; green,162; blue,0}, dashed, line width=1pt]
\tikzstyle{blue filled}=[-, fill={blue!20}, draw=blue, line width=1pt, opacity=0.5, tikzit fill=white]
\tikzstyle{red filled}=[-, fill={red!20}, line width=1pt, draw=red, opacity=0.5, tikzit fill=white]
\tikzstyle{green filled}=[-, line width=1pt, draw={rgb,255: red,0; green,107; blue,61}, opacity=0.5, tikzit fill=white, fill={rgb,255: red,149; green,255; blue,179}]
\tikzstyle{orange filled}=[-, fill={orange!20}, draw=orange, line width=1pt, opacity=0.5, tikzit fill=white]
\tikzstyle{undirected dashed}=[-, draw=black, dashed, line width=1pt]
\newtheorem{theorem}{Theorem} [section]
\newtheorem{lemma}{Lemma} [section]
\newtheorem{remark}{Remark} [section]
\newtheorem{corollary}{Corollary} [section]
\newcommand{\transpose}{\intercal}                    
\newcommand{\inner}[2]{\langle #1 , #2 \rangle} 
\newcommand{\norm}[1]{\left\| #1\right\|}                  
\newcommand{\abs}[1]{\left\lvert#1\right\rvert}
\DeclareMathOperator*{\argmin}{arg\,min}        
\DeclareMathOperator*{\argmax}{arg\,max}        
\newcommand{\tr}{\mathrm{tr}}                   
\newcommand{\sign}{\mathrm{sign}}
\newcommand{\poly}{\mathrm{poly}}
\renewcommand{\P}{\mathsf{P}}
\newcommand{\NP}{\mathsf{NP}}
\newcommand{\vol}{\mathrm{vol}}
\newcommand{\bigo}[1]{O\!\left(#1\right)}
\newcommand{\E}{\mathbb{E}}
\newcommand{\R}{\mathbb{R}}
\newcommand{\union}{\cup}
\newcommand{\intersect}{\cap}
\newcommand{\cardinality}[1]{\abs{#1}}
\newcommand{\setcomplement}[1]{\overline{#1}}
\definecolor{indiagreen}{rgb}{0.07, 0.53, 0.03}
\newcommand{\twopartdefow}[3]
{
	\left\{
		\begin{array}{ll}
			#1 & \mbox{if } #2 \\
			#3 & \mbox{otherwise}
		\end{array}
	\right.
}
\newcommand{\threepartdefow}[5]
{
	\left\{
		\begin{array}{lll}
			#1 & \mbox{if } #2 \\
			#3 & \mbox{if } #4 \\
			#5 & \mbox{otherwise}
		\end{array}
	\right.
}
\newcommand{\allnotation}[1]{#1}
\renewcommand{\vec}[1]{{\allnotation{\bm{#1}}}}
\newcommand{\vecf}{\vec{f}}
\newcommand{\vecg}{\vec{g}}
\newcommand{\vecx}{\vec{x}}
\newcommand{\vecy}{\vec{y}}
\newcommand{\vecu}{\vec{u}}
\newcommand{\vecv}{\vec{v}}
\newcommand{\vecr}{\vec{r}}
\newcommand{\constvec}{\vec{1}}
\newcommand{\set}[1]{{\allnotation{#1}}}
\newcommand{\setv}{\set{V}}
\newcommand{\sete}{\set{E}}
\newcommand{\sets}{\set{S}}
\newcommand{\setl}{\set{L}}
\newcommand{\setr}{\set{R}}
\newcommand{\seta}{\set{A}}
\newcommand{\setb}{\set{B}}
\newcommand{\setc}{\set{C}}
\newcommand{\setu}{\set{U}}
\newcommand{\setp}{\set{P}}
\newcommand{\vertexset}{\setv}
\newcommand{\edgeset}{\sete}
\newcommand{\graph}[1]{{\allnotation{#1}}}
\newcommand{\graphg}{\graph{G}}
\newcommand{\graphh}{\graph{H}}
\newcommand{\mat}[1]{{\allnotation{\mathbf{#1}}}}
\newcommand{\mata}{\mat{A}}
\newcommand{\lap}{\mat{L}}
\newcommand{\signlap}{\mat{J}}
\newcommand{\signlapn}{\mat{Z}}
\newcommand{\degm}{\mat{D}}
\newcommand{\degmhalf}{\degm^{\allnotation{\frac{1}{2}}}}
\newcommand{\degmhalfneg}{\degm^{\allnotation{-\frac{1}{2}}}}
\newcommand{\adj}{\mat{A}}
\newcommand{\identity}{\mat{I}}
\renewcommand{\deg}{{\allnotation{d}}}
\newcommand{\weight}{{\allnotation{w}}}
\newcommand{\bipart}{{\allnotation{\beta}}}
\newcommand{\discrep}{{\allnotation{\Delta}}}
\newcommand{\firstdef}[1]{\emph{#1}}
\newcommand{\iverson}[1]{\pmb{\left[\vphantom{#1}\right.} #1 \pmb{\left.\vphantom{#1}\right]}}
\title{Finding Bipartite Components in Hypergraphs\footnote{A preliminary version of this work appeared at   NeurIPS~2021. This work is supported by a Langmuir PhD Scholarship, and an EPSRC Early Career Fellowship~(EP/T00729X/1).}}
\author{%
Peter Macgregor\\ University of Edinburgh
\and
He Sun\\ 
University of Edinburgh}
\date{}
\numberwithin{equation}{section}
\begin{document}

\maketitle

\begin{abstract}
Hypergraphs are important  objects to model ternary or higher-order  relations of objects, and have a number of  applications in analysing many complex datasets occurring in practice. In this work we study a new heat diffusion process in hypergraphs, and employ this process to design a polynomial-time algorithm that approximately finds bipartite components in a hypergraph.  We theoretically prove the performance of our proposed algorithm, and compare it against the previous state-of-the-art through extensive experimental analysis on both synthetic and real-world datasets. We find that our new algorithm consistently and significantly outperforms the previous state-of-the-art across a wide range of hypergraphs.
\end{abstract}

\newcommand{\lur}{\setl \union \setr}   
\newcommand{\rank}{\mathrm{rank}}                   
\newcommand{\cals}{\mathcal{S}}         
\newcommand{\cali}{\mathcal{I}}
\newcommand{\cale}{\mathcal{E}}
\newcommand{\cut}{\mathrm{cut}}
\newcommand{\dxdt}{\frac{\mathrm{d} x}{\mathrm{d} t}}
\newcommand{\dfdt}{\frac{\mathrm{d} \vecf_t}{\mathrm{d} t}}
\newcommand{\signlaph}{\signlap_\graphh}
\newcommand{\signlapg}{\signlap_\graphg}
\newcommand{\sfe}{\sets_{\vecf}(e)}
\newcommand{\sfte}{\sets_{\vecf_t}(e)}
\newcommand{\ife}{\set{I}_{\vecf}(e)}
\newcommand{\ifte}{\set{I}_{\vecf_t}(e)}
\newcommand{\diffalgname}{\textsc{FindBipartiteComponents}}
\newcommand{\diffalgshortname}{\textsc{FBC}}
\newcommand{\diffalgapproxname}{\textsc{FBCApprox}}
\newcommand{\diffalgapproxshortname}{\textsc{FBCA}}

\newcommand{\deltaesquare}[1]{\left(\max_{u \in e} #1(u) + \min_{v \in e}#1(v)\right)^2}
\newcommand{\deltaeabs}[1]{\abs{\max_{u \in e} #1(u) + \min_{v \in e} #1(v)}}

\section{Introduction}
Spectral methods study the efficient matrix representation of graphs and datasets, and apply the algebraic properties of these matrices to design efficient algorithms. Over the last three decades, spectral methods have become one of the most powerful techniques in machine learning, and have had comprehensive applications in a wide range of domains, including clustering~\cite{ngSpectralClusteringAnalysis2001,vonluxburgTutorialSpectralClustering2007}, image and video segmentation~\cite{shiNormalizedCutsImage2000}, and network analysis~\cite{searySpectralMethodsAnalyzing2003}, among many others. While the success of this line of research is based on   our rich understanding of      Laplacian operators of graphs, there has been a sequence of  very recent work studying \emph{non-linear} Laplacian operators for more complex objects~(i.e., hypergraphs) and employing these non-linear operators to design hypergraph algorithms with better performance. 

\subsection{Our contribution}

In this work, we study the non-linear Laplacian-type operators for hypergraphs, and employ such an operator to design a  polynomial-time algorithm for finding bipartite components in hypergraphs.  The main contribution of our work is as follows.

First of all, we introduce and study a non-linear Laplacian-type operator  $\signlap_\graphh$ for any hypergraph $\graphh$.
While  we'll formally define the operator $\signlap_\graphh$ in Section~\ref{sec:diff_and_alg}, one can informally think about $\signlap_\graphh$ as a variant of the standard non-linear hypergraph Laplacian $\lap_\graphh$ studied in~\cite{chanSpectralPropertiesHypergraph2018, liSubmodularHypergraphsPLaplacians2018, takaiHypergraphClusteringBased2020}, where the variation  is needed to study the other end of the spectrum of $\lap_\graphh$.
We present a polynomial-time algorithm that  finds an eigenvalue $\lambda$ and its associated eigenvector of $\signlap_\graphh$.
The algorithm is based on the following  heat diffusion process:
starting from an arbitrary vector $\vecf_0 \in \R^n$ that describes the initial heat distribution among the vertices, we use $\vecf_0$ to  construct some graph\footnote{Note that the word `graph' always refers to a non-hyper graph. Similarly, we always use $\lap_\graphh$ to refer to the \emph{non-linear} hypergraph Laplacian operator, and use $\lap_\graphg$ as the standard graph Laplacian.}
$\graphg_0$, and use the diffusion process in $\graphg_0$ to represent the  one in the original hypergraph $\graphh$ and update $\vecf_t$; this process continues until the time at which $\graphg_0$ cannot be used to appropriately simulate the diffusion process in $\graphh$ any more.
At this point, we use the currently maintained $\vecf_t$ to construct another graph $\graphg_t$ to simulate the diffusion process in $\graphh$, and update $\vecf_t$.
This process continues until the vector $\vecf_t$ converges; see Figure~\ref{fig:illustration} for illustration.
We theoretically prove that this heat diffusion process is unique, well-defined, and our maintained vector $\vecf_t$ converges to an eigenvector of $\signlap_\graphh$. While this result is quite interesting on its own and forms the basis of our second result, our analysis shows that, for certain hypergraphs $\graphh$, both the operator $\signlap_\graphh$ and $\lap_\graphh$ could have $\omega(1)$ eigenvectors.
This result answers an open question  in \cite{chanSpectralPropertiesHypergraph2018}, which asks whether $\lap_\graphh$ could have more than $2$ eigenvectors\footnote{We underline that, while the operator $\lap_\graphg$ of a graph $\graphg$ has $n$ eigenvalues, the number of eigenvalues of $\lap_\graphh$ is unknown because of its non-linearity.}.

Secondly, we present a polynomial-time algorithm that, given a hypergraph $\graphh=(\vertexset_\graphh, \edgeset_\graphh, \weight)$ as input, 
finds disjoint subsets $\setl, \setr \subset \vertexset_\graphh$ that are highly connected with each other.
The key to our algorithm is a Cheeger-type inequality for hypergraphs that relates the spectrum of $\signlap_\graphh$ and the bipartiteness ratio of $\graphh$, an analog of the bipartiteness ratio studied in \cite{trevisanMaxCutSmallest2012} for graphs. Both the design and analysis of our algorithm is inspired  by \cite{trevisanMaxCutSmallest2012}, however our analysis is much more involved  because of the non-linear operator $\signlap_\graphh$ and hyperedges of different ranks. Our second result alone answers an open question posed by \cite{yoshidaCheegerInequalitiesSubmodular2019},  which asks whether there is a hypergraph operator which satisfies a Cheeger-type inequality for bipartiteness.

The significance of our work is further demonstrated by  extensive experimental studies of our   algorithms on both synthetic and real-world datasets.
In particular, on the well-known Penn Treebank corpus that contains $49,208$ sentences and over 1 million words, 
our \emph{purely unsupervised} algorithm is able to identify a significant fraction of  verbs  from non-verbs in its two output clusters.
Hence, we believe that  our work  could potentially have many applications in unsupervised learning for hypergraphs.
Using the publicly available code of our implementation, we welcome the reader to explore further applications of our work in even more diverse datasets. 

\begin{figure}[t]
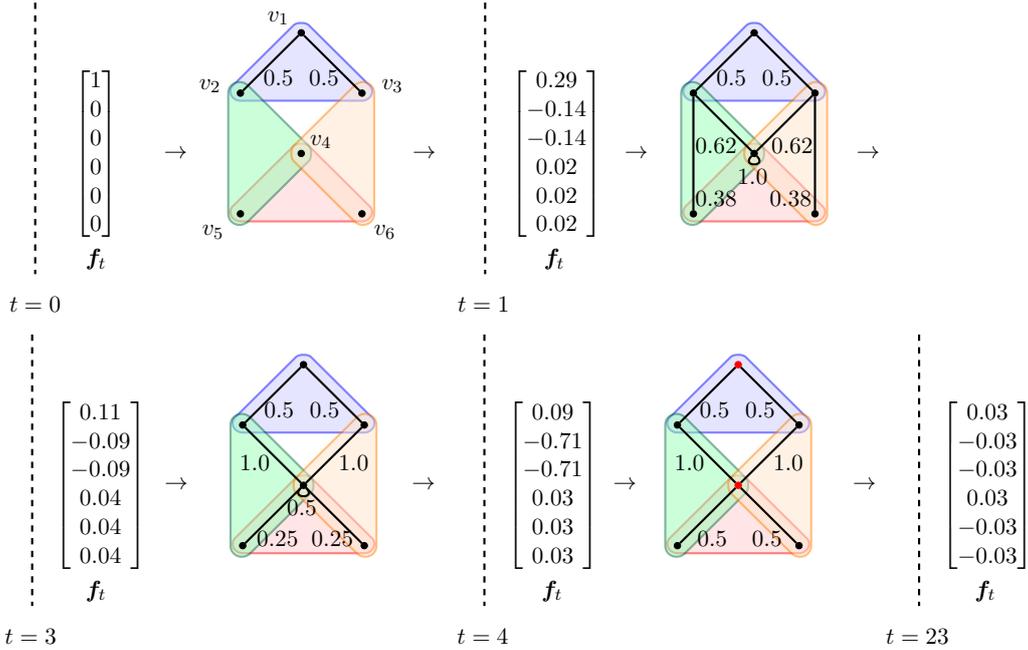

    \centering
    \scalebox{0.8}{
    \tikzfig{hypergraphs/diffusion_example_large}
    }
\caption{\small{Illustration of our proposed diffusion process. In each time step, we construct a graph $\graphg$ based on the current vector $\vecf_t$, and update $\vecf_t$ with the $\signlap_\graphg$ operator.
Notice that the graph $\graphg$ changes throughout the execution of the algorithm, and that the final $\vecf_t$ vector can be used to partition the vertices of $\graphh$ into two well-connected sets (all the edges are adjacent to both sets), by splitting according to positive and negative entries.
This specific example with $\vecf_t$ values is generated by the publicly available implementation of our algorithm.}
\label{fig:illustration}}
\end{figure}

\subsection{Related work}
The spectral theory of hypergraphs using non-linear operators is introduced in~\cite{chanSpectralPropertiesHypergraph2018} and generalised in~\cite{yoshidaCheegerInequalitiesSubmodular2019}. The operator they describe is applied for hypergraph clustering applications in~\cite{liSubmodularHypergraphsPLaplacians2018, takaiHypergraphClusteringBased2020}.
There are many approaches for finding clusters in hypergraphs by constructing a graph which approximates the hypergraph and using a graph clustering algorithm directly~\cite{chitraRandomWalksHypergraphs2019, liInhomogeneousHypergraphClustering2017, zhouLearningHypergraphsClustering2006}.
Another approach for hypergraph clustering is based on tensor spectral decomposition~\cite{ghoshdastidarConsistencySpectralPartitioning2014, leordeanuEfficientHypergraphClustering2012}.
\cite{macgregorLocalAlgorithmsFinding2021, mooreActiveLearningNode2011, zhuHomophilyGraphNeural2020} consider the problem of finding densely connected clusters in graphs.
Heat diffusion processes are used for clustering graphs in~\cite{chungLocalGraphPartitioning2009, klosterHeatKernelBased2014}.
\cite{fountoulakisPNormFlowDiffusion2020} studies a different, flow-based diffusion process for finding clusters in graphs, and \cite{fountoulakisLocalHyperflowDiffusion2021} generalises this to hypergraphs. We note that 
all of these methods solve a different problem to ours, and cannot be compared directly.
Our algorithm is related to the hypergraph max cut problem, and the state-of-the-art approximation algorithm is given by~\cite{zhangImprovedApproximationsMax2004}.
\cite{trevisanMaxCutSmallest2012} introduces graph bipartiteness and gives 
an
approximation algorithm for the graph max cut problem.
To the best of our knowledge, we are the first to generalise this notion of bipartiteness to hypergraphs.
Finally, we note that there have been recent improvements in the time complexity for solving linear programs~\cite{cohenSolvingLinearPrograms2021, vandenbrandDeterministicLinearProgram2020} although we do not take these into account in our analysis since the goal of this paper is not to obtain the fastest algorithm possible.

\subsection{Organisation}
The paper is organised as follows. In Section~\ref{sec:notation}, we introduce the necessary preliminaries and notation. Section~\ref{sec:baselines} describes the limitations of existing methods for finding bipartite components in hypergraphs and in Section~\ref{sec:diff_and_alg} we introduce the new algorithm.
We prove the main theoretical result in Section~\ref{sec:thm1proof}.
Finally, in Section~\ref{sec:experiments} we include some experimental results demonstrating the applicability of the new hypergraph operator.

\section{Preliminaries} \label{sec:notation}

\paragraph{Graphs.}
Throughout the paper,
`graph' always refers to a non-hyper graph.
We always use $\graphg=(\vertexset_\graphg,\edgeset_\graphg, \weight)$ to express a graph, in which every edge $e \in \edgeset_\graphg$ consists of two vertices in $\vertexset_\graphg$ and we let $n = \cardinality{\vertexset_\graphg}$.
The degree of any vertex $u \in \vertexset_\graphg$ is defined by  $\deg_\graphg(u) \triangleq \sum_{v \in \vertexset_\graphg} \weight(u,v)$,
and for any $\sets \subseteq \vertexset_\graphg$ the volume of $\sets$ is defined by $\vol_\graphg(\sets) \triangleq \sum_{u\in \sets } \deg_\graphg(u)$. Following \cite{trevisanMaxCutSmallest2012}, the \firstdef{bipartiteness ratio} of any disjoint sets $\setl,\setr \subset \vertexset_\graphg$ is defined by
 
\[
\bipart_\graphg(\setl, \setr) \triangleq \frac{2 \weight(\setl, \setl) + 2 \weight(\setr, \setr) + \weight(\setl \union \setr, \setcomplement{\setl \union \setr})}{\vol_\graphg(\setl \union \setr)}
\]
where $\weight(\seta, \setb) = \sum_{(u, v) \in \seta \times \setb} \weight(u, v)$, 
and we further define
$
\bipart_\graphg \triangleq \min_{\sets \subset \vertexset} \bipart_\graphg(\sets, \vertexset \setminus \sets)$.  
Notice that a low $\bipart_\graphg$-value means that there is a dense cut between $\setl$ and $\setr$, and there is a sparse cut between $\setl \union \setr$ and $\vertexset \setminus (\setl \union \setr)$. In particular, $\bipart_\graphg=0$ implies that $(\setl, \setr)$ forms a bipartite component of $\graphg$. 
 We use $\degm_\graphg$ to denote the $n \times n$ diagonal matrix whose entries are $(\degm_\graphg)_{uu} = \deg_\graphg(u)$, for all $u \in \vertexset_\graphg$.
 Moreover, we use $\adj_\graphg$ to denote the $n \times n$ adjacency matrix whose entries are $(\adj_\graphg)_{uv} = \weight(u,v)$, for all $u, v \in \vertexset_\graphg$.
 The Laplacian matrix is defined by $\lap_\graphg \triangleq \degm_\graphg - \adj_\graphg$.
 In addition, we define the signless Laplacian operator $\signlap_\graphg \triangleq \degm_\graphg + \adj_\graphg$, and its normalised counterpart $\signlapn_\graphg \triangleq \degmhalfneg_\graphg \signlap_\graphg \degmhalfneg_\graphg$.  
For any real and symmetric matrix $\mata$, the eigenvalues of $\mata$ are denoted by $\lambda_1(\mata) \leq \dots \leq \lambda_n(\mata)$, and  the eigenvector associated with $\lambda_i(\mata)$ is denoted by $\vecf_i(\mata)$ for $1 \leq i \leq n$. 


\paragraph{Hypergraphs.}
Let $\graphh=(\vertexset_\graphh,\edgeset_\graphh,\weight)$ be a hypergraph with $n=\cardinality{\vertexset_\graphh}$ vertices and weight function $\weight:\edgeset_\graphh \mapsto \R^+$. For any vertex $v\in \vertexset_\graphh$, the degree of $v$ is defined by $\deg_\graphh(v) \triangleq \sum_{e\in \edgeset_\graphh} \weight(e)\cdot \iverson{v\in e}$, where $\iverson{X}=1$ if event $X$ holds and $\iverson{X}=0 $ otherwise.
The \firstdef{rank} of edge $e\in \edgeset_\graphh$ is written as $\rank(e)$ and is defined to be the total number of vertices in $e$. For any $\seta, \setb \subset \vertexset_\graphh$, the cut value between $\seta$ and $\setb$ is defined by
\[
\weight(\seta, \setb) \triangleq
\sum_{e\in \edgeset_\graphh} \weight(e) \cdot \iverson{e \intersect \seta \neq \emptyset \land e \intersect \setb \neq \emptyset}.
\]
Sometimes, we are required to analyse the weights of edges that intersect some vertex sets and not others. To this end, we define for any $\seta, \setb, \setc \subseteq \vertexset_\graphh$ that 
\[
\weight(\seta, \setb~|~\setc) \triangleq \sum_{e \in \edgeset_\graphh} \weight(e)\cdot \iverson{e \intersect \seta \neq \emptyset \land e \intersect \setb \neq \emptyset \land e \intersect C=\emptyset},
\]
and we sometimes write $\weight(\seta~|~\setc) \triangleq \weight(\seta, \seta~|~\setc)$ for simplicity.
Generalising the notion of the bipartiteness ratio of a graph, the \firstdef{bipartiteness ratio} of sets $\setl,\setr$ in a hypergraph $\graphh$ is defined by 
    \[
        \bipart_\graphh(\setl, \setr) \triangleq \frac{2\weight(\setl | \setcomplement{\setl}) + 2\weight(\setr | \setcomplement{\setr}) + \weight(\setl, \setcomplement{\lur} | \setr) + w(\setr, \setcomplement{\lur}| \setl)}{\vol(\lur)},
    \]
    and we define $$\bipart_H\triangleq \min_{\sets \subset \vertexset_\graphh}\bipart_\graphh(\sets, \vertexset_\graphh \setminus \sets).$$
For any hypergraph $H$ and $\vecf \in \R^n$, we define the discrepancy of an edge $e \in \edgeset_\graphh$ with respect to $\vecf$ as
\[
\discrep_{\vecf} (e) \triangleq \max_{u\in e} \vecf(u) + \min_{v\in e} \vecf(v).
\] 
Moreover, the weighted discrepancy of $e \in \edgeset_\graphh$ is
\[
    c_{\vecf}(e) \triangleq \weight(e) \abs{\discrep_{\vecf}(e)},
\]
and for any set $\sets \subseteq \edgeset_\graphh$  we have
\[
    c_{\vecf}(\sets) \triangleq \sum_{e \in \sets} c_{\vecf}(e).
\]
The discrepancy of $\graphh$ with respect to $\vecf$ is defined by
\[
D(\vecf) \triangleq\frac{\sum_{e\in \edgeset_\graphh} \weight(e) \cdot \discrep_{\vecf}(e)^2 }{\sum_{v\in \vertexset_\graphh} \deg_\graphh(v)\cdot \vecf(v)^2}.
\]
For any
$\vecf, \vecg \in \R^n$, the weighted inner product between $\vecf$ and $\vecg$ is defined by 
$$\inner{\vecf}{\vecg}_w \triangleq \vecf^\transpose \degm_\graphh \vecg,$$
where $\degm_\graphh \in \R^{n \times n}$ is the diagonal matrix consisting of the degrees of all the vertices of $\graphh$. The weighted norm of $\vecf$ is defined by $\norm{\vecf}_w^2 \triangleq  \inner{\vecf}{\vecf}_w$.

For any non-linear operator $\mat{J}: \R^n \mapsto \R^n$, we say that $(\lambda, \vecf)$ is an eigen-pair if and only if $\mat{J} \vecf = \lambda \vecf$ and note that in general, a non-linear operator can have any number of eigenvalues and eigenvectors.
Additionally, we define the weighted Rayleigh quotient of the operator $\signlap$ and vector $\vecf$ to be
\[
    R_\signlap(\vecf) \triangleq \frac{\vecf^\transpose \signlap \vecf}{\norm{\vecf}_\weight^2}.
\]
It is  important to remember that throughout the paper we always use the letter $\graphh$ to represent a hypergraph, and $\graphg$ to represent a graph. 

\section{Baseline Algorithms} \label{sec:baselines}
Before we come to the definition of the new algorithm, let us first consider the baseline algorithms for finding bipartite components in hypergraphs.
One natural idea is to
construct a graph $\graphg$ from the original hypergraph $\graphh$ and apply a graph algorithm on $\graphg$ to find a good approximation of the cut structure of $\graphh$.
However, any simple graph reduction would introduce a factor of $r$, which relates to the rank of hyperedges in $\graphh$, into the approximation guarantee.
To see this, consider the following two natural graph reductions:
\begin{enumerate}
\item \firstdef{Random Reduction}: From $\graphh=(\vertexset,\edgeset_\graphh)$, we construct $\graphg=(\vertexset, \edgeset_\graphg)$ in which every hyperedge $e\in  \edgeset_\graphh$ is replaced by an edge connecting two randomly chosen vertices in $e$.
    \item \firstdef{Clique Reduction}: From $\graphh=(\vertexset,\edgeset_\graphh)$, we construct $\graphg=(\vertexset, \edgeset_\graphg)$ in which 
every hyperedge $e\in \edgeset_\graphh$ of rank $\rank(e)$ is replaced by a clique of $\rank(e)$ vertices in $\graphg$;
\end{enumerate}

To discuss the drawback of both reductions, we study the following two $r$-uniform  hypergraphs $\graphh_1$ and $\graphh_2$, in which all the edges connect the disjoint vertex sets  $\setl$ and $\setr$ and are constructed as follows:
\begin{enumerate}
    \item in $\graphh_1$, every edge contains exactly one vertex from $\setl$, and $r-1$ vertices from $\setr$;
    \item in $\graphh_2$, every edge contains exactly $r/2$ vertices from $\setl$ and $r/2$ vertices from $\setr$.
\end{enumerate}
As such, we have that $\weight_{\graphh_1}(\setl,\setr) = \cardinality{\edgeset_{\graphh_1}}$, and $\weight_{\graphh_2}(\setl,\setr) = \cardinality{\edgeset_{\graphh_2}}$.
See  Figure~\ref{fig:example_hypergraphs} for illustration.

\begin{figure}[h]
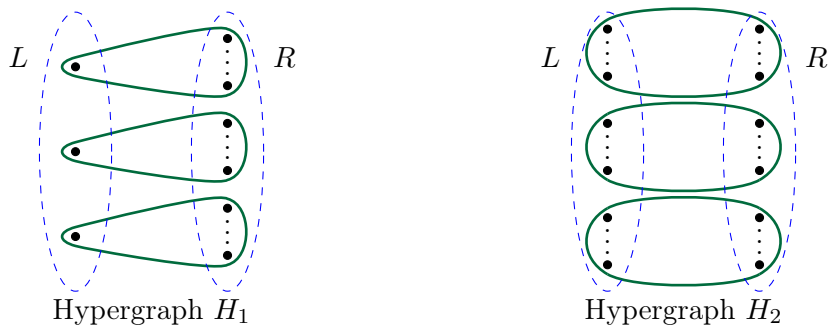

    \centering
    \tikzfig{hypergraphs/hypergraph_examples}
    \caption{Example hypergraphs illustrating the problem with simple reductions.
    \label{fig:example_hypergraphs}}
\end{figure}

Now we analyse the size of the cut $(\setl, \setr)$ in the reduced graphs.
\begin{itemize}
    \item Let $\weight_{\graphg}(\setl,\setr)$ be the cut value of $(\setl,\setr)$ in the \emph{random} graph $\graphg$ constructed from $\graphh$ by the random reduction. We have for $\graphh_1$ that $\E[\weight_{\graphg}(\setl,\setr) ] = \Theta(1/r)\cdot \weight_{\graphh_1}(\setl,\setr)$  and for $\graphh_2$ that $\E[\weight_{\graphg}(\setl,\setr) ] = \Theta(1)\cdot \weight_{\graphh_2}(\setl,\setr)$. 
    \item 
Similarly, by setting $\weight_{\graphg}(\setl,\setr)$ to be the cut value of $(\setl,\setr)$ in the reduced graph $\graphg$ constructed from $\graphh$ by the clique reduction, we have for $\graphh_1$ that
$\weight_{\graphg}(\setl,\setr)= \Theta(r)\cdot \weight_{\graphh_1}(\setl,\setr)$
and for $\graphh_2$ that
$\weight_{\graphg}(\setl,\setr) = \Theta(r^2)\cdot \weight_{\graphh_2}(\setl,\setr)$.
\end{itemize}
Since these two approximation ratios differ by a factor of $r$ in the two reductions, no matter how we weight the edges in the reduction, there are always some hypergraphs in which some cuts cannot be approximated better than a factor of $O(r)$. This suggests that   reducing an $r$-uniform hypergraph $\graphh$ to a graph with some simple reduction would always introduce a factor of $r$ into the approximation guarantee.
This is one of the main reasons to develop spectral theory for hypergraphs through heat diffusion processes~\cite{chanSpectralPropertiesHypergraph2018, takaiHypergraphClusteringBased2020, yoshidaCheegerInequalitiesSubmodular2019}.

\section{Diffusion process and the algorithm } \label{sec:diff_and_alg}
    In this section, we introduce a new diffusion process in hypergraphs and use it to design a polynomial-time algorithm  for finding bipartite components in hypergraphs.
    We first study graphs to give some intuition, and then   generalise to hypergraphs and describe our algorithm.
    Finally, we sketch some of the detailed analysis which proves that the diffusion process is well defined.

\subsection{The diffusion process in graphs} \label{sec:2graph-diffusion}
To discuss the intuition behind our designed diffusion process, let us look at the case of graphs. Let $\graphg=(\vertexset, \edgeset , \weight)$ be a graph, and we have for  any $\vecx \in \R^n$ that
\[
\frac{\vecx^\transpose \signlapn_\graphg \vecx}{\vecx^\transpose \vecx} = 
\frac{\vecx^\transpose ( \identity + \degmhalfneg_\graphg \adj_\graphg \degmhalfneg_\graphg) \vecx }{\vecx^\transpose \vecx }.
\]
By setting $\vecx = \degmhalf_\graphg \vecy$, we have that
\begin{equation}\label{eq:2graphcalculation}
\frac{\vecx^\transpose \signlapn_\graphg \vecx}{\vecx^\transpose \vecx}
= \frac{\vecy^\transpose \degmhalf_\graphg \signlapn_\graphg \degmhalf_\graphg \vecy }{ \vecy^\transpose \degm_\graphg \vecy}
= \frac{\vecy^\transpose (\degm_\graphg + \adj_\graphg) \vecy }{\vecy^\transpose \degm_\graphg  \vecy }
= \frac{\sum_{\{u,v\}\in \edgeset_\graphg} \weight(u,v) \cdot (\vecy(u) + \vecy(v))^2}{\sum_{u\in \vertexset_\graphg} \deg_\graphg(u) \cdot \vecy(u)^2}.
\end{equation}
   It is easy to see  that   $\lambda_{1}(\signlapn_\graphg)=0$ if $\graphg$ is bipartite,  and it is known that  $\lambda_{1}(\signlapn_\graphg)$ and its corresponding eigenvector $\vecf_1(\signlapn_\graphg)$ are closely related to two densely connected components of $\graphg$~\cite{trevisanMaxCutSmallest2012}.
   Moreover, similar to the heat equation for graph Laplacians $\lap_\graphg$, suppose $\degm_\graphg \vecf_t \in \R^n$ is some measure on the vertices of $\graphg$, then a diffusion process defined by the differential equation
\begin{equation} \label{eq:2graph_pde}
\frac{\mathrm{d} \vecf_t}{\mathrm{d} t} = -\degm_\graphg^{-1} \signlap_\graphg \vecf_t\
\end{equation}
 will converge to the eigenvector of $\degm_\graphg^{-1} \signlap_\graphg$ with minimum eigenvalue and
can be employed to find two densely connected components of the underlying graph.\footnote{For the reader familiar with the heat diffusion process of graphs~(e.g.,  \cite{chungHeatKernelPagerank2007, klosterHeatKernelBased2014}), we remark that the above-defined process essentially employs the operator $\signlap_\graphg$ to
replace the Laplacian $\lap_\graphg$ when defining the heat diffusion: through $\signlap_\graphg$, the heat diffusion can be used to find two densely connected components of $\graphg$.}

\subsection{The hypergraph diffusion and our algorithm} \label{sec:algorithm}
 Now we study  whether one can construct a new hypergraph operator $\signlap_\graphh$ which generalises the diffusion in graphs to hypergraphs.
First of all, we focus on a fixed time $t$ with measure vector $\degm_\graphh \vecf_t \in \R^n$ and ask whether we can follow \eqref{eq:2graph_pde} and define the rate of change
\[
    \frac{\mathrm{d} \vecf_t}{\mathrm{d} t} = - \degm_\graphh^{-1} \signlap_\graphh \vecf_t
\]
so that the diffusion can proceed for an infinitesimal time step.
Our intuition is that the rate of change due to some edge $e \in \edgeset_\graphh$ should involve only the vertices in $e$ with the maximum or minimum value in the normalised measure $\vecf_t$.
To formalise this, for any edge $e \in \edgeset_\graphh$, we define
\[
    \sets_{\vecf}(e) \triangleq \{ v \in e : \vecf_t(v) = \max_{u \in e} \vecf_t(u) \} \mbox{\ \ \ and\ \ \ }
    \set{I}_{\vecf}(e) \triangleq \{ v \in e : \vecf_t(v) = \min_{u \in e} \vecf_t(u) \}.
\]
That is, for any edge $e$ and normalised measure $\vecf_t$, $\sets_{\vecf}(e) \subseteq e$ consists of the vertices $v$ adjacent to $e$ whose $\vecf_t(v)$ values are maximum and $\set{I}_{\vecf}(e) \subseteq e$ consists of the vertices $v$ adjacent to $e$ whose $\vecf_t(v)$ values are minimum.
See Figure~\ref{fig:example_hyperedge} for an example.
Then, applying the $\signlap_\graphh$ operator to a vector $\vecf_t$ should be equivalent to applying the operator $\signlap_\graphg$ for some graph $\graphg$ which we construct by splitting the weight of each hyperedge $e \in \edgeset_\graphh$ between the edges in $\sets_{\vecf}(e) \times \set{I}_{\vecf}(e)$.
Similar to the case for graphs and \eqref{eq:2graphcalculation}, for any $\vecx = \degmhalf_\graphh \vecf_t$ this will give us the quadratic form
\begin{align*}
\frac{\vecx^\transpose \degmhalfneg_\graphh \signlap_\graphh \degmhalfneg_\graphh \vecx}{\vecx^\transpose \vecx}
= \frac{\vecf_t^\transpose \signlap_\graphg \vecf_t }{\vecf_t^\transpose \degm_\graphh  \vecf_t } & 
= \frac{\sum_{\{u,v\}\in \edgeset_\graphg} \weight_\graphg(u, v) \cdot (\vecf_t(u) + \vecf_t(v))^2}{\sum_{u\in \vertexset_\graphg} \deg_\graphh(u) \cdot \vecf_t(u)^2} \\
& = \frac{\sum_{e \in \edgeset_\graphh} \weight_\graphh(e) (\max_{u \in e} \vecf_t(u) + \min_{v \in e} \vecf_t(v))^2}{\sum_{u \in \vertexset_\graphh} \deg_\graphh(u)\cdot \vecf_t(u)^2},
\end{align*}
where   $\weight_\graphg(u, v)$ is the weight of the edge $\{u, v\}$ in $\graphg$, and $\weight_\graphh(e)$ is the weight of the edge $e$ in $\graphh$.
We will show in the proof of Theorem~\ref{thm:mainalg} that $\signlap_\graphh$ has an eigenvalue of $0$ if the hypergraph is $2$-colourable\footnote{Hypergraph $\graphh$ is $2$-colourable if there are disjoint sets $\setl, \setr \subset \vertexset_\graphh$ such that every edge intersects $\setl$ and $\setr$.}, and that the spectrum of $\signlap_\graphh$ is closely related to the hypergraph bipartiteness.

\begin{figure} 
    \centering
    \tikzfig{hypergraphs/hyperedge_example}
    \caption{\small{Illustration of 
      $\sfe$ and $\ife$. Vertices are labelled with their value in $\vecf_t$.
    \label{fig:example_hyperedge}}}
\end{figure}
 For this reason, we would expect that the diffusion process based on the operator $\signlap_\graphh$ can be used to find sets with small hypergraph bipartiteness.
However, one needs to be very  cautious here as,   by the nature of the diffusion process,  the values $\vecf_t(v)$ of all the vertices $v$  change over time and, as a result, the sets $\sets_{\vecf}(e)$ and $\set{I}_{\vecf}(e)$ that consist of the vertices with the maximum and minimum $\vecf_t$-value
might change after an \emph{infinitesimal} time step; this will prevent the process from continuing. We will   discuss this  issue in detail through the so-called \firstdef{Diffusion Continuity 
Condition} in   Section~\ref{sec:natural_assumption}.
In essence,  the diffusion continuity condition ensures that one  can always    construct a graph $\graphg$ by allocating the weight of each hyperedge $e$ to the edges in $\sets_{\vecf}(e) \times \set{I}_{\vecf}(e)$ such that the sets $\sets_{\vecf}(e)$ and $\set{I}_{\vecf}(e)$ will not change in infinitesimal time although $\vecf_t$ changes according to $(\mathrm{d} \vecf_t)/(\mathrm{d} t) = - \degm_\graphh^{-1} \signlap_\graphg \vecf_t$.
We will also  present an efficient procedure in  Section~\ref{sec:natural_assumption}  to compute the weights of edges in $\sfe \times \ife$. All of these guarantee that (i) every graph that corresponds to the hypergraph diffusion process at any time step   can be efficiently constructed;
(ii) with this sequence of constructed graphs, the diffusion process defined by $\signlaph$ is able to continue until the heat distribution converges.
With this,  we summarise the main idea of our presented algorithm   as follows: 
\begin{itemize}\itemsep -2pt
    \item First of all, we introduce some arbitrary $\vecf_0 \in \R^n$ as the initial diffusion vector, and a step size parameter $\epsilon>0$ to discretise
    the diffusion process. At each step, the algorithm constructs the graph $\graphg$ guaranteed by the diffusion continuity condition,  and updates  $\vecf_t \in \R^n$ according to the rate of change $(\mathrm{d} \vecf_t)/(\mathrm{d}t) = - \degm_\graphh^{-1} \signlapg \vecf_t$. The algorithm terminates when  
     $\vecf_t$ has converged, i.e., the ratio between the current Rayleigh quotient $(\vecf_t^\transpose \signlapg \vecf_t)/(\vecf_t^\transpose \degm_\graphh \vecf_t)$
     and the one in the previous time step is bounded by some predefined constant.
    \item Secondly, similar to many previous spectral graph clustering algorithms~(e.g. \cite{andersenLocalGraphPartitioning2006, takaiHypergraphClusteringBased2020, trevisanMaxCutSmallest2012}),
    the algorithm constructs  the sweep sets defined by $\vecf_t$ and returns the two sets with minimum
    $\bipart_\graphh$-value among all the constructed sweep sets. 
    Specifically, for every $1\leq i\leq n$, the algorithm constructs    $\setl_j = \{v_i : \abs{\vecf_t(v_i)} \geq \abs{\vecf_t(v_j)} \land \vecf_t(v_i) < 0\}$ and $\setr_j = \{v_i : \abs{\vecf_t(v_i)} \geq \abs{\vecf_t(v_j)} \land \vecf_t(v_i) \geq 0\}$.
    Then, between the $n$ pairs $(\setl_j, \setr_j)$, the algorithm returns the one with the minimum $\bipart_\graphh$-value. 
\end{itemize}
  See Algorithm~\ref{algo:main} for the formal description,
and its performance is summarised in  Theorem~\ref{thm:mainalg}.  

\begin{theorem}[Main Result] \label{thm:mainalg}
Given a hypergraph $\graphh = (\vertexset_\graphh, \edgeset_\graphh, \weight)$ and parameter $\epsilon>0$, the following holds:
\begin{enumerate}\itemsep -2pt
    \item There is an algorithm that finds an eigen-pair ($\lambda, \vecf$) of the operator $\signlaph$ such that $\lambda \leq \lambda_1(\signlapg)$, where $\graphg$ is the clique reduction of $\graphh$ and the inequality is strict if $\min_{e \in \edgeset_\graphh} \rank(e) > 2$. The algorithm runs in $\poly(\cardinality{\vertexset_\graphh}, \cardinality{\edgeset_\graphh} ,1/\epsilon)$ time.
    \item Given an eigen-pair $(\lambda, \vecf)$ of the operator $\signlaph$, there is an algorithm that constructs the two-sided sweep sets  defined  on $\vecf$, and  finds sets $\setl$ and $\setr$ such that $\bipart_\graphh(\setl, \setr) \leq \sqrt{2 \lambda}$. The algorithm runs in $\poly(\cardinality{\vertexset_\graphh}, \cardinality{\edgeset_\graphh})$ time.\\
\end{enumerate}
\end{theorem}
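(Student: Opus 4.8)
Since the two claims are essentially independent, I would establish them in turn; throughout write $n=\abs{\vertexset_\graphh}$.

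\emph{Part 1 (the diffusion converges to an eigen-pair).} The plan is to analyse the continuous-time process $\dfdt=-\degm_\graphh^{-1}\signlap_{\graphg_t}\vecf_t$, with $\graphg_t$ the graph supplied at time $t$ by the Diffusion Continuity Condition of Section~\ref{sec:natural_assumption} — from which I take for granted that the process is well defined, that $\graphg_t$ is polynomial-time constructible, and that $\sfe,\ife$ (hence $\graphg_t$) change only at isolated times. On each interval where $\graphg_t$ is constant the evolution is the heat flow of $\degm_\graphh^{-1}\signlap_{\graphg_t}$, self-adjoint in the $\degm_\graphh$-inner product, so differentiating the Rayleigh quotient $R(t)\triangleq\vecf_t^\transpose\signlap_{\graphg_t}\vecf_t/\norm{\vecf_t}_w^2$ and invoking Cauchy--Schwarz in that inner product gives $\frac{\mathrm d}{\mathrm dt}R(t)\le 0$, with equality iff $\degm_\graphh^{-1}\signlap_{\graphg_t}\vecf_t$ is parallel to $\vecf_t$; and $R$ is continuous across the (isolated) breakpoints since $\vecf\mapsto\sum_{e\in\edgeset_\graphh}\weight(e)\discrep_\vecf(e)^2=\vecf^\transpose\signlaph\vecf$ is continuous in $\vecf$. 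As every signless Laplacian $\degm_\graphg+\adj_\graphg$ is positive semidefinite, $R(t)\ge 0$, so $R(t)$ decreases to some $\lambda\ge 0$ and the Cauchy--Schwarz defect tends to $0$; hence any limit point $\vecf_\star$ of the normalised $\vecf_t$ satisfies $\signlaph\vecf_\star=\lambda\vecf_\star$, i.e.\ $(\lambda,\vecf_\star)$ is an eigen-pair. For the eigenvalue bound, start the process at $\vecf_0$ equal to the minimiser of $\vecf\mapsto\vecf^\transpose\signlapg\vecf/(\vecf^\transpose\degm_\graphh\vecf)$ for the degree-preserving clique reduction $\graphg$, whose minimum value is $\lambda_1(\signlapg)$; the inequality $\sum_{e}\weight(e)\discrep_\vecf(e)^2\le\vecf^\transpose\signlapg\vecf$ — which holds edge by edge because $\discrep_\vecf(e)=\max_{u\in e}\vecf(u)+\min_{v\in e}\vecf(v)$ always lies between $\vecf(u^\star)+\vecf(w)$ and $\vecf(w)+\vecf(v^\star)$ for the extremal vertices $u^\star,v^\star$ and any third vertex $w$ of $e$ — gives $R(0)=D(\vecf_0)\le\lambda_1(\signlapg)$, whence $\lambda\le R(0)\le\lambda_1(\signlapg)$ by monotonicity, and this is strict once $\min_{e}\rank(e)>2$ because the clique reduction then strictly over-counts at $\vecf_0$. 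The $\poly$-time bound then follows by discretising with step $\epsilon$ and bounding the number of steps until $R$ stabilises, each step being one matrix--vector update plus one polynomial-time construction of $\graphg_t$.

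\emph{Part 2 (the sweep sets).} Here the plan is a Trevisan-style argument in which the hyperedge discrepancy plays the role of $\abs{x(u)+x(v)}$. From the eigen-pair we have $D(\vecf)=\lambda$, i.e.\ $\sum_{e}\weight(e)\discrep_\vecf(e)^2=\lambda\norm{\vecf}_w^2$; rescale $\vecf$ to $\vecx$ with $\max_v\abs{\vecx(v)}=1$, which changes neither side. Draw a threshold $t\in(0,1]$ with density $2t$, so that $\prob{t\le s}=s^2$ for $s\in[0,1]$, and set $\setl_t=\{v:\vecx(v)\le -t\}$, $\setr_t=\{v:\vecx(v)\ge t\}$ and $\setu_t=\{v:\abs{\vecx(v)}\ge t\}$. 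As $t$ varies, $(\setl_t,\setr_t)$ takes only values among the $n$ candidate pairs enumerated by the algorithm, so it suffices to exhibit one threshold with $\bipart_\graphh(\setl_t,\setr_t)\le\sqrt{2\lambda}$. The denominator is immediate: $\expected{\vol(\setu_t)}=\sum_v\deg_\graphh(v)\,\prob{t\le\abs{\vecx(v)}}=\sum_v\deg_\graphh(v)\vecx(v)^2=\norm{\vecx}_w^2$.

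For the numerator $\cale(t)$ of $\bipart_\graphh(\setl_t,\setr_t)$ I would charge each hyperedge $e$ separately. Writing $\alpha_e=\max_{u\in e}\vecx(u)$ and $\beta_e=\min_{v\in e}\vecx(v)$, so $\discrep_\vecx(e)=\alpha_e+\beta_e$, the four ways $e$ can contribute — $e\subseteq\setl_t$ (iff $\alpha_e\le -t$), $e\subseteq\setr_t$ (iff $\beta_e\ge t$), $e$ meeting $\setl_t$ and $\comp{\setu_t}$ but not $\setr_t$ (iff $\beta_e\le -t$ and $\abs{\alpha_e}<t$, the $\alpha_e$-vertex itself witnessing the leak into $\comp{\setu_t}$), and the mirror event — are each determined by $\alpha_e,\beta_e$ alone, and a short case split on the signs of $(\alpha_e,\beta_e)$ yields in every case
\[
\expected{e\text{'s contribution to }\cale(t)}\ \le\ \weight(e)\,\abs{\discrep_\vecx(e)}\,\big(\abs{\alpha_e}+\abs{\beta_e}\big).
\]
Summing over $e\in\edgeset_\graphh$ and applying Cauchy--Schwarz,
\begin{align*}
\expected{\cale(t)}&\le\Big(\sum_{e}\weight(e)\discrep_\vecx(e)^2\Big)^{1/2}\Big(\sum_{e}\weight(e)\big(\abs{\alpha_e}+\abs{\beta_e}\big)^2\Big)^{1/2}\\
&\le\sqrt{\lambda}\,\norm{\vecx}_w\cdot\sqrt{2}\,\norm{\vecx}_w\ =\ \sqrt{2\lambda}\,\norm{\vecx}_w^2\ =\ \sqrt{2\lambda}\;\expected{\vol(\setu_t)},
\end{align*}
using $\sum_{e}\weight(e)\discrep_\vecx(e)^2=\lambda\norm{\vecx}_w^2$ together with $(\abs{\alpha_e}+\abs{\beta_e})^2\le 2(\alpha_e^2+\beta_e^2)\le 2\sum_{v\in e}\vecx(v)^2$ (which is where $\rank(e)\ge 2$ enters), so that $\sum_{e}\weight(e)(\abs{\alpha_e}+\abs{\beta_e})^2\le 2\sum_{e}\weight(e)\sum_{v\in e}\vecx(v)^2=2\norm{\vecx}_w^2$. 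Hence $\expected{\cale(t)-\sqrt{2\lambda}\,\vol(\setu_t)}\le 0$; since the unit-magnitude vertex lies in $\setu_t$ for every $t\le 1$ we have $\vol(\setu_t)>0$ almost surely, so some threshold $t^\star$ — equivalently one of the $n$ sweep pairs — satisfies $\bipart_\graphh(\setl_{t^\star},\setr_{t^\star})\le\sqrt{2\lambda}$, and as $\bipart_\graphh$ is computable on each candidate in polynomial time the algorithm returns such a pair.

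\emph{Where the difficulty lies.} For Part 1, essentially all the real work is in what I imported from Section~\ref{sec:natural_assumption}: proving that the Diffusion Continuity Condition can always be met (so $\sfe,\ife$ stay fixed over an infinitesimal step), that the required edge weights of $\graphg_t$ are polynomial-time computable, and that the resulting process is well defined — with that granted, the convergence argument above is routine. For Part 2, the delicate point is the per-edge bound: one must pin down precisely which hyperedges the hypergraph bipartiteness functional charges (in particular it does \emph{not} penalise a hyperedge that meets $\setl$, $\setr$ and $\comp{\setu_t}$ simultaneously), verify that the two ``leak'' events are witnessed by the extremal vertices of $e$ alone rather than by interior vertices, and absorb the rank dependence into $\alpha_e^2+\beta_e^2\le\sum_{v\in e}\vecx(v)^2$, which is what keeps the final constant at exactly $\sqrt{2\lambda}$ regardless of the ranks present in $\graphh$.
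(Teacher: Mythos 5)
Your Part~2 is essentially the paper's argument (Theorem~\ref{thm:trev-cheeg}): same reweighted threshold (your density~$2t$ is the paper's $t\mapsto\sqrt t$ with uniform~$t$), same observation that each hyperedge's contribution to the bipartiteness numerator is governed solely by the extremal values $\alpha_e,\beta_e$, same per-edge bound $\E[\cdot]\le\weight(e)\abs{\discrep_{\vecx}(e)}(\abs{\alpha_e}+\abs{\beta_e})$, same Cauchy--Schwarz finish with $(\abs{\alpha_e}+\abs{\beta_e})^2\le 2\sum_{v\in e}\vecx(v)^2$. Your Part~1 is also the paper's high-level plan (seed with the clique-graph eigenvector, show the Rayleigh quotient is monotone, use that the quadratic form of $\signlaph$ is edge-by-edge dominated by that of the clique reduction), and the ``sandwich'' derivation of the clique bound is a nice shortcut: since $\vecf(v^\star)+\vecf(w)\le\discrep_\vecf(e)\le\vecf(u^\star)+\vecf(w)$ for every third vertex~$w$ of~$e$, one gets $\discrep_\vecf(e)^2\le(\vecf(u^\star)+\vecf(w))^2+(\vecf(w)+\vecf(v^\star))^2$, and summing over the $\rank(e)-2$ choices of~$w$ and adding the trivial $(\vecf(u^\star)+\vecf(v^\star))^2$ term yields Lemma~\ref{lem:cliqueedge} directly, avoiding the paper's induction.

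However, two claims in Part~1 are asserted without the content that actually supports them. First, the \emph{polynomial running time}: you reduce it to ``discretising with step~$\epsilon$ and bounding the number of steps until $R$ stabilises'', but establishing a quantitative rate is precisely where the work lies. Cauchy--Schwarz only tells you $\tfrac{\mathrm d}{\mathrm dt}R(\vecf_t)\le 0$ with equality iff $\degm_\graphh^{-1}\signlaph\vecf_t\in\mathrm{span}(\vecf_t)$; it does not by itself prevent the decrease from becoming arbitrarily slow while $\vecf_t$ is still far from an eigenvector. The paper closes this via Theorem~\ref{thm:convergence}: diagonalise the \emph{locally constant} operator $\signlapn_t$, decompose $\vecf_t$ in its eigenbasis, and show that whenever the spectral mass of $\vecf_t$ is not $(1-\epsilon)$-concentrated in a $2\epsilon$-window of eigenvalues, $\tfrac{\mathrm d}{\mathrm dt}R(\vecf_t)\le-\epsilon^3$; together with $0\le R\le 2$ this gives the $\bigo{1/\epsilon^3}$ bound. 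Without that spectral decomposition argument, the $\poly(\cdot,1/\epsilon)$ claim in the theorem is unproved.

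Second, the \emph{strictness} of $\lambda<\lambda_1(\signlapg)$ when $\min_e\rank(e)>2$: ``the clique reduction then strictly over-counts at $\vecf_0$'' does not follow from rank~$>2$ alone. Lemma~\ref{lem:cliqueedge} is an equality for a rank-$\ge 3$ edge~$e$ whenever at most one vertex of~$e$ has nonzero $\vecf_0$-value, so you must rule out the degenerate situation where the clique eigenvector $\vecf_0$ is supported so sparsely that \emph{every} hyperedge sees at most one nonzero coordinate. The paper does this in Lemma~\ref{lem:bettereigenvalue}: if that degeneracy held, $\vecf_0^\transpose\signlapg\vecf_0=\vecf_0^\transpose\degm_\graphg\vecf_0$ would force $\lambda_1(\degm_\graphg^{-1}\signlapg)=1$, which contradicts $\lambda_n(\degm_\graphg^{-1}\signlapg)=2$ together with $\tr(\degm_\graphg^{-1}\signlapg)=n$. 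Your proposal needs this (or some equivalent) argument; as written the strictness is just re-asserted.
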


\begin{algorithm} \SetAlgoLined
\SetKwInOut{Input}{Input}
\SetKwInOut{Output}{Output}
\Input{Hypergraph $\graphh$, starting vector $\vecf_0\in \R^n$, step size $\epsilon>0$}
\Output{Sets $\setl$ and $\setr$} 
$t := 0$ \\
\While{$\vecf_t$ has not converged}{
    Use $\vecf_t$ to construct graph $\graphg$ satisfying the diffusion continuity condition
     \\
    $\vecf_{t + \epsilon} := \vecf_t - \epsilon \degm_\graphh^{-1} \signlapg \vecf_t $\\
    $t := t + \epsilon$ 
} 
Set $j := \argmin_{1 \leq i \leq n}\bipart_\graphh(\setl_i, \setr_i)$ \\
\Return $(\setl_j, \setr_j)$
\caption{\diffalgname}\label{algo:main}
\end{algorithm}

\subsubsection{Computing the minimum eigenvector of \texorpdfstring{$\signlaph$}{J\_H} is \texorpdfstring{$\NP$}{NP}-hard} \label{sec:hyp-np-hard}
The second statement of Theorem~\ref{thm:mainalg} answers an open question posed by Yoshida~\cite{yoshidaCheegerInequalitiesSubmodular2019} by showing that our constructed hypergraph operator satisfies a Cheeger-type inequality for hypergraph bipartiteness.
However, the following theorem shows that there is no polynomial-time algorithm to compute the minimum eigenvector of \emph{any such operator} unless $\P=\NP$.
This means that the problem we consider in this work is fundamentally more difficult than the equivalent problem for graphs, as well as the problem of finding a sparse cut in a hypergraph.

\begin{theorem} \label{thm:min_eigvec_np}
     Given any operator that satisfies a Cheeger-type inequality for hypergraph bipartiteness, there's no polynomial-time algorithm that computes a multiplicative-factor approximation of the minimum eigenvalue or eigenvector unless $\P=\NP$.
\end{theorem}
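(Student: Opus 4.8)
The plan is to transfer NP-hardness from the hypergraph $2$-colourability problem (equivalently, Set Splitting, which is NP-complete even on $3$-uniform set systems) through the Cheeger-type inequality. Fix an operator $\mat{J}$ that assigns to each hypergraph $\graphh$ a (non-linear) operator $\mat{J}_\graphh$ and satisfies a two-sided Cheeger-type inequality for bipartiteness: there are functions $\alpha,\beta$ with $\alpha(0)=\beta(0)=0$ and $\alpha(x),\beta(x)>0$ for $x>0$ such that $\alpha(\lambda_{\min}(\mat{J}_\graphh)) \le \bipart_\graphh \le \beta(\lambda_{\min}(\mat{J}_\graphh))$; the hard direction with $\beta(x)=\sqrt{2x}$ is exactly the second statement of Theorem~\ref{thm:mainalg}, and the easy direction comes from turning a bipartiteness witness into a test vector of controlled Rayleigh quotient. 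In particular $\lambda_{\min}(\mat{J}_\graphh)=0$ if and only if $\bipart_\graphh=0$, so it suffices to show that deciding whether $\bipart_\graphh=0$ is NP-hard.

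One direction of this combinatorial statement is immediate: if $\graphh$ has a proper $2$-colouring with classes $\setl,\setr$ partitioning $\vertexset_\graphh$, then every term in the numerator of $\bipart_\graphh(\setl,\setr)$ vanishes (no edge lies inside $\setl$ or inside $\setr$, and $\comp{\lur}=\emptyset$), so $\bipart_\graphh=0$. The converse, however, fails for general hypergraphs: since $\bipart_\graphh$ is a \emph{minimum} over pairs of disjoint sets, a single low-degree hyperedge, or any small "bipartite-like" fragment, can certify $\bipart_\graphh=0$ even when $\graphh$ as a whole is not $2$-colourable (indeed $\bipart_\graphh(\setl,\setr)=0$ only requires that every edge \emph{meeting} $\lur$ meet both $\setl$ and $\setr$, while edges inside $\comp{\lur}$ are unconstrained). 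Hence the reduction must attach a gadget: given a $2$-colourability instance $\graphh_0$ (after a harmless normalisation removing isolated vertices, etc.), we build $\graphh$ by adding auxiliary vertices and hyperedges so that (i) the gadget is $2$-colourable in isolation and introduces no spurious zero-bipartiteness certificate, and (ii) every pair $(\setl,\setr)$ with $\bipart_\graphh(\setl,\setr)=0$ is forced to have $\lur$ equal to the entire vertex set — for instance by a vertex-doubling construction in which each original vertex $v$ is replaced by $v^+,v^-$ joined by a linking edge, so that any partial witness is broken by an original edge "poking out" of $\lur$ with exactly one endpoint inside. Verifying that such a gadget exists and that it yields the equivalence ``$\graphh_0$ is a yes-instance $\iff \bipart_\graphh=0 \iff \lambda_{\min}(\mat{J}_\graphh)=0$'' is the technically delicate part, and is where I expect the main obstacle to lie.

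Granting this, the hardness of approximation follows in one line. Suppose there were a polynomial-time algorithm outputting a multiplicative $c$-approximation $\hat\lambda$ of $\lambda_{\min}(\mat{J}_\graphh)$, i.e.\ $\lambda_{\min}(\mat{J}_\graphh)\le\hat\lambda\le c\cdot\lambda_{\min}(\mat{J}_\graphh)$; then $\hat\lambda=0$ exactly when $\lambda_{\min}(\mat{J}_\graphh)=0$, which by the above is exactly when the underlying $2$-colourability instance is a yes-instance, so the algorithm would decide an NP-complete problem and $\P=\NP$. For the eigenvector version, a multiplicative approximation returns a vector $\vecf$ with $\lambda_{\min}(\mat{J}_\graphh)\le R_{\mat{J}_\graphh}(\vecf)\le c\cdot\lambda_{\min}(\mat{J}_\graphh)$; since $R_{\mat{J}_\graphh}(\vecf)$ is computable in polynomial time (the operator is efficiently evaluable, as used in Algorithm~\ref{algo:main}), testing whether $R_{\mat{J}_\graphh}(\vecf)=0$ again decides the instance, so $\P=\NP$ in this case as well. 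This establishes the theorem.
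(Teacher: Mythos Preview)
Your reduction from hypergraph $2$-colourability through the Cheeger inequality is exactly the paper's approach, and you correctly identify the real issue: a witness $(\setl,\setr)$ to $\bipart_\graphh=0$ need not cover all of $\vertexset_\graphh$, so it does not directly hand you a $2$-colouring. However, your resolution via a gadget is left incomplete (you say as much), and the vertex-doubling sketch you give is not obviously sound: adding rank-$2$ edges $\{v^+,v^-\}$ forces $v^+$ and $v^-$ to opposite sides \emph{if both are in $\lur$}, but it does nothing to prevent both from sitting in $\comp{\lur}$, so partial certificates are not ruled out.

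The paper sidesteps the gadget entirely with a recursive argument. Given a $2$-colourable $\graphh$, we have $\lambda_{\min}=0$, so the (approximate) eigenvector has Rayleigh quotient $0$, and the sweep-set procedure of Theorem~\ref{thm:mainalg} returns $\setl',\setr'$ with $\bipart_\graphh(\setl',\setr')=0$. Colour $\setl'$ blue and $\setr'$ red: every edge touching $\setl'\cup\setr'$ is now satisfied, by the definition of $\bipart_\graphh$. The residual hypergraph on $\vertexset_\graphh\setminus(\setl'\cup\setr')$, with only the edges contained entirely in that set, is still $2$-colourable (restrict the original colouring), so recurse. Each round removes at least one vertex, so at most $n$ rounds suffice and the whole procedure is polynomial. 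This replaces your unfinished gadget with a clean inductive peel-off, and your final paragraph on why any multiplicative approximation decides the $\lambda_{\min}=0$ question then applies verbatim.
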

\begin{proof}
Our proof is based on considering the following  \textsc{Max Set Splitting} problem:  For any given hypergraph, the problem looks for a partition $\setl,\setr$ with $\setl \union \setr=\vertexset_\graphh$ and $\setl \intersect \setr=\emptyset$, such that it holds for any $e\in \edgeset_\graphh$ that  $e\cap \setl \neq\emptyset$ and $e\cap \setr \neq\emptyset$. This problem is known to be $\NP$-complete~\cite{gareyComputersIntractability1979}.
This is also referred to as \textsc{Hypergraph 2-Colorability} and we can consider the problem of coloring every vertex in the hypergraph either red or blue such that every edge contains at least one vertex of each color.

We will assume that there is some operator $\signlap$ which satisfies the Cheeger-type inequality
\begin{equation} \label{eq:cheeg_ineq_proof}
    \frac{\lambda_1(\signlap)}{2} \leq \bipart_H \leq \sqrt{2 \lambda_1(\signlap)}
\end{equation}
and that there is an algorithm which can compute the minimum eigen-pair of the operator in polynomial time.
We will show that this would allow us to solve the \textsc{Max Set Splitting} problem in polynomial time.

Given a 2-colorable hypergraph $\graphh$ with coloring $(\setl, \setr)$, we will use the eigenvector of the operator $\signlaph$ to find a valid coloring.
By definition, we have that $\bipart(\setl, \setr) = 0$ and $\gamma_1 = 0$ by \eqref{eq:cheeg_ineq_proof}.
Furthermore, by the second statement of Theorem~\ref{thm:mainalg}, we can compute disjoint sets $\setl', \setr'$ such that $\bipart(\setl', \setr') = 0$. Note that in general $\setl'$ and $\setr'$ will be different from $\setl$ and $\setr$.

Now, let $\sete' = \{e \in \edgeset_\graphh : e \intersect (\setl' \union \setr') \neq \emptyset\}$.
Then, by the definition of bipartiteness, for all $e \in \sete'$ we have $e \intersect \setl' \neq \emptyset$ and $e \intersect \setr' \neq \emptyset$.
Therefore, if we color every vertex in $\setl'$ blue and every vertex in $\setr'$ red then every $e \in \sete'$ will be satisfied. That is, they will contain at least one vertex of each color.

It remains to color the vertices in $\vertexset \setminus (\setl' \union \setr')$ such that the edges in $\edgeset_\graphh \setminus \sete'$ are satisfied.
The hypergraph $\graphh' = (\vertexset \setminus (\setl' \union \setr'), \edgeset_\graphh \setminus \sete')$ must also be 2-colorable and so we can recursively apply our algorithm until every vertex is coloured.
This algorithm will run in polynomial time since there are at most $\bigo{n}$ iterations and each iteration runs in polynomial time by our assumption.
\end{proof}

\subsection{Dealing with the  diffusion continuity condition} \label{sec:natural_assumption}
It remains for us to discuss the diffusion continuity condition, which guarantees that $\sfe$ and $\ife$ 
will not change in infinitesimal time and the diffusion process
will eventually converge to some stable distribution.
Formally, let $\vecf_t$ be the normalised measure on the vertices of $\graphh$,
and let
\begin{equation*}
\vecr \triangleq \dfdt = -\degm_\graphh^{-1} \signlaph \vecf_t
\end{equation*}
be the derivative of $\vecf_t$, which describes the rate of change   for every vertex at the current time $t$.
We write $\vecr(v)$ for any $v\in \vertexset_\graphh$ as 
$
\vecr(v) = \sum_{e\in \edgeset_\graphh} \vecr_e(v)$,  
where $\vecr_e(v)$ is the contribution of edge $e$ towards the rate of change of $\vecf_t(v)$.
Now we discuss three rules that we expect the diffusion process to satisfy, and later prove that these three rules uniquely define the rate of change $\vecr$.

First of all, as we mentioned in Section~\ref{sec:algorithm},
we expect that only the vertices in $\sfe \cup \ife$ will participate in the diffusion process,
i.e., $\vecr_e(u)=0$ unless $u\in \sfe \cup \ife$. Moreover, any vertex $u$  participating in the diffusion process must satisfy the following: 
\begin{center}
\begin{minipage}{0.99\textwidth}
\centering
\begin{itemize}\itemsep -2pt
\item Rule~(0a): if $\abs{\vecr_e(u)}>0$ and $u\in \sfe$, then  $\vecr(u) = \max_{v \in \sfe} \{\vecr(v)\}$. 
\item Rule~(0b): if $\abs{\vecr_e(u)}>0$ and $u\in \ife$, then $\vecr(u) = \min_{v \in \ife} \{\vecr(v)\}$. 
\end{itemize}
\end{minipage}
\end{center}
To explain Rule~(0), notice that for an infinitesimal time, $\vecf_t(u)$ will be increased  according to $\left(\mathrm{d}\vecf_t/\mathrm{d}t\right)(u) = \vecr(u)$. Hence, by Rule~(0) we know that, if $u\in \sfe$ (resp.\ $u \in \ife$) participates in the diffusion process in edge $e$, then in an infinitesimal time $\vecf(u)$ will
remain the maximum (resp.\ minimum) among the vertices in $e$.
Such a rule is necessary to ensure that the vertices involved in the diffusion in edge $e$ do not change in infinitesimal time, and the diffusion process is able to continue.

Our next rule states that the total rate of change of the measure due to edge $e$ is equal to $-\weight(e) \cdot \discrep_{\vecf}(e)$:
\begin{center}
\begin{minipage}{0.99\textwidth}
\centering
\begin{itemize}
\item Rule~(1): $\sum_{v\in \sfe} \deg(v) \vecr_e(v)  = \sum_{v\in \ife} \deg(v) \vecr_e(v) = -\weight(e)\cdot \discrep_{\vecf}(e)$ for all $e \in \edgeset_\graphh$.
\end{itemize}
\end{minipage}
\end{center}
 
This rule is a generalisation from the operator $\signlapg$ in graphs.
In particular, since $\degm_\graphg^{-1} \signlapg \vecf_t(u) =   \sum_{\{u, v\}\in \edgeset_\graphg} \weight_\graphg(u, v) (\vecf_t(u) + \vecf_t(v))/\deg_\graphg(u)$, the rate of change of $\vecf_t(u)$ due to the edge $\{u, v\} \in \edgeset_\graphg$ is $- \weight_\graphg(u, v) (\vecf_t(u) + \vecf_t(v))/\deg_\graphg(u)$.
Rule (1) states that in the hypergraph case the rate of change of the vertices in $\sfe$ and $\ife$ together behave like the rate of change of $u$ and $v$ in the graph case.
 
One might have expected that Rules~(0) and (1) together would define a unique process. Unfortunately, this isn't the case.
For example, let us define an unweighted hypergraph $\graphh= (\vertexset_\graphh, \edgeset_\graphh)$, where $\vertexset_\graphh=\{u,v,w\}$ and $\edgeset_\graphh = \{ \{ u,v,w\} \}$. By setting the measure to be $\vecf_t=(1,1,-2)^\transpose$ and $e=\{u,v,w\}$, we have $\discrep_{\vecf_t}(e)=-1$, and $\vecr(u) + \vecr(v) = \vecr(w)= 1$ by Rule (1). In such a scenario, either $\{u,w\}$ or $\{v,w\}$ can participate in the diffusion and satisfy Rule (0), which makes the process not uniquely defined.
To overcome this, we introduce the following stronger rule to replace Rule~(0): 
\begin{center}
\begin{minipage}{0.99\textwidth}
\centering
\begin{itemize}\itemsep -2pt
\item Rule~(2a): Assume that $\abs{\vecr_e(u)}>0$ and $u\in \sfe$. 
\begin{itemize}
    \item If $\discrep_{\vecf}(e)>0$, then $\vecr(u) = \max_{v\in \sfe) }\{\vecr(v) \}$;
    \item If $\discrep_{\vecf}(e)<0$, then $\vecr(u)= \vecr(v)$ for all $v\in \sfe$.
\end{itemize}
\item Rule~(2b): Assume that $\abs{\vecr_e(u)}>0$ and $u\in \ife$:
\begin{itemize}
    \item If $\discrep_{\vecf}(e)<0$, then $\vecr(u) = \min_{v\in \ife)} \{\vecr(v)\}$;
    \item If $\discrep_{\vecf}(e)>0$, then $\vecr(u)= \vecr(v)$ for all $v\in \ife$.
\end{itemize}
\end{itemize}
\end{minipage}
\end{center}
Notice that the first conditions of Rules~(2a) and (2b) correspond to Rules (0a) and (0b) respectively; the second conditions are introduced for purely technical reasons: they state that, if the discrepancy of $e$ is negative  (resp.\ positive), then all the vertices $u\in \sfe$ (resp.\ $u \in \ife$) will have the same value of $\vecr(u)$.
Lemma~\ref{lem:rulesimplydiffusion} shows that there is a unique $\vecr\in\R^n$ that satisfies Rules~(1) and (2), and $\vecr$ can be computed in polynomial time. 
Therefore, our two rules uniquely define a diffusion process, and we can use the computed $\vecr$ to simulate the continuous diffusion process with a discretised version.\footnote{Note that the graph $\graphg$ used for the diffusion at time $t$ can be easily computed from the $\{\vecr_e(v)\}$ values, although in practice this is not actually needed since the $\vecr(u)$ values can be used to update the diffusion directly.}

\begin{lemma}
     \label{lem:rulesimplydiffusion}
    For any given $\vecf_t\in\R^n$, there is a unique $\vecr=\mathrm{d}\vecf_t/\mathrm{d}t$ and associated $\{\vecr_e(v)\}_{e\in \edgeset, v\in \vertexset}$ that satisfy  Rule~(1) and (2), and  $\vecr$ can be computed in polynomial time by linear programming. 
\end{lemma}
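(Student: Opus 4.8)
The plan is to set up a linear program whose feasible region encodes Rules~(1) and (2), and then argue that this program has a unique optimal solution which is precisely the desired $\vecr$. First I would introduce variables $\vecr_e(v)$ for every edge $e\in\edgeset_\graphh$ and every $v\in\sfe\cup\ife$ (all other $\vecr_e(v)$ are fixed to $0$), together with the induced variables $\vecr(v)=\sum_{e}\vecr_e(v)$. Rule~(1) is already a set of linear equality constraints. The subtle part is Rule~(2): the ``equal value'' conditions (second bullets of (2a), (2b)) are linear equalities $\vecr(u)=\vecr(v)$; the ``extremal value'' conditions (first bullets, i.e.\ Rule~(0)) are not linear on their face, because they say that any participating vertex attains the max (resp.\ min) of $\vecr$ over $\sfe$ (resp.\ $\ife$). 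To linearise, I would observe that the sign of $\discrep_{\vecf}(e)$ is determined by $\vecf_t$ alone, so we know in advance, for each edge, which of the two cases applies; hence we may split $\sfe$ and $\ife$ into a ``flat side'' and a ``potentially non-flat side.'' On the flat side we simply impose equality among all vertices. On the non-flat side, we want the participating vertices to be exactly those achieving the extreme value of $\vecr$; the standard trick (as in the analogous arguments in \cite{chanSpectralPropertiesHypergraph2018, yoshidaCheegerInequalitiesSubmodular2019}) is to pick $\vecr$ so as to minimise $\sum_{v\in\vertexset_\graphh}\deg_\graphh(v)\,\vecr(v)^2$ — equivalently, minimise $\norm{\vecr}_w^2$ — subject to Rule~(1) and the flat-side equalities, and then verify that the minimiser automatically satisfies Rule~(0)/(2).

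The key steps, in order, would be: (i) fix the combinatorial data ($\sfe$, $\ife$, and $\sign(\discrep_{\vecf}(e))$ for all $e$), which depend only on $\vecf_t$; (ii) write the convex quadratic program ``minimise $\norm{\vecr}_w^2$ subject to Rule~(1) and the flat-side equalities of Rule~(2)''; (iii) show the feasible region is nonempty (e.g.\ distribute each edge's load uniformly over $\sfe$ and over $\ife$), so a minimiser exists; (iv) show the minimiser is unique, since $\norm{\cdot}_w^2$ is strictly convex on $\R^n$ in the $\vecr(v)$ coordinates and the map $\{\vecr_e(v)\}\mapsto\{\vecr(v)\}$ together with strict convexity pins down both the vertex values and, via a further tie-break or via the structure of the optimality conditions, the edge contributions; (v) verify that at the optimum the KKT/variational conditions force Rule~(0): if some participating $u\in\sfe$ had $\vecr(u)$ strictly below $\max_{v\in\sfe}\vecr(v)$, one could shift an infinitesimal amount of $e$'s load from the argmax vertex to $u$, strictly decreasing $\norm{\vecr}_w^2$ while preserving feasibility — a contradiction; (vi) conclude that this QP can be solved in polynomial time, and note that after the reduction to knowing the sign of each discrepancy it is in fact a linear feasibility problem with a convex quadratic objective, solvable by the LP-type methods referenced in \cite{cohenSolvingLinearPrograms2021, vandenbrandDeterministicLinearProgram2020} (or any polynomial-time convex QP solver), giving the claimed running time.

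The main obstacle I anticipate is step~(v) together with the uniqueness of the edge-level variables $\{\vecr_e(v)\}$, not merely of the vertex-level vector $\vecr$. The minimum-norm solution is unique in the $\vecr$ coordinates by strict convexity, and Rule~(0) should follow from the exchange argument above; but the lemma also asserts uniqueness of the associated $\{\vecr_e(v)\}$, and these are genuinely underdetermined by $\norm{\vecr}_w^2$ alone whenever a vertex lies in $\sfe\cap\ife$ for several edges or when several vertices tie on the flat side. I expect the resolution is that Rules~(1) and (2) pin these down once one also demands that each $\vecr_e$ be ``as balanced as possible'' within its edge — concretely, the second bullets of Rule~(2) force all flat-side vertices of $e$ to share the value $-\weight(e)\discrep_{\vecf}(e)/\vol(\cdot)$, and on the non-flat side Rule~(0) forces all \emph{participating} vertices to share the common extreme value, so the per-edge contributions are determined up to the choice of which extreme vertices participate; one then shows that this choice is itself forced by global consistency of the $\vecr(v)$ values. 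Making this last consistency argument fully rigorous — showing the system of ``who participates'' decisions has a unique solution — is where the real work lies, and I would handle it by phrasing everything as a single LP whose uniqueness follows from strict convexity of a suitably augmented (lexicographic or Tikhonov-regularised) objective.
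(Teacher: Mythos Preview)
Your route is genuinely different from the paper's, and there is a real gap in it. The paper does not set up a single optimisation problem over all of $\vecr$; instead it works within each equivalence class $\setu=\{v:\vecf_t(v)=c\}$ and \emph{recursively} solves a linear program of the densest-subset type: find $\setp\subseteq\setu$ maximising a ratio $\delta(\setp)=C(\setp)/\vol(\setp)$, assign $\vecr(u)=\delta(\setp)$ for all $u\in\setp$, remove $\setp$, and repeat on $\setu\setminus\setp$. Uniqueness of $\vecr$ is then proved directly by showing that, for \emph{any} $\vecr$ obeying Rules~(1) and~(2), the set $\set{T}=\{u\in\setu:\vecr(u)=\max_{v\in\setu}\vecr(v)\}$ is feasible for the LP with value $\delta(\set{T})$, so $\set{T}$ must coincide with the LP's maximal optimal set $\setp$ and $\vecr|_{\set{T}}=\delta(\setp)$ is forced; one then recurses down the levels. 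Existence of compatible $\{\vecr_e(v)\}$ is handled by a separate max-flow argument on an auxiliary graph.

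The gap in your plan is the uniqueness direction. Steps~(iii)--(v) establish that the min-norm point satisfies Rule~(0), i.e.\ \emph{existence} of a rule-compliant $\vecr$. But step~(iv) gives uniqueness of the \emph{QP minimiser}, not uniqueness among all $\vecr$ satisfying Rules~(1)+(2); you never argue the converse of~(v), namely that every rule-compliant $\vecr$ must be the min-norm one. Concretely, on a non-flat side $\sfe$ with $\discrep_{\vecf}(e)>0$, Rule~(2a) is perfectly happy to let a strict subset of $\sfe$ participate and attain $\max_{v\in\sfe}\vecr(v)$ while the remaining vertices of $\sfe$ sit strictly below; your exchange argument only says such a configuration is not the minimiser, not that it violates the rules. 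The paper's peeling argument is exactly what closes this direction, and nothing in your proposal replaces it. Two smaller points: your extended worry about uniqueness of the edge-level values $\{\vecr_e(v)\}$ is misplaced, since the paper's own max-flow construction does not yield unique flows --- the lemma should be read as asserting uniqueness of $\vecr$ together with the \emph{existence} of some associated $\{\vecr_e(v)\}$; and what you describe is a convex QP rather than an LP, so the lemma's ``by linear programming'' is not literally matched by your route.
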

 
\begin{remark}
The rules we define and the proof of Lemma~\ref{lem:rulesimplydiffusion} are more involved than those used in~\cite{chanSpectralPropertiesHypergraph2018} to define the hypergraph Laplacian operator.
In particular, in contrast to~\cite{chanSpectralPropertiesHypergraph2018}, in our case the discrepancy $\discrep_{\vecf}(e)$ within a hyperedge $e$ can be either positive or negative.
This results in the four different cases in Rule~(2) which must be carefully considered throughout the proof of Lemma~\ref{lem:rulesimplydiffusion}.
\end{remark}

\subsection{Proof of Lemma~\ref{lem:rulesimplydiffusion}} \label{sec:thm2proof}
In this section we prove Lemma~\ref{lem:rulesimplydiffusion}.
First, we will construct a linear program which can compute the rate of change $\vecr$ satisfying the rules of the diffusion process.
We then give a complete analysis of the new linear program which establishes Lemma~\ref{lem:rulesimplydiffusion}.

\subsubsection{Computing \texorpdfstring{$\vecr$}{r} by a linear program} Now we present an algorithm that computes the vector $\vecr=\mathrm{d}\vecf_t/\mathrm{d}t$ for any $\vecf_t\in\R^n$. 
Without loss of generality, let us fix $\vecf_t\in\R^n$, and define a set of equivalence classes $\mathcal{U}$ on $\vertexset$ such that vertices $u,v\in \vertexset_\graphh$ are in the same equivalence class if $\vecf_t(u)=\vecf_t(v)$. 
Next we study every equivalence class $\setu \in\mathcal{U}$ in turn, and will set the $\vecr$-value of the vertices in $\setu$ recursively.
 In each iteration, we fix the $\vecr$-value of some subset $\setp \subseteq \setu$ and recurse on $\setu \setminus \setp$.
As we'll prove later, it's important to highlight that the recursive procedure ensures that the $\vecr$-values assigned to the vertices always \emph{decrease} after each recursion.
Notice that it suffices to consider the edges $e$ in which $(\sfte \cup \ifte)\cap \setu \neq\emptyset$, since the diffusion process induced by other edges $e$ will have no impact on   $\vecr(u)$ for any $u\in \setu$. Hence, we introduce the sets
\[
\mathcal{S}_{\setu} \triangleq \{e\in \edgeset_\setu: \sfte \intersect \setu \neq\emptyset \}, \qquad \mathcal{I}_{\setu} \triangleq \{e\in \edgeset_\setu: \ifte \intersect \setu \neq\emptyset\},
\]
where $\edgeset_\setu$ consists of the edges adjacent to some vertex in $\setu$. To work with the four cases listed in Rule~(2a) and (2b), we define
\begin{align}
\mathcal{S}_{\setu}^+& \triangleq \{ e\in \cals_\setu: \discrep_{\vecf_t}(e)<0\},\nonumber\\
\cals_{\setu}^- & \triangleq \{ e\in\cals_\setu: \discrep_{\vecf_t}(e)>0\},\nonumber\\
\cali_\setu^+ & \triangleq \{ e\in \cali_\setu: \discrep_{\vecf_t}(e)<0\},\nonumber\\
\cali_\setu^- & \triangleq \{ e\in \cali_\setu: \discrep_{\vecf_t}(e)>0\}.\nonumber
\end{align}
Our objective is to find some  $\setp \subseteq \setu$ and assign the same $\vecr$-value to every vertex in $\setp$.
To this end, for any $\setp \subseteq \setu$ we define
\newcommand{\spp}[2]{\cals_{\set{#1}, \set{#2}}^+}
\newcommand{\spm}[2]{\cals_{\set{#1}, \set{#2}}^-}
\newcommand{\ipp}[2]{\cali_{\set{#1}, \set{#2}}^+}
\newcommand{\ipm}[2]{\cali_{\set{#1}, \set{#2}}^-}
\begin{align*}
\spp{U}{P} & \triangleq \left\{e\in\cals_{\setu}^+: \sfte \subseteq \setp \right\},\\
\ipp{U}{P} & \triangleq 
\left\{e\in\cali_{\setu}^+: \ifte \subseteq \setp \right\}, \\
\spm{U}{P} & \triangleq \left\{ e\in \cals_{\setu}^-: \sfte \cap \setp \neq\emptyset \right\},\\
\ipm{U}{P} &\triangleq \{e \in \cali_{\setu}^-: \ifte \subseteq \setp \}.
\end{align*}
These are the edges which will contribute to the rate of change of the vertices in $\setp$.
Before continuing the analysis, we briefly explain the intuition behind these four definitions:
\begin{enumerate}
    \item For $\spp{U}{P}$, since every $e\in \cals_{\setu}^+$ satisfies $\discrep_{\vecf_t}(e)<0$ and all the vertices in $\sfte$ must have the same value by Rule~(2a), all such $e\in \spp{U}{P}$ must satisfy that $\sfte \subseteq \setp$, since the unassigned vertices will receive lower values of $\vecr$ in the remaining part of the recursion process.
    \item For $\ipp{U}{P}$, since every $e\in\cali_{\setu}^+$ satisfies $\discrep_{\vecf_t}(e)<0$, Rule (2b) implies that if $\vecr_e(u) \neq 0$ then $\vecr(u)\leq \vecr(v)$ for all $v\in \ifte$. Since unassigned vertices will receive lower values of $\vecr$ later, such $e\in\ipp{U}{P}$ must satisfy $\ifte \subseteq \setp$.
    \item For $\spm{U}{P}$,  since every $e\in\cals_{\setu}^-$ satisfies $\discrep_{\vecf_t}(e)>0$, by Rule~(2a) it suffices that some vertex in $\sfte$ receives the assignment in the current iteration, i.e., every such $e$ must satisfy $\sfte \cap \setp \neq\emptyset$.
    \item The case for $\ipm{U}{P}$ is the same as $\spp{U}{P}$. 
\end{enumerate}

As we expect all the vertices $u\in \sfte$ to have the same $\vecr$-value for every $e$ as long as $\discrep_{\vecf_t}(e)<0$ by Rule~(2a) and at the moment we are only considering the assignment of the vertices in $\setp$,  we expect that 
\begin{equation}\label{eq:condition1}
\left\{ e\in \cals_{\setu}^+\setminus \spp{U}{P}: \sfte \cap \setp \neq \emptyset \right\} =\emptyset,
\end{equation}
and this will ensure that, as long as $\discrep_{\vecf_t}(e)<0$ and some $u\in \sfte$ gets its $\vecr$-value, then all the other vertices in $\sfte$ would be assigned the same value as $u$.
Similarly, by Rule~(2b), we expect all the vertices $u\in \ifte$ to have the same $\vecr$-value for every $e$ as long as $\discrep_{\vecf_t}(e)>0$, and so we expect that
\begin{equation}\label{eq:condition2}
\left\{ 
e\in \cali_{\setu}^- \setminus \ipm{U}{P}: \ifte \cap \setp \neq\emptyset
\right\} =\emptyset.
\end{equation}
We will set the $\vecr$-value by dividing the total discrepancy of the edges in $\ipp{U}{P} \union \spp{U}{P} \union \ipm{U}{P} \union \spm{U}{P}$ between the vertices in $\setp$. As such, 
we would like to find some $\setp \subseteq \setu$ that maximises the value of 
\[
\frac{1}{\vol(\setp)}\cdot \left(\sum_{e\in \spp{U}{P}\cup\ipp{U}{P}} c_{\vecf_t}(e)  - \sum_{e\in \spm{U}{P} \cup \ipm{U}{P}} c_{\vecf_t}(e)\right).
\]

Taking all of these requirements into account, we will show that, for any equivalence class $\setu$, we can find the desired set $\setp$ by solving the following linear program:
\begin{alignat}{2}
   & \text{maximise } & & c(\vecx) = \sum_{e \in \cals_{\setu}^+\cup \cali_{\setu}^+ } c_{\vecf_t}(e)\cdot  x_e  - \sum_{e \in \cals_{\setu}^-\cup\cali_{\setu}^-} c_{\vecf_t}(e)\cdot x_e \label{eq:lp} \\
   & \text{subject to }& \quad & \sum_{v \in \setu}
   \begin{aligned}[t]
                \deg(v) y_v & = 1 \\[3ex]
                x_e & = y_u & \quad e & \in \cals_{\setu}^+, u \in \sfte, \\
                x_e & \leq y_u & \quad e & \in \cali_{\setu}^+, u \in \ifte, \\
                x_e & \geq y_u & \quad e & \in \cals_{\setu}^-, u \in \sfte, \\
                x_e & = y_u & \quad e & \in \cali_{\setu}^-, u \in \ifte, \\
                x_e, y_v & \geq 0 & & \forall  v\in \setu,  e\in \edgeset_{\setu}.
   \end{aligned}\nonumber
\end{alignat}
Since the linear program only gives partial assignment to the vertices' $\vecr$-values, we solve the same linear program on the reduced instance given by the set $\setu \setminus \setp$. The formal description of our algorithm is given in Algorithm~\ref{algo:computechangerate}.
\begin{algorithm} \SetAlgoLined
\SetKwInOut{Input}{Input}
\SetKwInOut{Output}{Output}
\Input{vertex set $\setu \subseteq \vertexset$, and edge set $\edgeset_{\setu}$ }
\Output{Values of $\{\vecr(v)\}_{v\in \setu}$}
Construct sets $\cals_{\setu}^+$, $\cals_{\setu}^-$, $\cali_{\setu}^+$, and $\cali_{\setu}^-$\\
Solve the linear program defined by \eqref{eq:lp}, and define $\setp :=\{v\in \setu: y(v) >0\}$\\
Construct sets $\spp{U}{P}$, $\spm{U}{P}$, $\ipp{U}{P}$, and $\ipm{U}{P}$\\
Set $C(\setp) := c_{\vecf_t}\left( \spp{U}{P} \right) + c_{\vecf_t} \left(\ipp{U}{P} \right) - c_{\vecf_t} \left(\spm{U}{P}  \right) - c_{\vecf_t} \left( \ipm{U}{P} \right)$\\
Set $\delta(\setp): = C(\setp) / \vol(\setp)$\\
Set $\vecr(u) :=\delta(\setp)$ for every $u\in \setp$\\
 \textsc{ComputeChangeRate}$\left(\setu \setminus \setp, \edgeset_\setu \setminus \left(\spp{U}{P} \cup \ipp{U}{P} \cup \spm{U}{P} \cup \ipm{U}{P} \right)\right)$
 \caption{\textsc{ComputeChangeRate}$(\setu , \edgeset_\setu)$}\label{algo:computechangerate}
\end{algorithm}

\subsubsection{Analysis of the linear program}
Now we analyse Algorithm~\ref{algo:computechangerate}, and the properties of the $\vecr$-values it computes.
 Specifically, we will show the following facts which will together allow us to establish Lemma~\ref{lem:rulesimplydiffusion}.
\begin{enumerate}
    \item Algorithm~\ref{algo:computechangerate} always produces a unique vector $\vecr$, no matter which optimal result is returned when computing the linear program~\eqref{eq:lp}.
    \item If there is any vector $\vecr$ which is consistent with Rules (1) and (2), then it must be equal to the output of Algorithm~\ref{algo:computechangerate}.
    \item The vector $\vecr$ produced by Algorithm~\ref{algo:computechangerate} is consistent with Rules (1) and (2).
\end{enumerate}
 
\paragraph{The output of Algorithm~\ref{algo:computechangerate} is unique.}
First of all, for any $\setp \subseteq \setu$ that satisfies \eqref{eq:condition1} and \eqref{eq:condition2}, we define vectors $\vecx_\setp$ and $\vecy_\setp$ by
\[
    \vecx_\setp(e) = \twopartdefow{\frac{1}{\vol(\setp)}}{e \in \spp{U}{P}  \union \ipp{U}{P} \union \spm{U}{P} \union \ipm{U}{P}}{0},
\]
\[
    \vecy_\setp(v) = \twopartdefow{\frac{1}{\vol(\setp)}}{v \in \setp}{0},
\]
and $z_\setp=\left(\vecx_\setp, \vecy_\setp \right)$.
It is easy to verify that $\left(\vecx_\setp, \vecy_\setp \right)$ is a feasible solution to \eqref{eq:lp} with the objective value $c\left(\vecx_\setp \right)=\delta(\setp)$. We will prove that \textsc{ComputeChangeRate}~(Algorithm~\ref{algo:computechangerate}) computes a unique vector $\vecr$ regardless of how ties are broken when computing the subsets $\setp$.

For any feasible solution $z=(\vecx,\vecy)$, we say that a non-empty set $\set{Q}$ is a \emph{level set} of $z$ if there is some $t>0$ such that $\set{Q}=\left\{u\in \setu: y_u\geq t\right\}$. We'll first show that any non-empty level set of an optimal solution $z$ also corresponds to an optimal solution.

\begin{lemma}
    Suppose that $z^{\star}= \left(\vecx^{\star}, \vecy^{\star} \right)$ is an optimal solution of the linear program \eqref{eq:lp}. Then, any non-empty level set $\set{Q}$ of $z^{\star}$ corresponds to an optimal solution of \eqref{eq:lp} as well.
\end{lemma}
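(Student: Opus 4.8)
I want to show that level sets of an optimal solution are themselves optimal. The key is a standard averaging/threshold argument: an optimal solution $z^\star = (\vecx^\star, \vecy^\star)$ can be decomposed along its level sets, and since the objective is linear and the normalization constraint is linear, at least one of those level sets must achieve the optimum; then I argue that in fact \emph{every} nonempty level set does.

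First I would set up the decomposition. Order the distinct positive values taken by $\vecy^\star$ as $t_1 > t_2 > \dots > t_k > 0$, and let $\set{Q}_i = \{u \in \setu : y^\star_u \geq t_i\}$ be the corresponding nested family of level sets, so $\set{Q}_1 \subseteq \set{Q}_2 \subseteq \dots \subseteq \set{Q}_k$. Writing $t_{k+1} = 0$, we have the ``layer cake'' identity $\vecy^\star = \sum_{i=1}^{k} (t_i - t_{i+1}) \cdot \constvec_{\set{Q}_i}$ where $\constvec_{\set{Q}_i}$ is the indicator vector of $\set{Q}_i$. The main structural point I need is that each $\set{Q}_i$ satisfies conditions \eqref{eq:condition1} and \eqref{eq:condition2}, which follows directly from the linear program's constraints: the equality constraints $x_e = y_u$ for $e \in \cals_\setu^+$ (and for $e \in \cali_\setu^-$) force all vertices of $\sfte$ (resp.\ $\ifte$) to the same $y^\star$-value, so $\sfte$ is either contained in $\set{Q}_i$ or disjoint from it; the inequality constraints handle $\cali_\setu^+$ and $\cals_\setu^-$ similarly. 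Consequently the feasible solutions $z_{\set{Q}_i} = (\vecx_{\set{Q}_i}, \vecy_{\set{Q}_i})$ from the ``output is unique'' paragraph are well-defined, with objective value $\delta(\set{Q}_i)$.

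Next I would relate the objective. Using the layer-cake decomposition together with the observation that, for the optimal $z^\star$, the constraints are tight in the direction that matters (i.e.\ $x^\star_e$ equals the ``active'' bound so that each edge's contribution is correctly accounted against the appropriate level set), one gets that $c(\vecx^\star)$ is a weighted average of the $\delta(\set{Q}_i)$ — more precisely, after renormalizing, $\sum_i (t_i - t_{i+1})\vol(\set{Q}_i)\,\delta(\set{Q}_i) = c(\vecx^\star)\sum_i (t_i-t_{i+1})\vol(\set{Q}_i) = c(\vecx^\star)$, using $\sum_v \deg(v)y^\star_v = 1$. Since $c(\vecx^\star)$ is the maximum possible objective value and each $\delta(\set{Q}_i) = c(\vecx_{\set{Q}_i}) \leq c(\vecx^\star)$ by optimality, a weighted average equalling the maximum forces every term to equal the maximum: $\delta(\set{Q}_i) = c(\vecx^\star)$ for all $i$. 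Hence each $z_{\set{Q}_i}$ is optimal, and since an arbitrary nonempty level set $\set{Q}$ of $z^\star$ is one of the $\set{Q}_i$, this proves the lemma.

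\textbf{Main obstacle.} The delicate part is the edge-accounting in the averaging step: for the inequality constraints ($e \in \cali_\setu^+$ with $x_e \leq y_u$, and $e \in \cals_\setu^-$ with $x_e \geq y_u$), it is not immediate that at an optimal solution $x^\star_e$ takes the extreme value making the layer-cake bookkeeping exact. I expect one must argue this separately: since $c_{\vecf_t}(e) \geq 0$, the objective \emph{rewards} making $x_e$ large for $e \in \cali_\setu^+$ and small for $e \in \cals_\setu^-$, so any optimal solution can be assumed (or perturbed) to have $x^\star_e = \min_{u \in \ifte} y^\star_u$ and $x^\star_e = \max_{u \in \sfte} y^\star_u$ respectively — and with this normalization each edge contributes to exactly the level sets that contain its defining vertex set, matching the definitions of $\spm{U}{P}$, $\ipp{U}{P}$, etc. Once this normalization is in hand the averaging identity is a routine computation, so I would isolate it as the one claim needing care.
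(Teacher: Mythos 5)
Your argument is the same convexity-and-averaging idea as the paper's, just organized as a one-shot layer cake rather than as an inductive peeling, and the gap you flag at the end is exactly the right thing to worry about. The paper takes $\alpha = \min\{y^\star_v : v \in \setp\}$ and writes $z^\star = (1-\alpha\vol(\setp))\widehat{z} + \alpha\vol(\setp)\,z_\setp$, where $\widehat{z}$ subtracts $\alpha$ from both $\vecy^\star$ \emph{and} $\vecx^\star$ and then renormalizes; because the decomposition keeps the edge variables attached to the vertex variables, feasibility of $\widehat{z}$ and linearity of the objective under the split are immediate, and recursing on $\widehat{z}$ (which has one fewer distinct level) closes the induction. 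You instead decompose only $\vecy^\star$ and then reconstitute the edge contributions from the level sets $\set{Q}_i$, which is precisely why you must check that $x^\star_e$ sits at its extreme value on the inequality-constrained edges. That step is actually forced, not merely a harmless WLOG: for $e \in \cali_\setu^+$ (resp.\ $e \in \cals_\setu^-$) the objective coefficient $c_{\vecf_t}(e) = \weight(e)\,\abs{\discrep_{\vecf_t}(e)}$ is strictly positive by the definitions of those sets, so any maximizer must have $x^\star_e = \min_{u \in \ifte} y^\star_u$ (resp.\ $x^\star_e = \max_{u \in \sfte} y^\star_u$). Once that is made explicit, your identity $\sum_i (t_i - t_{i+1})\,C(\set{Q}_i) = c(\vecx^\star)$ holds and the averaging argument finishes the proof as you describe; the paper's recursive decomposition buys a shorter write-up by never separating the edge variables from $z^\star$ in the first place.
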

\begin{proof}
Let $\setp = \{v \in \vertexset_\graphh : \vecy^\star(v) > 0 \}$.
The proof is by case distinction. We first look at the case in which all the vertices in $\setp$ have the same value of  $\vecy^{\star}(v)$ for any $v\in \setp$.
Then, it must be that $z^\star = z_\setp$ and
every non-empty level set $\set{Q}$ of $z^{\star}$ equals to $\setp$, and so the statement holds trivially.

Secondly, we assume that the vertices in $\setp$ have at least two different $\vecy^{\star}$-values. We define $\alpha=\min\left\{ y_v^{\star}: v\in \setp \right\}$, and have
\[
\alpha\cdot\vol(\setp) < \sum_{u\in \setu} \deg(u)\cdot y_u^{\star}=1.
\]
We introduce $\widehat{z}=\left( \widehat{\vecx}, \widehat{\vecy} \right)$ defined by
 \begin{align*}
        \widehat{\vecx}(e) & = \twopartdefow{\frac{\vecx^{\star}(e) - \alpha}{1 - \alpha \vol(\setp)}}{\vecx^{\star}(e) \geq 0}{0}, 
\end{align*}
and
\begin{align*}
        \widehat{\vecy}(v) & = \twopartdefow{\frac{\vecy^{\star}(v) - \alpha}{1 - \alpha \vol(\setp)}}{v \in \setp}{0},
    \end{align*}
which implies that 
\begin{equation}\label{eq:linearcombination1}
\vecx^{\star} = \left( 1-\alpha\vol(\setp) \right)\widehat{\vecx} + \alpha\cdot \constvec_\setp = \left( 1-\alpha\vol(\setp) \right)\widehat{\vecx} + \alpha\cdot \vol(\setp)\cdot \vecx_\setp,
\end{equation}
and
\begin{equation} \label{eq:linearcombination2}
\vecy^{\star} = \left( 1-\alpha\vol(\setp) \right)\widehat{\vecy} + \alpha\cdot \constvec_\setp = \left( 1-\alpha\vol(\setp) \right)\widehat{\vecy} + \alpha\cdot \vol(\setp)\cdot \vecy_\setp,
\end{equation}
where $\constvec_\setp$ is the indicator vector of the set $\setp$.
Notice that 
 $\widehat{z}$ preserves the relative ordering of the vertices and edges with respect to $\vecx^{\star}$ and $\vecy^{\star}$, and all the constraints in \eqref{eq:lp} hold for $\widehat{z}$. These imply that $\widehat{z}$ is a feasible solution to \eqref{eq:lp} as well. Moreover, it's not difficult to see that $\widehat{z}$ is an optimal solution of \eqref{eq:lp}, since otherwise by the linearity of \eqref{eq:linearcombination1} and \eqref{eq:linearcombination2}, $z_\setp$ would have a higher objective value than $z^{\star}$, contradicting the fact that $z^{\star}$ is an optimal solution. Hence, the non-empty level set  defined by $\widehat{z}$ corresponds to an optimal solution.
Finally, by applying the second case inductively, we prove the claimed statement of the lemma.
\end{proof}

By applying the lemma above and the linearity of the objective function of \eqref{eq:lp}, we obtain the following corollary.
\begin{corollary} \label{cor:lpmaximal}
    The following statements hold:
    \begin{itemize}
        \item Suppose that $\setp_1$ and $\setp_2$ are optimal subsets of $\setu$. Then, $\setp_1\cup \setp_2$, as well as $\setp_1\cap \setp_2$ satisfying $\setp_1\cap \setp_2\neq\emptyset$,  is  an optimal subset of $\setu$. 
        \item The optimal set of maximum size is unique, and contains all optimal subsets.
    \end{itemize}
\end{corollary}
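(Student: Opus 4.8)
The plan is to deduce both bullets from the preceding lemma, using only that the feasible region of \eqref{eq:lp} is convex and its objective $c(\cdot)$ is linear. Throughout, I would keep in mind that a set $\setp \subseteq \setu$ is an \emph{optimal subset} exactly when the associated point $z_\setp = (\vecx_\setp, \vecy_\setp)$ is an optimal solution of \eqref{eq:lp}, and that by construction $\vecy_\setp$ is supported precisely on $\setp$, taking the constant value $1/\vol(\setp)$ there.

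For the first bullet, I would start from two optimal subsets $\setp_1, \setp_2$ and form the midpoint $z \triangleq \tfrac12 z_{\setp_1} + \tfrac12 z_{\setp_2}$. Since every constraint of \eqref{eq:lp} is affine, $z$ is feasible, and since $c$ is linear with $c(\vecx_{\setp_1}) = c(\vecx_{\setp_2})$ equal to the optimum, $z$ is itself optimal. Its $\vecy$-coordinate is positive at a vertex $v$ iff $v \in \setp_1 \cup \setp_2$, so $\setp_1 \cup \setp_2$ is a non-empty level set of the optimal solution $z$; the preceding lemma then yields that $\setp_1 \cup \setp_2$ corresponds to an optimal solution, i.e.\ is an optimal subset. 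If in addition $\setp_1 \cap \setp_2 \neq \emptyset$, I would observe that on $\setp_1 \cap \setp_2$ the $\vecy$-coordinate of $z$ equals $\tfrac{1}{2\vol(\setp_1)} + \tfrac{1}{2\vol(\setp_2)}$, which is strictly larger than its value ($\tfrac{1}{2\vol(\setp_1)}$, resp.\ $\tfrac{1}{2\vol(\setp_2)}$) on $\setp_1 \setminus \setp_2$ and on $\setp_2 \setminus \setp_1$; hence $\setp_1 \cap \setp_2 = \{ v : y_v \geq \tfrac{1}{2\vol(\setp_1)} + \tfrac{1}{2\vol(\setp_2)}\}$ is again a non-empty level set of $z$, so it too is an optimal subset by the lemma.

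For the second bullet, I would use that $\setu$ is finite, hence there are only finitely many optimal subsets $\setp_1, \dots, \setp_k$. Iterating the first bullet (the union of two optimal subsets is an optimal subset), the set $\setp^\star \triangleq \setp_1 \cup \dots \cup \setp_k$ — which is exactly the union of all optimal subsets — is itself an optimal subset. By construction $\setp^\star$ contains every optimal subset, so it has the maximum size among them; and any optimal subset $\setp'$ with $\abs{\setp'} = \abs{\setp^\star}$ satisfies $\setp' \subseteq \setp^\star$ and therefore $\setp' = \setp^\star$, which gives uniqueness.

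The only step needing genuine care is the level-set identification in the first bullet — checking that on the midpoint solution the $\vecy$-value on $\setp_1 \cap \setp_2$ strictly dominates its value on the symmetric difference, which is exactly where the ``constant on its support'' structure of the points $z_\setp$ enters. Everything else (feasibility of convex combinations, invariance of the optimal value under them, and the finite induction for the second bullet) is routine bookkeeping, and I do not expect a substantive obstacle.
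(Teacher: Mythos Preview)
Your proposal is correct and follows exactly the approach the paper intends: the paper gives no detailed proof, only the one-line remark that the corollary follows from the preceding lemma together with linearity of the objective, and your argument is precisely the natural unpacking of that hint. The only thing the paper leaves implicit that you make explicit is the midpoint construction $z = \tfrac12 z_{\setp_1} + \tfrac12 z_{\setp_2}$ and the identification of $\setp_1 \cup \setp_2$ and $\setp_1 \cap \setp_2$ as its level sets, which is straightforward.
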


Now we are ready to show that the procedure \textsc{ComputeChangeRate}~(Algorithm~\ref{algo:computechangerate}) and the linear program \eqref{eq:lp} together will always give us the same set of $\vecr$-values regardless of which optimal solution of \eqref{eq:lp} is used for the recursive construction of the entire vector $\vecr$.

\begin{lemma}
    Let $(\setu, \edgeset_\setu)$ be the input to \textsc{ComputeChangeRate}, and $\setp \subset \setu$ be the set returned by \eqref{eq:lp}. Moreover, let $\left(\setu'=\setu \setminus \setp, \edgeset_{\setu'}\right)$ be the input to the recursive call \textsc{ComputeChangeRate}$(\setu', \edgeset_{\setu'})$. Then, it holds for any $\setp' \subseteq \setu'$ that $\delta(\setp')\leq \delta(\setp)$, where the equality holds iff $\delta(\setp \cup \setp') = \delta(\setp)$.
\end{lemma}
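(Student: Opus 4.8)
The plan is to reduce the statement to two ingredients: an \emph{additivity} property of the functional
\[
C_{\set{W}}(\set{Q}) \;\triangleq\; c_{\vecf_t}\!\left(\spp{W}{Q}\right) + c_{\vecf_t}\!\left(\ipp{W}{Q}\right) - c_{\vecf_t}\!\left(\spm{W}{Q}\right) - c_{\vecf_t}\!\left(\ipm{W}{Q}\right)
\]
(the quantity $C$ of Algorithm~\ref{algo:computechangerate}, with a subscript recording the instance in which it is evaluated), together with the optimality of $\setp$ for the linear program~\eqref{eq:lp}. Concretely, for $\emptyset\neq\setp'\subseteq\setu'$ I will show
\[
C_\setu(\setp\cup\setp') \;=\; C_\setu(\setp) + C_{\setu'}(\setp').
\]
Since $\setp\cap\setp'=\emptyset$ gives $\vol(\setp\cup\setp')=\vol(\setp)+\vol(\setp')$, dividing through yields
\[
\delta(\setp\cup\setp') \;=\; \frac{\vol(\setp)\,\delta(\setp)+\vol(\setp')\,\delta(\setp')}{\vol(\setp)+\vol(\setp')},
\]
a strict convex combination of $\delta(\setp)$ and $\delta(\setp')$.

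To establish the additive decomposition I would argue family by family that each of the four sets appearing in $C_\setu(\setp\cup\setp')$ — e.g.\ $\spp{U}{P\cup P'}$ — is the disjoint union of its analogue at $\setp$ in the original instance (here $\spp{U}{P}$, which is part of the edge set deleted before the recursive call) and its analogue at $\setp'$ in the recursion instance $(\setu',\edgeset_{\setu'})$ (here $\spp{U'}{P'}$), and likewise for $\ipp{}{}$, $\spm{}{}$, $\ipm{}{}$. The ingredients are: (i) since $\setu$ is an equivalence class of $\vecf_t$, any edge $e$ whose set $\sfte$ (resp.\ $\ifte$) meets $\setu$ in fact has $\sfte\subseteq\setu$ (resp.\ $\ifte\subseteq\setu$), so the relevant vertex set of such an edge lies wholly on one side of the partition $\setu=\setp\sqcup\setu'$; (ii) the edges removed before recursing are precisely $\spp{U}{P}\cup\ipp{U}{P}\cup\spm{U}{P}\cup\ipm{U}{P}$, so an edge of $\edgeset_{\setu'}$ is one whose relevant vertex set avoids $\setp$, and then the membership test ``$\subseteq\setp\cup\setp'$'' collapses to ``$\subseteq\setp'$''; and (iii) the equality constraints of \eqref{eq:lp} — $x_e=y_u$ for $e\in\cals_\setu^+,u\in\sfte$, and for $e\in\cali_\setu^-,u\in\ifte$ — force $\setp$ (and, restricted to the surviving edges, $\setp'$) to satisfy \eqref{eq:condition1} and \eqref{eq:condition2}, so no $\cals_\setu^+$-edge has $\sfte$ straddling the two sides, and symmetrically for $\cali_\setu^-$. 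Combining these, every edge is charged on exactly one side with the same weight and sign as in $C_\setu(\setp\cup\setp')$; I would also check that the degenerate edges contained entirely in a single class contribute consistently and are not double counted.

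Granting the decomposition, the rest is short. The pair $z_{\setp\cup\setp'}=(\vecx_{\setp\cup\setp'},\vecy_{\setp\cup\setp'})$ is a feasible solution of \eqref{eq:lp} — here one uses that $\setp$, and the surviving edges' copy of $\setp'$, meet \eqref{eq:condition1}–\eqref{eq:condition2} — with objective value $\delta(\setp\cup\setp')$; since $\setp$ is an optimal (indeed, by Corollary~\ref{cor:lpmaximal}, the maximum-size optimal) set of the program, we get $\delta(\setp\cup\setp')\le\delta(\setp)$. Feeding this into the convex-combination identity and cancelling $\vol(\setp)$ gives $\delta(\setp')\le\delta(\setp)$. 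For the equality clause, the same identity shows $\delta(\setp\cup\setp')=\delta(\setp)$ is equivalent to $\vol(\setp')\big(\delta(\setp')-\delta(\setp)\big)=0$, i.e.\ (as $\setp'\neq\emptyset$) to $\delta(\setp')=\delta(\setp)$ — which is exactly ``equality holds iff $\delta(\setp\cup\setp')=\delta(\setp)$''.

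The main obstacle is the additive decomposition: it demands careful bookkeeping of which of the four families an edge belongs to and how these families transform under the simultaneous shrinking of the vertex set ($\setu\to\setu'$) and the edge set ($\edgeset_\setu\to\edgeset_{\setu'}$), while verifying that the validity conditions \eqref{eq:condition1}–\eqref{eq:condition2} inherited from LP feasibility really do prevent an edge from splitting its contribution between $\setp$ and $\setp'$. The families $\cali_\setu^+$ and $\cals_\setu^-$ (governed by inequality rather than equality constraints) and the edges lying inside a single equivalence class are the delicate spots; once they are dispatched, everything else is the elementary convex-combination computation above.
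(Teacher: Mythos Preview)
Your approach is essentially the paper's: the additive decomposition $C_\setu(\setp\cup\setp')=C_\setu(\setp)+C_{\setu'}(\setp')$ is exactly the paper's per-family claim $c_{\vecf_t}(\spp{U'}{P'})=c_{\vecf_t}(\spp{U}{P\cup P'})-c_{\vecf_t}(\spp{U}{P})$ (and analogues) summed, and your convex-combination display is just the paper's formula $\delta(\setp')=\dfrac{\delta(\setp\cup\setp')\vol(\setp\cup\setp')-\delta(\setp)\vol(\setp)}{\vol(\setp\cup\setp')-\vol(\setp)}$ rearranged, after which both arguments finish with the same mediant comparison and the optimality of $\setp$. The paper simply asserts the additivity ``by the definition'' while you spell out the ingredients (i)--(iii) and flag the delicate $\cali^+_{\setu}$, $\cals^-_{\setu}$ cases, but there is no substantive difference in method.
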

\begin{proof}
By the definition of the function $c$ and sets $\cals^+, \cals^-, \cali^+, \cali^-$, it holds that
\[
c\left( \spp{U'}{P'} \right) = c\left( \spp{U}{ P\cup P'} \right) - c\left( \spp{U}{P} \right), 
\]
and the same equality holds for sets $\cals^-, \cali^+$ and $\cali^-$.
We have that 
    \begin{align*}
        \delta(\setp') & = \frac{c\left(\spp{U'}{P'}\right) + c\left(\ipp{U'}{P'}\right) - c\left(\spm{U'}{P'}\right) - c\left(\ipm{U'}{P'}\right)}{\vol(\setp')} \\
        & = \frac{\delta\left(\setp \union \setp'\right)\cdot \vol(\setp \union \setp') - \delta(\setp) \cdot \vol(\setp)}{\vol(\setp \union \setp') - \vol(\setp)}.
    \end{align*}
    Therefore, it holds  for any operator $\bowtie \in \{<, =, >\}$ that
    \begin{alignat*}{2}
        & & \delta(\setp') & \bowtie \delta(\setp) \\
        \iff & \quad & \frac{\delta(\setp \union \setp')\cdot \vol(\setp \union \setp') - \delta(\setp)\cdot \vol(\setp)}{\vol(\setp \union \setp') - \vol(\setp)} & \bowtie \delta(\setp) \\
        \iff & & \delta(\setp \union \setp') & \bowtie \delta(\setp),
    \end{alignat*}
    which implies that  $\delta(\setp') \leq \delta(\setp)$ iff $\delta(\setp \union \setp') \leq \delta(\setp)$ with equality iff $\delta(\setp \union \setp') = \delta(\setp)$.
    Since $\setp$ is optimal, it cannot be the case that $\delta\left( \setp \cup \setp' \right)>\delta(\setp)$, and therefore the lemma follows.
 \end{proof}
Combining everything together, we have the following result which summaries the properties of $\vecr$ computed by Algorithm~\ref{algo:computechangerate} and \eqref{eq:lp}  and establishes the first fact promised at the beginning of this section.
\begin{lemma}
    For any input instance $(\setu, \edgeset_\setu)$,  Algorithm~\ref{algo:computechangerate} always returns the same output $\vecr\in\mathbb{R}^{\cardinality{\setu}}$ no matter which optimal sets  are returned by solving the linear program~\eqref{eq:lp}. In particular, Algorithm~\ref{algo:computechangerate} always finds the unique optimal set $\setp \subseteq \setu$ of maximum size and assigns $\vecr(u) = \delta(\setp)$ to every $u\in \setp$. After removing the computed $\setp \subset \setu$, the computed $\vecr(v)=\delta(\setp')$ for some $\setp'\subseteq \setu \setminus \setp$ and  any $v\in \setp'$ is always strictly less than $\vecr(u)=\delta(\setp)$ for any $u\in \setp$.
\end{lemma}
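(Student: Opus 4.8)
The plan is to run an induction on $\cardinality{\setu}$, feeding in the three results just established: the lemma on level sets, Corollary~\ref{cor:lpmaximal}, and the lemma identifying $\delta(\setp')$ with a ``merged density'' of $\setp$ and $\setp\cup\setp'$. First I would record the elementary facts about~\eqref{eq:lp}: it is feasible (put all the $y$-mass on a single vertex of $\setu$) and bounded above, since each $x_e$ with $e\in\cals_\setu^-\cup\cali_\setu^-$ enters the objective only with a non-positive coefficient while being bounded below by the relevant $y_u$, so at an optimum $x_e=\max_u y_u$; hence the optimal value $\delta^\star\triangleq\max_{\setp}\delta(\setp)$ is attained. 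Applying the level-set lemma to the support $\setp=\{v:y^\star(v)>0\}$ of any optimal solution $z^\star$ shows that this support is itself an optimal subset, so $\delta(\setp)=\delta^\star$; it is non-empty because $\sum_v\deg(v)y^\star(v)=1$; and by Corollary~\ref{cor:lpmaximal} it is contained in the unique maximum optimal subset $\setp^\star$, which also satisfies $\delta(\setp^\star)=\delta^\star$.

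The key structural step is: for every optimal subset $\setp\subsetneq\setp^\star$, in the reduced instance $\bigl(\setu\setminus\setp,\ \edgeset_\setu\setminus(\spp{U}{P}\cup\ipp{U}{P}\cup\spm{U}{P}\cup\ipm{U}{P})\bigr)$ the set $\setp^\star\setminus\setp$ is again the unique maximum optimal subset and $\delta(\setp^\star\setminus\setp)=\delta^\star$. To see this I would use the identity from the merged-density lemma, which for any $\setp'\subseteq\setu\setminus\setp$ gives $\delta(\setp')=\bigl(\delta(\setp\cup\setp')\vol(\setp\cup\setp')-\delta(\setp)\vol(\setp)\bigr)/\bigl(\vol(\setp\cup\setp')-\vol(\setp)\bigr)$: taking $\setp'=\setp^\star\setminus\setp$ and using $\setp\subseteq\setp^\star$ together with $\delta(\setp)=\delta(\setp^\star)=\delta^\star$ yields $\delta(\setp^\star\setminus\setp)=\delta^\star$, while the equality clause of that lemma ($\delta(\setp')\le\delta(\setp)$, with equality iff $\delta(\setp\cup\setp')=\delta(\setp)$) shows $\delta^\star$ is the maximum over the reduced instance and that any optimal subset $\set{Q}$ of it has $\delta(\setp\cup\set{Q})=\delta^\star$; then $\setp\cup\set{Q}$ is an optimal subset of $\setu$, so $\setp\cup\set{Q}\subseteq\setp^\star$ by Corollary~\ref{cor:lpmaximal}, forcing $\set{Q}\subseteq\setp^\star\setminus\setp$. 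Hence $\setp^\star\setminus\setp$ contains every optimal subset of the reduced instance, i.e.\ it is the unique maximum one.

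Uniqueness of the output now follows by induction on $\cardinality{\setu}$. Whatever optimal $\setp$ the linear-program solver returns, Algorithm~\ref{algo:computechangerate} assigns $\vecr(u)=\delta(\setp)=\delta^\star$ to every $u\in\setp$ (and $\setp\subseteq\setp^\star$), then recurses on $\setu\setminus\setp$, which by the structural step has maximum optimal subset $\setp^\star\setminus\setp$ with value $\delta^\star$; since $\setp\neq\emptyset$ the induction hypothesis applies and tells us that recursive call returns the canonical output of the reduced instance --- $\delta^\star$ on $\setp^\star\setminus\setp$ followed by the (inductively unique) output on $\setu\setminus\setp^\star$. Composing, the whole output assigns $\delta^\star$ to all of $\setp^\star$ and then the unique output on $\setu\setminus\setp^\star$, which is exactly the output produced when one always chooses the maximum optimal subset; this establishes the first two claims. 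For the strictly-decreasing claim, run the algorithm with that canonical choice: it assigns $\delta^\star=\delta(\setp^\star)$ on $\setp^\star$, and the maximum optimal subset $\setp'$ of $\setu\setminus\setp^\star$ satisfies $\delta(\setp')\le\delta^\star$ by the merged-density lemma, with strict inequality because equality would give $\delta(\setp^\star\cup\setp')=\delta^\star$ and hence an optimal subset $\setp^\star\cup\setp'\supsetneq\setp^\star$ of $\setu$ (using $\setp'\neq\emptyset$ and $\setp'\cap\setp^\star=\emptyset$), contradicting the maximality of $\setp^\star$; iterating down the recursion gives the asserted strict decrease.

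I expect the main obstacle to be the structural step of the second paragraph: correctly tracking how ``the maximum optimal subset'' transforms under one round of the recursion, and in particular verifying that the reduced instance really is an instance of~\eqref{eq:lp} to which the merged-density identity, its equality clause, and the union-closure of Corollary~\ref{cor:lpmaximal} can all be applied so as to pin down $\setp^\star\setminus\setp$. Once that is in place, the induction and the strict-decrease argument are routine.
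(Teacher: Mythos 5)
Your proof is correct and makes explicit the argument that the paper leaves implicit: the lemma is stated there as a direct corollary of the level-set lemma, Corollary~\ref{cor:lpmaximal}, and the merged-density lemma, and you assemble these via the natural induction on $\cardinality{\setu}$, with the structural step you isolate (that $\setp^\star\setminus\setp$ becomes the unique maximum optimal subset of the reduced instance, still with value $\delta^\star$) being exactly the combinatorial fact the paper does not spell out. One small correction: your feasibility witness for~\eqref{eq:lp} --- putting all $y$-mass on a single vertex --- can violate the equality constraints $x_e=y_u$ coming from $\cals_\setu^+$ and $\cali_\setu^-$, which force $y$ to be constant on $\sfte\cap\setu$ or $\ifte\cap\setu$; the uniform assignment $y_v=1/\vol(\setu)$ for all $v\in\setu$ is a safe witness instead, and nothing downstream is affected.
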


\paragraph{Any $\vecr$ satisfying Rules (1) and (2) is computed by Algorithm~\ref{algo:computechangerate}.}
 Next we show that if there is any vector $\vecr$ which satisfies Rules (1) and (2), it must be equal to the output of Algorithm~\ref{algo:computechangerate}.
\begin{lemma} \label{lem:alg_computes_r}
For any hypergraph $\graphh = (\vertexset_\graphh, \edgeset_\graphh, \weight)$ and $\vecf_t \in \R^n$, if there is a vector $\vecr = \mathrm{d}\vecf_t/\mathrm{d}t$ with an associated $\{\vecr_e(v)\}_{e \in \edgeset_\graphh, v \in \vertexset_\graphh}$ which satisfies Rules~(1) and (2), then $\vecr$ is equal to the output of Algorithm~\ref{algo:computechangerate}.
\end{lemma}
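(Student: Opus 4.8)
The plan is to fix an equivalence class $\setu \in \mathcal{U}$ — these can be treated independently, since (as noted before the definition of $\cals_\setu$ and $\cali_\setu$) an edge $e$ with $(\sfte \cup \ifte) \cap \setu = \emptyset$ contributes nothing to $\vecr(u)$ for $u \in \setu$ — and to argue by induction on $\cardinality{\setu}$ that $\vecr$ restricted to $\setu$ equals the output of $\textsc{ComputeChangeRate}(\setu, \edgeset_\setu)$. Writing $m \triangleq \max_{u \in \setu} \vecr(u)$ and $\setp^{\star} \triangleq \{u \in \setu : \vecr(u) = m\}$, the inductive step reduces to two claims: (i) $\setp^{\star}$ equals the maximal optimal set $\setp$ of the linear program~\eqref{eq:lp}, and (ii) $m = \delta(\setp)$. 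Granting these, the first iteration of the algorithm selects $\setp = \setp^{\star}$ and assigns to its vertices exactly the value $m = \vecr(u)$; moreover one checks that the captured edges (those in $\spp{U}{P} \cup \ipp{U}{P} \cup \spm{U}{P} \cup \ipm{U}{P}$) account exactly for the $\vecr$-values of the vertices of $\setp$ — no other edge affects $\vecr$ on $\setp$, and a captured edge affects $\vecr$ only on $\setp$ — so the reduced instance $(\setu \setminus \setp,\ \edgeset_\setu \setminus (\spp{U}{P} \cup \ipp{U}{P} \cup \spm{U}{P} \cup \ipm{U}{P}))$ passed to the recursive call is strictly smaller and the restriction of $\vecr$ to it still satisfies Rules~(1) and~(2); the inductive hypothesis then concludes.

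To establish (i) and (ii), I would first record that any edge $e$ with $\sfte \cap \setu \neq \emptyset$ in fact has $\sfte \subseteq \setu$ (all vertices of $\sfte$ share the value $\max_{v \in e} \vecf_t(v)$), and symmetrically for $\ifte$. Next, I would use Rule~(2) to check that $\setp^{\star}$ satisfies the consistency conditions~\eqref{eq:condition1} and~\eqref{eq:condition2}: if $e \in \cals_{\setu}^+$ and $\sfte \cap \setp^{\star} \neq \emptyset$, then $\discrep_{\vecf_t}(e) < 0$ forces (via Rule~(1)) some $u \in \sfte$ to have $\vecr_e(u) \neq 0$, so Rule~(2a) makes all of $\sfte$ share the value $\vecr(u) = m$, i.e.\ $\sfte \subseteq \setp^{\star}$; the case of~\eqref{eq:condition2} is symmetric via Rule~(2b). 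Hence $z_{\setp^{\star}}$ is feasible for~\eqref{eq:lp} with objective $\delta(\setp^{\star})$, so $\delta(\setp^{\star}) \le \mathrm{OPT}$, where $\mathrm{OPT}$ denotes the optimal value of~\eqref{eq:lp} (equivalently $\delta(\setp)$). The heart of the argument is then the identity
\[
m \cdot \vol(\setp^{\star}) \;=\; \sum_{u \in \setp^{\star}} \deg(u)\,\vecr(u) \;=\; \sum_{e \in \edgeset_\setu}\ \sum_{u \in \setp^{\star}} \deg(u)\,\vecr_e(u),
\]
where a case analysis over the four regimes of Rule~(2), combined with the two equalities of Rule~(1), shows that a captured edge with $\discrep_{\vecf_t}(e) < 0$ contributes exactly $+c_{\vecf_t}(e)$, a captured edge with $\discrep_{\vecf_t}(e) > 0$ contributes exactly $-c_{\vecf_t}(e)$, and every other edge contributes $0$. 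This yields $m\,\vol(\setp^{\star}) = C(\setp^{\star})$, i.e.\ $m = \delta(\setp^{\star})$.

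To finish, I would run the same per-edge accounting for the maximal optimal set $\setp$ (which also satisfies~\eqref{eq:condition1}--\eqref{eq:condition2}), but this time only lower-bounding each contribution — a captured edge with $\discrep_{\vecf_t}(e) < 0$ still contributes $+c_{\vecf_t}(e)$, one with $\discrep_{\vecf_t}(e) > 0$ contributes at least $-c_{\vecf_t}(e)$, and every other edge contributes a nonnegative amount — to get $\sum_{u \in \setp} \deg(u)\,\vecr(u) \ge C(\setp) = \delta(\setp)\,\vol(\setp) = \mathrm{OPT}\cdot\vol(\setp)$. Since $\vecr(u) \le m$ for all $u \in \setp \subseteq \setu$, this gives $m \ge \mathrm{OPT}$; combined with $m = \delta(\setp^{\star}) \le \mathrm{OPT}$ from above, we conclude $m = \mathrm{OPT}$, so $\setp^{\star}$ is an optimal set and hence $\setp^{\star} \subseteq \setp$ by Corollary~\ref{cor:lpmaximal}. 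Finally the chain $\mathrm{OPT}\cdot\vol(\setp) = C(\setp) \le \sum_{u \in \setp} \deg(u)\,\vecr(u) \le m\,\vol(\setp) = \mathrm{OPT}\cdot\vol(\setp)$ is tight, which forces $\vecr(u) = m$ for every $u \in \setp$, i.e.\ $\setp \subseteq \setp^{\star}$. Thus $\setp^{\star} = \setp$, establishing (i) and (ii) and closing the induction.

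The main obstacle is precisely this per-edge accounting, both for the exact identity $m\,\vol(\setp^{\star}) = C(\setp^{\star})$ and for the inequality $\sum_{u \in \setp}\deg(u)\vecr(u) \ge C(\setp)$: for each of the four cases of Rule~(2) one must pin down which vertices of $\sfte$ and $\ifte$ can carry nonzero $\vecr_e$ and how much weight they absorb, with particular care for the degenerate hyperedges whose vertices all lie in one equivalence class (so $\sfte = \ifte$ and the edge may sit in several of $\spp{U}{P}, \ipp{U}{P}, \spm{U}{P}, \ipm{U}{P}$ at once) and for hyperedges with $\discrep_{\vecf_t}(e) = 0$, which must be shown to be inert. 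Everything else — feasibility of $z_{\setp^{\star}}$, and the union/maximality structure of optimal solutions of~\eqref{eq:lp} — is already available from the preceding lemmas and Corollary~\ref{cor:lpmaximal}.
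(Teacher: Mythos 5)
Your plan tracks the paper's proof in its broad outline: both single out the top $\vecr$-level set in each equivalence class $\setu$ (your $\setp^{\star}$, the paper's $\set{T}$), use Rules (1) and (2) to pin each vertex in that set to the value $\delta(\setp^{\star})$, and then argue that this set coincides with the maximal optimal set $\setp$ of the linear program~\eqref{eq:lp}, so the recursion matches the algorithm. Where the paper is quite terse on the step ``$\setp = \set{T}$'' --- it asserts $\vecr(v)=\delta(\setp)\le \vecr(u)=\delta(\set{T})$ with little justification --- you replace it with a two-sided argument: feasibility of $z_{\setp^{\star}}$ gives $m=\delta(\setp^{\star})\le\mathrm{OPT}$, a per-edge lower bound gives $\sum_{u\in\setp}\deg(u)\vecr(u)\ge C(\setp)=\mathrm{OPT}\cdot\vol(\setp)$ and hence $m\ge\mathrm{OPT}$, and a tightness argument then forces $\setp\subseteq\setp^{\star}$, which is exactly the missing containment that Corollary~\ref{cor:lpmaximal} alone does not provide. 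This is a genuine improvement in clarity, and the overall structure is sound.

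The real issue is precisely where you flag it: the per-edge lower bound. For $e\in\spm{U}{P}$, Rules (1) and (2) tell you that the nonzero $\vecr_e(u)$ over $u\in\sfte$ all lie at the level $M_e:=\max_{v\in\sfte}\vecr(v)$ and sum to $-c_{\vecf_t}(e)$, but they say nothing about the \emph{individual signs} of $\vecr_e(u)$. If the level set $\{u\in\sfte:\vecr(u)=M_e\}$ straddles $\setp$ (which you cannot rule out since $\setp$ is defined by the LP, not by $\vecr$), the partial sum $\sum_{u\in\sfte\cap\setp}\deg(u)\vecr_e(u)$ could in principle drop below $-c_{\vecf_t}(e)$. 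The exact identity for $\setp^{\star}$ avoids this because, there, $M_e=m$ and the nonzero-$\vecr_e$ vertices are forced inside $\setp^{\star}$; but the lower bound for a general optimal $\setp$ needs the extra fact that $\sign(\vecr_e(v))=-\sign(\discrep_{\vecf_t}(e))$ whenever $\vecr_e(v)\ne 0$ (and likewise that $\vecr_e\equiv 0$ whenever $\discrep_{\vecf_t}(e)=0$), neither of which is stated in Rules (1)--(2). These are natural consequences of the graph-based definition of $\signlaph$ that the paper silently relies on, so you are not worse off than the paper; but as written, your per-edge lower bound is not yet a proof, and you should either add these sign facts as a rule (with justification from the graph picture) or restructure to avoid needing them. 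The degenerate case $\sfte=\ifte$ you raise is genuinely delicate as well, for the reason you give (double-counting in $C(\setp)$), and deserves the explicit treatment you propose.
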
 
\begin{proof}
We will focus our attention on a single equivalence class $\setu \subset \vertexset$ where for any $u, v \in \setu$, $\vecf(u) = \vecf(v)$.
    Recall that for each $e \in \edgeset_\setu$, $c_{\vecf_t}(e) = \weight(e) \abs{\discrep_{\vecf_t}(e)}$, which is the rate of flow due to $e$ into $\setu$ (if $e \in \cals_{\setu}^+ \union \cali_{\setu}^+$) or out of $\setu$ (if $e \in \cals_{\setu}^- \union \cali_{\setu}^-$).
    Let $\vecr\in\R^n$ be the vector supposed to satisfy Rules~(1) and (2). We assume that $\setu \subseteq \vertexset$ is an arbitrary equivalence class, and define $$\set{T} \triangleq \left\{ u\in \setu: \vecr(u) = \max_{v\in \setu} \vecr(v)\right\}.$$
    Let us study which properties $\vecr$ must satisfy according to Rules~(1) and (2).
\begin{itemize}
    \item Assume that $e\in\cals_\setu^-$, i.e., it holds that $\sfte \cap U \neq\emptyset$ and $\discrep_{\vecf_t}(e)>0$. To satisfy Rule~(2a), it suffices to have that $c_{\vecf_t}(e)=\weight(e) \cdot  \discrep_{\vecf_t}(e) = - \sum_{v \in \sfe} \deg(v) \vecr_e(v) = - \sum_{v\in 
    \set{T}} \deg(v) \vecr_e(v)$ if $\sfte \cap \set{T} \neq\emptyset$, and $\vecr_e(v) = 0$ for all $v \in \set{T}$ otherwise. 
    \item Assume that $e\in\cals_\setu^+$, i.e., it holds that $\sfte \cap \setu \neq\emptyset$ and $\discrep_{\vecf_t}(e)<0$. To satisfy Rule~(2a), it suffices to have $\sfte \subseteq \set{T}$, or $\sfte \cap \set{T}=\emptyset$.
    \item Assume that $e\in \cali_\setu^+$, i.e., it holds that $\ife \cap \setu \neq\emptyset$ and $\discrep_{\vecf_t}(e)<0$. To satisfy Rule~(2b), it suffices to have that $c_{\vecf_t}(e) = \sum_{v\in \ife} \deg(v) \vecr_e(v) = \sum_{v\in \set{T}} \deg(v) \vecr_e(v)$ if $\ife \subseteq \set{T}$, and $\vecr_e(v) = 0$ for all $v \in \set{T}$ otherwise. 
    \item Assume that $e\in\cali_\setu^-$, i.e., it holds that $\ife \cap \setu \neq\emptyset$ and $\discrep_{\vecf_t}(e)>0$. To satisfy Rule~(2b), it suffices to have $\ife \subseteq \set{T}$, or $\ife \cap \set{T} =\emptyset$.
\end{itemize}
    Notice that the four conditions above needed to satisfy Rule~(2) naturally reflect our definitions of the sets $\spp{U}{P},\ipp{U}{P}, \spm{U}{P}$, and $\ipm{U}{P}$ and for all $u \in \set{T}$, it must be that $\vecr(u) = \delta(\set{T})$.
    
We will show that the output set $\setp$ returned by solving the linear program~\eqref{eq:lp}
 is the set $\set{T}$. To prove this, notice that on one hand, by  Corollary~\ref{cor:lpmaximal}, the linear program gives us the unique maximal optimal subset $\setp \subseteq \setu$, and every $v\in \setp$ satisfies that $\vecr(v)=\delta(\setp)\leq \vecr(u) = \delta(\set{T})$ for any $u\in \set{T}$ as every vertex in $\set{T}$ has the maximum $\vecr$-value. On the other side, we have that $\delta(\set{T}) \leq \delta(\setp)$ since  $\setp$ is the set returned by the linear program, and therefore  $\set{T}=\setp$. We can apply this argument recursively, and this proves that Algorithm~\ref{algo:computechangerate} must return the vector $\vecr$.
\end{proof}

\paragraph{The output of Algorithm~\ref{algo:computechangerate} satisfies Rules (1) and (2).}
 Now we show that the output of Algorithm~\ref{algo:computechangerate} does indeed satisfy Rules (1) and (2) which, together with Lemma~\ref{lem:alg_computes_r}, implies that there is exactly one such vector which satisfies the rules.
\begin{lemma} \label{lem:algo_satisfies_rules}
    For any hypergraph $\graphh = (\vertexset_\graphh, \edgeset_\graphh, \weight)$ and vector $\vecf_t \in \R^n$, the vector $\vecr$ constructed by Algorithm~\ref{algo:computechangerate} has corresponding $\{\vecr_e(v)\}_{e \in \edgeset_\graphh, v \in \vertexset_\graphh}$ which satisfies Rules~(1) and (2).
    Moreover, the $\{\vecr_e(v)\}_{e \in \edgeset_\graphh, v \in \vertexset_\graphh}$ values can be found in polynomial time using the vector $\vecr$.
\end{lemma}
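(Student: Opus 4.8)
The plan is to exhibit an explicit collection $\{\vecr_e(v)\}_{e\in\edgeset_\graphh,v\in\vertexset_\graphh}$ and verify that it satisfies Rules~(1) and~(2). Both rules decompose over edges, and through the sets $\sfe$ and $\ife$ each edge's contribution localises to the equivalence classes of $\vecf_t$ containing $\sfe$ and $\ife$; so I would work one equivalence class $\setu$ at a time, following the recursion of Algorithm~\ref{algo:computechangerate}. That recursion carves $\setu$ into the successively removed blocks $\setp_1,\setp_2,\dots$ (the sets returned by the linear program at each recursive call), with $\delta(\setp_1)>\delta(\setp_2)>\cdots$ by the strict-monotonicity lemma proved above, and assigns $\vecr(v)=\delta(\setp_k)$ to every $v\in\setp_k$. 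If $\sfe\cap\setu\neq\emptyset$ then in fact $\sfe\subseteq\setu$ (all of $\sfe$ attains the common value of $\setu$), and similarly for $\ife$; both sides meet $\setu$ only when $e\subseteq\setu$ is a constant edge, which I would treat as a separate, easy case (there $\sfe=\ife$, so Rule~(1) is a single identity). Otherwise the $\sfe$-side of $e$ is ``resolved'' at exactly one block $\setp_k$, namely when $e$ first enters $\spp{U}{P}\cup\spm{U}{P}$, and the $\ife$-side symmetrically when $e$ first enters $\ipp{U}{P}\cup\ipm{U}{P}$; at that moment I set the values $\vecr_e(v)$ for $v$ on the resolved side and leave $\vecr_e(v)=0$ elsewhere. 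Because the two sides of a non-constant edge lie in disjoint vertex sets, these partial assignments never conflict and together define $\{\vecr_e(v)\}$, with $\vecr_e(v)=0$ whenever $v\notin\sfe\cup\ife$.

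For Rule~(2) the point is that the equality/inequality constraints of \eqref{eq:lp} are exactly what is needed: the constraints $x_e=y_u$ for $e\in\cals_\setu^+$ and for $e\in\cali_\setu^-$ force all of $\sfe$ (resp.\ all of $\ife$) to be either inside or outside the returned set $\setp$, which is precisely \eqref{eq:condition1} and \eqref{eq:condition2}, while the one-sided constraints on $\cals_\setu^-$ and $\cali_\setu^+$ permit the ``straddling'' that the definition of $\spm{U}{P}$ and the monotone recursion rely on. With this I would run through the four cases of Rule~(2): for $e\in\cals_\setu^+$ ($\discrep_{\vecf_t}(e)<0$), all of $\sfe$ sits in one block and hence shares a single $\vecr$-value (Rule~(2a), second case); for $e\in\cals_\setu^-$ ($\discrep_{\vecf_t}(e)>0$), $e$ is resolved at the first block meeting $\sfe$, which by the strict decrease of $\delta$ is exactly the subset of $\sfe$ attaining $\max_{v\in\sfe}\vecr(v)$, and I only put nonzero $\vecr_e$ there (Rule~(2a), first case); the two cases of Rule~(2b) are symmetric via $\cali_\setu^-$ and $\cali_\setu^+$.

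For Rule~(1), fix one block $\setp\subseteq\setu$. The edges resolved there on the side touching $\setp$ split into a ``source'' family $\spp{U}{P}\cup\ipp{U}{P}$, each of which must deliver its full $c_{\vecf_t}(e)=-\weight(e)\discrep_{\vecf_t}(e)>0$ onto its side (a subset of $\setp$), and a ``sink'' family $\spm{U}{P}\cup\ipm{U}{P}$, each of which must draw $c_{\vecf_t}(e)=\weight(e)\discrep_{\vecf_t}(e)>0$ off its side inside $\setp$, while every $v\in\setp$ must end up with net weighted rate $\deg(v)\delta(\setp)$. The total budgets already balance, since $\sum_{v\in\setp}\deg(v)\delta(\setp)=\vol(\setp)\delta(\setp)=C(\setp)=c(\spp{U}{P})+c(\ipp{U}{P})-c(\spm{U}{P})-c(\ipm{U}{P})$, so the existence of a valid split is the feasibility of a bipartite transportation (equivalently, max-flow) instance. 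I expect this to be the main obstacle: one has to rule out a ``bottleneck'' subset $S\subseteq\setp$ that cannot be supplied, and the key observation should be that such an $S$ would, after the same bookkeeping, be a strictly better candidate than $\setp$ for the linear program, contradicting that $\setp$ is the maximal optimal set; hence the Hall/min-cut condition holds and a feasible split exists. A feasible split is then found by any polynomial-time max-flow or LP solver using only the already-computed $\vecr(v)=\delta(\setp)$ and the $c_{\vecf_t}(e)$ values, and since there are polynomially many blocks and equivalence classes, the whole collection $\{\vecr_e(v)\}$ is produced in polynomial time. Together with Lemma~\ref{lem:alg_computes_r}, which shows that any rule-satisfying $\vecr$ must be the one computed by Algorithm~\ref{algo:computechangerate}, this also yields uniqueness, completing Lemma~\ref{lem:algo_satisfies_rules} and hence Lemma~\ref{lem:rulesimplydiffusion}.
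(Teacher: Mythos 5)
Your proposal follows essentially the same strategy as the paper's proof: you work block-by-block through the recursion, observe that the equality constraints of the LP enforce the level-set structure needed for Rule~(2), reduce Rule~(1) to the feasibility of a transportation instance on each returned block $\setp$, and use the optimality of $\setp$ under \eqref{eq:lp} to rule out a bottleneck via a Hall/min-cut argument. The paper realises this plan concretely by building, for each block $\set{T}$, an undirected max-flow network with source $s$ connected to the ``incoming'' edges $\spp{U}{T}\cup\ipp{U}{T}$, sink $t$ connected to the ``outgoing'' edges $\spm{U}{T}\cup\ipm{U}{T}$, terminal edges of capacity $\deg(v)\abs{\delta(\set{T})}$ on the appropriate side depending on the sign of $\delta(\set{T})$, and infinite-capacity edges from hyperedges to their vertices in $\set{T}$; it then shows any $s$--$t$ cut has capacity at least $\cut(\{s\})$ because a cheaper cut would yield a subset $\set{T}'\subseteq\set{T}$ with $\delta(\set{T}')>\delta(\set{T})$, and reads off $\deg(v)\vecr_e(v)=\Theta(e,v)$ from a max flow. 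Your ``bookkeeping'' remark is exactly the step in the paper that identifies, for the arbitrary cut, the edge families $\setetpp,\setetpm$ with $\spp{U}{T'}\cup\ipp{U}{T'}$ and $\spm{U}{T'}\cup\ipm{U}{T'}$. Two small things worth flagging: (i) the contradiction is obtained from a \emph{subset} $\set{T}'$ of $\set{T}$ having larger $\delta$-value, not literally from the starved ``bottleneck'' subset being a better LP candidate, so you'd want to be careful about which side of the cut you compare; and (ii) the paper handles the sign of $\delta(\set{T})$ explicitly (putting the terminal edges on the $s$ or $t$ side accordingly), whereas your sign-agnostic balance argument implicitly covers this. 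Your explicit treatment of constant edges (where $\sfe=\ife$) is a sensible clarification that the paper leaves implicit. Overall, the proposal is correct and is the same proof up to presentation.
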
 
\begin{proof} 
\newcommand{\setetp}{\mathcal{E}_{\set{T}}^+}
\newcommand{\setetpp}{\mathcal{E}_{\set{T}'}^+}
\newcommand{\setetm}{\mathcal{E}_{\set{T}}^-}
\newcommand{\setetpm}{\mathcal{E}_{\set{T}'}^-}
We will focus on a single iteration of the algorithm, in which $r(v)$ is assigned for the vertices in some set $\set{T} \subset \vertexset_\graphh$.
We use the notation \[
\setetp = \ipp{U}{T} \union \spp{U}{T}, \qquad \setetm = \ipm{U}{T} \union \spm{U}{T}
\]
and will show that the values of $\vecr_e(v)$ for $e \in \setetp \union \setetm$ can be found and satisfy Rules (1) and (2). Therefore, by applying this argument to each recursive call of the algorithm, we establish the lemma.
Given the set $\set{T}$, construct the following undirected flow graph, which is illustrated in  Figure~\ref{fig:maxflow}.
\begin{itemize}
    \item The vertex set is $\setetp \union \setetm \union \set{T} \union \{s, t\}$.
    \item For all $e \in \setetp$, there is an edge $(s, e)$ with capacity $c_{\vecf_t}(e)$.
    \item For all $e \in \setetm$, there is an edge $(e, t)$ with capacity $c_{\vecf_t}(e)$.
    \item For all $v \in \set{T}$, if $\delta(\set{T}) \geq 0$, there is an edge $(v, t)$ with capacity $\deg(v) \delta(\set{T})$.
    Otherwise, there is an edge $(s, v)$ with capacity $\deg(v)\abs{\delta(\set{T})}$.
    \item For each $e \in \setetp \union \setetm$, and each $v \in \set{T} \intersect \left(\sfte \union \ifte \right)$, there is an edge $(e, v)$ with  capacity $\infty$.
\end{itemize}

\begin{figure}[h]
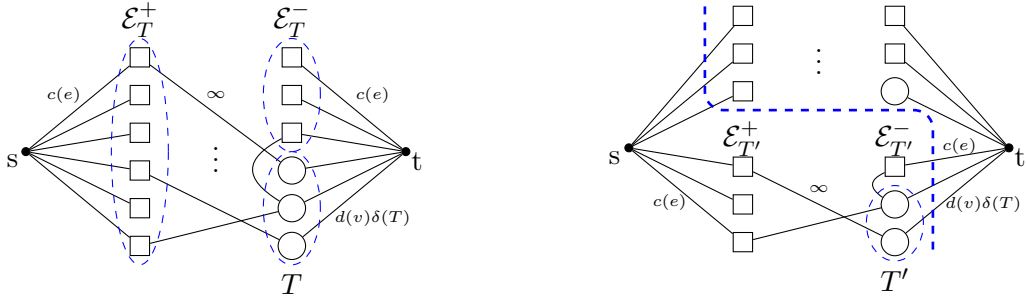

    \centering
    \begin{subfigure}{0.49\textwidth}
    \centering
        \tikzfig{hypergraphs/maxflow}
    \end{subfigure}
    \begin{subfigure}{0.49\textwidth}
    \centering
        \tikzfig{hypergraphs/maxflowcut}
    \end{subfigure}
    \caption{ \textbf{Left}: an illustration of the constructed max-flow graph, when $\delta(\set{T}) \geq 0$.
    The minimum cut is given by $\{s\}$. \textbf{Right}: a cut induced by $\set{T}' \subset \set{T}$. We can assume that every $e \in \setetp \union \setetm$ connected to $\set{T}'$ is on the same side of the cut as $\set{T}'$. Otherwise, there would be an edge with infinite capacity crossing the cut.}
    \label{fig:maxflow}
\end{figure}

We use $\cut(\seta)$ to denote the weight of the cut defined by the set $\seta$ in this constructed graph and note that $\cut(\{s\}) = \cut(\{t\})$ since
\begin{align*}
    \cut(\{s\}) - \cut(\{t\}) = \sum_{e \in \setetp} c_{\vecf}(e) - \sum_{e \in \setetm} c_{\vecf}(e) - \vol(\set{T}) \delta(\set{T}) = 0
\end{align*}
by the definition of $\delta(\set{T})$.

Now, suppose that the maximum flow value on this graph is $\cut(\{s\})$, and let the corresponding flow from $u$ to $v$ be given by $\Theta(u, v) = - \Theta(v, u)$.
Then, set $\deg(v) \vecr_e(v) = \Theta(e, v)$ for any $e \in \setetp \union \setetm$ and $v \in \set{T} \intersect e$.
This configuration of the values $\vecr_e(v)$ would be compatible with the vector $\vecr$ computed by Algorithm~\ref{algo:computechangerate} and would satisfy the rules of the diffusion process for the following reasons:
\begin{itemize}
    \item For all $v \in \set{T}$, the edge $(v, t)$ or $(s, v)$ is saturated and so $\sum_{e \in \edgeset} \deg(v) \vecr_e(v) = \deg(v) \delta(\set{T}) = \deg(v) \vecr(v)$.
    \item For any $e \in \setetp \union \setetm$, the edge $(s, e)$ or $(e, t)$ is saturated and so we have $\sum_{v \in \set{T} \intersect e} \deg(v) \vecr_v(e) = - \weight(e) \discrep(e)$. Since $\setetp \union \setetm$ is removed in the recursive step of the algorithm, $\vecr_e(v) = 0$ for all $v \in \setu \setminus \set{T}$ and so $\sum_{v \in \setu \intersect e} \deg(v) \vecr_e(v) = - \weight(e) \discrep(e)$. This establishes Rule~(1) since $\setu \intersect e$ is equal to either $\sfe$ or $\ife$.
    \item For edges in $\spp{U}{T}$ (resp.\ $\ipp{U}{T}$, $\ipm{U}{T}$), since $\sfte$ (resp.\ $\ifte$, $\ifte$) is a subset of $\set{T}$ and every $v \in \set{T}$ has the same value of $\vecr(v)$, Rule~(2) is satisfied.
    For edges in $\spm{U}{T}$, for any $v \not \in \set{T}$, we have $\vecr_e(v) = 0$ and $\vecr(v) < \delta(\set{T})$ which satisfies Rule~(2).
\end{itemize}

We will now show that every cut separating $s$ and $t$ has weight at least $\cut(\{s\})$ which will establish that the maximum flow on this graph is $\cut(\{s\})$ by the max-flow min-cut theorem.

Consider some arbitrary cut given by $\set{X} = \{s\} \union \set{T}' \union \setetpp \union \setetpm$ where $\set{T}'$ (resp.\ $\setetpp$, $\setetpm$) is a subset of $\set{T}$ (resp.\ $\setetp$, $\setetm$).
Figure~\ref{fig:maxflow} illustrates this cut.
Since all of the edges not connected to $s$ or $t$ have infinite capacity, we can assume that no such edge crosses the cut which implies that
\begin{itemize}
    \item For all $e \in \setetpp$, $e \intersect (\set{T} \setminus \set{T}') = \emptyset$.
    \item For all $e \in \setetpm$, $e \intersect (\set{T} \setminus \set{T}') = \emptyset$.
    \item For all $e \in (\setetp \setminus \setetpp)$, $e \intersect \set{T}' = \emptyset$.
    \item For all $e \in (\cale_T^- \setminus \setetpm)$, $e \intersect \set{T}' = \emptyset$.
\end{itemize}
These conditions, along with the definition of $\setetp$ and $\setetm$, allow us to assume that $\setetpp = \ipp{U}{T'} \union \spp{U}{T'}$ and $\setetpm = \ipm{U}{T'} \union \spm{U}{T'}$.
The size of this arbitrary cut is
\[
    \cut(\set{X}) = \cut(\{s\}) - \sum_{e \in \setetpp} c(e) + \sum_{e \in \setetpm} c(e) + \sum_{v \in \set{T}'} \deg(v) \delta(\set{T}).
\]
Since $\set{T}$ maximises the objective function $\delta$, we have
\[
    \sum_{e \in \setetpp} c(e) - \sum_{e \in \setetpm} c(e) = \vol(\set{T}') \delta(\set{T}') \leq \vol(\set{T}') \delta(\set{T}) = \sum_{v \in \set{T}'} \deg(v) \delta(\set{T})
\]
and can conclude that $\cut(\set{X}) \geq \cut(\{s\})$ which completes the proof.
\end{proof}

We can now combine the results in Lemmas~\ref{lem:alg_computes_r} and \ref{lem:algo_satisfies_rules} to prove Lemma~\ref{lem:rulesimplydiffusion}.
\begin{proof}[Proof of Lemma~\ref{lem:rulesimplydiffusion}.]
    Lemma~\ref{lem:alg_computes_r} and Lemma~\ref{lem:algo_satisfies_rules} together imply that there is a unique vector $\vecr$ and corresponding $\{\vecr_e(v)\}_{e \in \edgeset_\graphh, v \in \vertexset_\graphh}$ which satisfies Rules~(1) and (2).
    Lemma~\ref{lem:alg_computes_r} further shows that Algorithm~\ref{algo:computechangerate} computes this vector $\vecr$, and the proof of Lemma~\ref{lem:algo_satisfies_rules} gives a polynomial-time method for computing the $\{\vecr_e(v)\}$ values by solving a sequence of max-flow problems.
\end{proof}

\section{Analysis of Algorithm~\ref{algo:main}} \label{sec:thm1proof}
In this section we will 
analyse the \diffalgname\ algorithm and
prove Theorem~\ref{thm:mainalg}.
This section is split into two subsections which correspond to the two statements in Theorem~\ref{thm:mainalg}.
First, we show that the diffusion process converges to an eigenvector of $\signlaph$.
We then show that this allows us to find sets $\setl, \setr \subset \vertexset_\graphh$ with low hypergraph bipartiteness.

\subsection{Convergence of the diffusion process}
We show in this section that the diffusion process determined by the operator $\signlaph$ converges in polynomial time to an eigenvector of $\signlaph$.

\begin{theorem} \label{thm:convergence}
For any $\epsilon > 0$, there is some $t = \bigo{1 / \epsilon^3}$ such that for any starting vector $\vecf_0$, there is an interval $[c, c + 2\epsilon]$ such that
\[
    \frac{1}{\norm{\vecf_t}_\weight}\sum_{\vec{u}_i : \lambda_i \in [c, c + 2\epsilon]} \inner{\vecf_t}{\vec{u}_i}_\weight \geq 1 - \epsilon
\]
where $(\vec{u}_i, \lambda_i)$ are the eigen-pairs of $\degm_\graphh^{-1} \signlap_t = \degm_\graphh^{-1}(\degm_{\graphg_t} + \adj_{\graphg_t})$ and $\graphg_t$ is the graph constructed to represent the diffusion operator $\signlaph$ at time $t$.
\end{theorem}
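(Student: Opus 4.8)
The plan is to exhibit two potentials that decrease along the diffusion and to show that the rate of decrease of one of them, the Rayleigh quotient, is exactly the \emph{spectral spread} of $\vecf_t$; a one–line averaging argument then produces the desired time $t = \bigo{1/\epsilon^3}$ together with the window $[c,c+2\epsilon]$.

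First I would set up the relevant operator. Fix a time interval on which the graph $\graphg_t$ used to represent the diffusion is constant, and write $\signlap_t = \degm_{\graphg_t} + \adj_{\graphg_t}$ and $\mat{M} \triangleq \degm_\graphh^{-1}\signlap_t$. The matrix $\mat{M}$ is self-adjoint with respect to $\inner{\cdot}{\cdot}_\weight$ (since $\inner{\mat M\vecx}{\vecy}_\weight = \vecx^\transpose\signlap_t\vecy = \inner{\vecx}{\mat M\vecy}_\weight$), it is positive semidefinite, and all of its eigenvalues lie in $[0,2]$: for the upper bound, $\vecf^\transpose\signlap_t\vecf = \sum_{\{u,v\}\in\edgeset_{\graphg_t}}\weight_{\graphg_t}(u,v)(\vecf(u)+\vecf(v))^2 \le 2\sum_u \deg_{\graphg_t}(u)\vecf(u)^2 \le 2\norm{\vecf}_\weight^2$, because every vertex satisfies $\deg_{\graphg_t}(u)\le\deg_\graphh(u)$ by construction of $\graphg_t$. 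The point I would emphasise is that $\vecf_t^\transpose\signlap_{\graphg_t}\vecf_t = Q(\vecf_t)$, where $Q(\vecf)\triangleq\sum_{e\in\edgeset_\graphh}\weight(e)\,\discrep_{\vecf}(e)^2$ is the discrepancy energy: this identity holds for \emph{every} valid choice of $\graphg_t$, because the weight of each hyperedge $e$ is split only among pairs in $\sfte\times\ifte$. Consequently $\rho_t \triangleq R_{\signlap_{\graphg_t}}(\vecf_t) = Q(\vecf_t)/\norm{\vecf_t}_\weight^2$ is an intrinsic, continuous function of $\vecf_t$, hence continuous in $t$ on $[0,\infty)$.

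Next I would establish monotonicity. On a constant-$\graphg$ interval, $\vecf_s = e^{-(s-t_0)\mat M}\vecf_{t_0}$, and differentiating gives $\frac{\mathrm{d}}{\mathrm{d}s}\norm{\vecf_s}_\weight^2 = -2\,\vecf_s^\transpose\signlap_{\graphg_s}\vecf_s \le 0$, so $\norm{\vecf_t}_\weight$ is non-increasing. More importantly, letting $(\vec u_i,\lambda_i)$ be the $\degm_\graphh$-orthonormal eigenpairs of $\mat M$ and $\pi_i(s) = \inner{\vecf_s}{\vec u_i}_\weight^2/\norm{\vecf_s}_\weight^2$, a short computation using self-adjointness yields
\[
    \frac{\mathrm{d}}{\mathrm{d}s}\rho_s = -2\left(\frac{\norm{\mat M\vecf_s}_\weight^2}{\norm{\vecf_s}_\weight^2} - \rho_s^2\right) = -2\sum_i \pi_i(s)\,(\lambda_i - \rho_s)^2 = -2\,\mathrm{Var}_s \le 0,
\]
the inequality being Cauchy--Schwarz. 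Since $\rho_t$ is continuous and non-increasing on each interval of constancy, it is non-increasing on all of $[0,\infty)$; together with $\rho_t \ge 0$ and $\rho_0 \le 2$ this gives, for every $T$, the integral bound $\int_0^T \mathrm{Var}_t\,\mathrm{d}t \le \tfrac12(\rho_0 - \rho_T) \le 1$ (this uses only the FTC inequality for a continuous monotone function, so no regularity of $\rho$ beyond continuity is required). Now take $T = 2/\epsilon^3$: the set of $t\in[0,T]$ with $\mathrm{Var}_t > \epsilon^3$ has measure at most $\epsilon^{-3} < T$, so there is a time $t^\star \le T = \bigo{1/\epsilon^3}$ with $\mathrm{Var}_{t^\star}\le\epsilon^3$. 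Treating $\pi = (\pi_i(t^\star))_i$ as a probability distribution on the eigenvalues of $\degm_\graphh^{-1}\signlap_{t^\star}$ with mean $\rho_{t^\star}$ and variance $\le\epsilon^3$, Chebyshev gives $\pi(\{i:|\lambda_i-\rho_{t^\star}|\ge\epsilon\})\le\epsilon^3/\epsilon^2 = \epsilon$; setting $c = \rho_{t^\star}-\epsilon$ so that $|\lambda_i-\rho_{t^\star}|<\epsilon$ implies $\lambda_i\in[c,c+2\epsilon]$, we obtain $\sum_{\vec u_i:\lambda_i\in[c,c+2\epsilon]}\inner{\vecf_{t^\star}}{\vec u_i}_\weight^2 \ge (1-\epsilon)\norm{\vecf_{t^\star}}_\weight^2$, which is the stated concentration (the displayed inequality follows after normalising).

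The part that needs the most care is \emph{not} the calculus above but the claim that $\vecf_t$, and hence $\rho_t$, is genuinely continuous and that the diffusion is defined for all $t$: this is exactly what the diffusion continuity condition and Lemma~\ref{lem:rulesimplydiffusion} were developed for, so I would simply invoke them, observing that one only needs $\rho_t$ continuous and locally non-increasing (no control on how the graph-change times are distributed, and no absolute continuity, is needed for the one-sided integral bound). A secondary, routine point is that Algorithm~\ref{algo:main} runs the \emph{discretised} diffusion with step $\epsilon$, which perturbs the identity $\frac{\mathrm{d}}{\mathrm{d}s}\rho_s = -2\mathrm{Var}_s$ by an $\bigo{\epsilon^2}$ term per step; over a horizon of $\bigo{1/\epsilon^3}$ this is absorbed into the constants, so the conclusion is unchanged.
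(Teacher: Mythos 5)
Your proof is correct, and at its core it uses the same strategy as the paper: the key identity is
\[
\frac{\mathrm{d}}{\mathrm{d}t}\rho_t \;=\; -2\sum_i \pi_i(t)\left(\lambda_i - \rho_t\right)^2 \;=\; -2\,\mathrm{Var}_t,
\]
which is exactly the paper's equation \eqref{eq:ddt_gt} rewritten, and both arguments then exploit $0 \le \rho_t \le 2$ to bound the total decrease. Where you diverge is the concluding step. The paper argues by contrapositive: it uses the sign of the partial derivative \eqref{eq:partial} to identify a worst-case configuration of the spectrum, and then works through two separate case analyses to show that whenever the concentration conclusion fails, $\dot\rho_t \le -\epsilon^3$. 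You instead integrate the variance to get $\int_0^T \mathrm{Var}_t\,\mathrm{d}t \le 1$, average to find $t^\star \le 2/\epsilon^3$ with $\mathrm{Var}_{t^\star} \le \epsilon^3$, and invoke Chebyshev's inequality directly. These two routes are logically equivalent --- the paper's worst-case-configuration argument is, in effect, re-deriving the Chebyshev bound from scratch --- but yours is shorter and more transparent. You also make explicit several points the paper leaves implicit (self-adjointness of $\mat{M}$ with respect to $\inner{\cdot}{\cdot}_\weight$, the spectral bound $\lambda_i \in [0,2]$, the intrinsic characterisation $\vecf_t^\transpose\signlap_{\graphg_t}\vecf_t = Q(\vecf_t)$ that makes $\rho_t$ continuous across graph changes, and the regularity needed for the one-sided FTC bound), which is a genuine gain in rigor. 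One small remark: the concentration you actually derive, namely $\sum_{\lambda_i\in[c,c+2\epsilon]}\inner{\vecf_{t^\star}}{\vec u_i}_\weight^2 \ge (1-\epsilon)\norm{\vecf_{t^\star}}_\weight^2$, is the natural statement and is what the paper's proof also establishes; the displayed inequality in the theorem as written omits the squares and appears to be a typographical slip.
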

By taking $\epsilon$ to be some small constant, this shows that the vector $\vecf_t$ converges to an eigenvector of the hypergraph operator in polynomial time.  

\begin{proof}
 We will prove this by showing that the Rayleigh quotient $R_{\signlaph}(\vecf_t)$ is always decreasing at a rate of at least $\epsilon^3$ whenever the conclusion of Theorem \ref{thm:convergence} does not hold. Since $R_{\signlaph}(\vecf_t)$ can only decrease by a constant amount, the theorem will follow.

    \newcommand{\sumj}{\sum_{j = 1}^n}
    \newcommand{\ajs}{\alpha_j^2}
    \newcommand{\halfdelta}{\frac{\delta}{2}}
    First, we derive an expression for the rate of change of the Rayleigh quotient $R_{\signlaph}(\vecf_t)$.
    Let $\vecx_t \triangleq \degmhalf_\graphh \vecf_{t}$, and let $\signlapn_t = \degmhalfneg_\graphh (\degm_{\graphg_t} + \adj_{\graphg_t})\degmhalfneg_\graphh$.  Then, we have
     \[
        R_{\signlap_\graphh}(\vecf_t) = \frac{\vecf_t^\transpose (\degm_{\graphg_t} + \adj_{\graphg_t}) \vecf_t}{\vecf_t^\transpose \degm_\graphh \vecf_t} = \frac{\vecx_t^\transpose \signlapn_t \vecx_t}{\vecx_t^\transpose \vecx_t}.
    \]
    For every eigenvector $\vecu_j$ of $\degm_\graphh^{-1} \signlaph$, the vector $\vecv_j = \degmhalf_\graphh \vecu_j$ is an eigenvector of $\signlapn$ with the same eigenvalue $\lambda_i$.
 Since $\signlapn$ is symmetric, the vectors $\vecv_1$ to $\vecv_n$ are orthogonal. 
    Additionally, notice that
    \[
    \inner{\vecx_t}{\vecv_j} = \vecf^\transpose_t \degm_\graphh \vecu_j = \inner{\vecf_{t}}{\vecu_j}_\weight
    \]
    and we define $\alpha_j = \inner{\vecf_t}{\vecu_j}_w$ so we can write $\vecx_t = \sum_{j = 1}^n \alpha_j \vecv_j$ and $\vecf_{t} = \degmhalfneg_\graphh \vecx_t = \sum_{j = 1}^n \alpha_j \vecu_j$.
    Now, we have that
    \begin{align*}
        \frac{\mathrm{d}}{\mathrm{d}t} \inner{\vecx_t}{\signlapn_{t} \vecx_t} & = \inner{\frac{\mathrm{d}}{\mathrm{d}t} \vecx_t}{\signlapn_t \vecx_t} + \inner{\vecx_t}{\frac{\mathrm{d}}{\mathrm{d}t} \signlapn_t \vecx_t} \\
        & = - \vecx_t^\transpose \signlapn_t^2 \vecx_t - \vecx_t^\transpose \signlapn_t^2 \vecx_t \\
        & = - 2 \sum_{j = 1}^n \alpha_j^2 \lambda_j^2.
    \end{align*}
    Additionally, we have
    \[
        \frac{\mathrm{d}}{\mathrm{d}t} \vecx_t^\transpose \vecx_t = - \vecx_t^\transpose \signlapn_t \vecx_t - \vecx_t^\transpose \signlapn_t \vecx_t 
        = - 2 \vecx_t^\transpose \signlapn_t \vecx_t.
    \]
    Recalling that $\vecx_t^\transpose \vecx_t = \sum_{j = 1}^n \alpha_j^2$, this gives 
    \newcommand{\dddelta}{\frac{\mathrm{d}}{\mathrm{d} \delta}}
    \begin{align}
        \frac{\mathrm{d}}{\mathrm{d} t} R(\vecf_t) & = \frac{1}{(\vecx_t^\transpose \vecx_t)^2} \left[\left(\frac{\mathrm{d}}{\mathrm{d} t} (\vecx_t^\transpose \signlapn_t \vecx_t) \right) (\vecx_t^\transpose \vecx_t) - \left(\frac{\mathrm{d}}{\mathrm{d}t} (\vecx_t^\transpose \vecx_t) \right) (\vecx_t^\transpose \signlapn_t \vecx_t) \right] \nonumber \\
        & = \frac{1}{\vecx_t^\transpose \vecx_t} \left(\frac{\mathrm{d}}{\mathrm{d}t} (\vecx_t^\transpose \signlapn_t \vecx_t)\right) + 2 R(\vecf_t)^2 \nonumber \\
        & = 2 \left[ R(\vecf_t)^2 - \frac{1}{\sum_{j = 1}^n \alpha_j^2} \sumj \ajs \lambda_j^2\right] \label{eq:ddt_gt}
    \end{align}
    We will now show that at any time $t$, if the conclusion of the theorem does not hold, then
    \begin{equation} \label{eq:gt_decreases}
        \frac{\mathrm{d}}{\mathrm{d} t} R(\vecf_t) \leq - \epsilon^3.
    \end{equation}
    Assuming that this holds, and using the fact that for all $t$ it is the case that $0 \leq R(\vecf_t) \leq 2$, when $t = 2 / \epsilon^3$, either $R(\vecf_t) = 0$ or there was some $t' < t$ when $(\mathrm{d}/\mathrm{d}t') R(\vecf_{t'}) > - \epsilon^3$ and the conclusion of the theorem holds.
    
    Now, to show \eqref{eq:gt_decreases}, consider the partial derivative
    \begin{align}
        \frac{\partial}{\partial \lambda_i} 2 \left[ R(\vecf_t)^2 - \frac{1}{\sum_{j = 1}^n \alpha_j^2} \sumj \ajs \lambda_j^2 \right] & = 2 \left[ \frac{2 \alpha_i^2}{\sum_{j = 1}^n \alpha_j^2} R(\vecf_t) - \frac{2 \alpha_i^2}{\sum_{j = 1}^n \alpha_j^2} \lambda_i \right] \nonumber \\
        & = \frac{4 \alpha_i^2}{\sum_{j = 1}^n \alpha_j^2} (R(\vecf_t) - \lambda_i), \label{eq:partial}
    \end{align}
    where we use the fact that $R(\vecf_t) = (\sum_{j = 1}^n \alpha_j^2 \lambda_j) / (\sum_{j = 1}^n \alpha_j^2)$.
    Notice that
    the derivative in \eqref{eq:partial} is greater than $0$
    if $\lambda_i < R(\vecf_t)$
    and
    the derivative is less than $0$
    if $\lambda_i > R(\vecf_t)$.
    This means that in order to establish an upper-bound for $(\mathrm{d}/\mathrm{d} t) R(\vecf_t)$, we can assume that the eigenvalues $\lambda_j$ are as close to the value of $R(\vecf_t)$ as possible.
    
    Now, we assume that at time $t$, the conclusion of the theorem does not hold. Then, one of the following cases must hold: 
    \begin{enumerate}
        \item $(\sum_{j : \lambda_j > R(\vecf_t) + \epsilon} \alpha_j^2) / (\sum_{j = 1}^n \alpha_j^2) > \epsilon / 2$;
        \item $(\sum_{j : \lambda_j < R(\vecf_t) - \epsilon} \alpha_j^2) / (\sum_{j = 1}^n \alpha_j^2) > \epsilon / 2$.
    \end{enumerate}
    Suppose the first case holds.
    By the conclusions we draw from \eqref{eq:partial}, we can assume that there is an eigenvalue $\lambda_i = R(\vecf_t) + \epsilon$ such that $\alpha_i^2 / (\sum_{j = 1}^n \alpha_j^2) = \epsilon / 2$ and that $(\sum_{j: \lambda_j < R(\vecf_t)} \alpha_j^2) / (\sum_{j = 1}^n \alpha_j^2) = 1 - \epsilon / 2$.
    Then, since $R(\vecf_t) = (\sum_{j = 1}^n \alpha_j^2 \lambda_j) / (\sum_{j = 1}^n \alpha_j^2)$, we have
    \begin{align*}
        \frac{1}{\sum_{j = 1}^n \alpha_j^2} \sum_{j: \lambda_j < R(\vecf_t)} \alpha_j^2 \lambda_j = R(\vecf_t) - \frac{\epsilon}{2}(R(\vecf_t) + \epsilon),
    \end{align*}
    which is equivalent to
    \begin{equation} \label{eq:ljminusgt}
        \frac{1}{\sumj \ajs} \sum_{j: \lambda_j < R(\vecf_t)} \ajs (\lambda_j - R(\vecf_t)) = \frac{\epsilon}{2}\cdot R(\vecf_t) - \frac{\epsilon}{2}\cdot (R(\vecf_t) + \epsilon) = - \frac{\epsilon^2}{2}.
    \end{equation}
    Now, notice that for any $\lambda_j < R(\vecf_t)$ we have
    \begin{align*}
        (\lambda_j^2 - R(\vecf_t)^2) & = (\lambda_j + R(\vecf_t)) (\lambda_j - R(\vecf_t)) \\
        & \geq 2 R(\vecf_t) (\lambda_j - R(\vecf_t)),
    \end{align*}
    since $\lambda_j - R(\vecf_t) < 0$.
    As such, we have
    \begin{align*}
        \frac{\mathrm{d}}{\mathrm{d}t} R(\vecf_t) & = 2 \left[ R(\vecf_t)^2 - \frac{1}{\sum_{j = 1}^n \alpha_j^2} \sumj \ajs \lambda_j^2\right] \\
        & = - \frac{2}{\sum_{j = 1}^n \alpha_j^2} \sumj \ajs (\lambda_j^2 - R(\vecf_t)^2) \\
        & = - \epsilon \left((R(\vecf_t) + \epsilon)^2 - R(\vecf_t)^2\right) - \frac{2}{\sumj \ajs} \sum_{j: \lambda_j < R(\vecf_t)} \alpha_j^2 (\lambda_j^2 - R(\vecf_t)^2) \\
        & \leq - \epsilon \left(2 \epsilon R(\vecf_t) + \epsilon^2\right) - \frac{2}{\sumj \ajs} \sum_{j: \lambda_j < R(\vecf_t)} 2 \ajs R(\vecf_t) (\lambda_j - R(\vecf_t)) \\
        & = - 2 \epsilon^2 R(\vecf_t) - \epsilon^3 + 2 \epsilon^2 R(\vecf_t) \\
        & = - \epsilon^3
    \end{align*}
    where the fifth line follows by \eqref{eq:ljminusgt}.
    We now consider the second case.
    We can assume that there is an eigenvalue $\lambda_i = R(\vecf_t) - \epsilon$ such that $\alpha_i^2 / (\sumj \ajs) = \epsilon / 2$ and that $(\sum_{j: \lambda_j > R(\vecf_t)} \ajs) / (\sumj \ajs) = 1 - \epsilon / 2$.
    Then, we have
    \begin{align*}
        \frac{1}{\sum_{j = 1}^n \alpha_j^2} \sum_{j: \lambda_j > R(\vecf_t)} \alpha_j^2 \lambda_j = R(\vecf_t) - \frac{\epsilon}{2}\left(R(\vecf_t) - \epsilon\right),
    \end{align*}
    which is equivalent to
    \begin{equation} \label{eq:ljminusgt2}
        \frac{1}{\sumj \ajs} \sum_{j: \lambda_j > R(\vecf_t)} \ajs (\lambda_j - R(\vecf_t)) = \frac{\epsilon}{2} R(\vecf_t) - \frac{\epsilon}{2}(R(\vecf_t) - \epsilon) = \frac{\epsilon^2}{2}.
    \end{equation}
    Now, notice that for any $\lambda_j > R(\vecf_t)$ we have
    \begin{align*}
        \lambda_j^2 - R(\vecf_t)^2 & = (\lambda_j + R(\vecf_t)) (\lambda_j - R(\vecf_t)) \\
        & \geq 2 R(\vecf_t) \cdot (\lambda_j - R(\vecf_t)).
    \end{align*}
    As such, we have
    \begin{align*}
        \frac{\mathrm{d}}{\mathrm{d} t} R(\vecf_t) & = - \frac{2}{\sum_{j = 1}^n \alpha_j^2} \sumj \ajs (\lambda_j^2 - R(\vecf_t)^2) \\
        & = - \epsilon \left((R(\vecf_t) - \epsilon)^2 - R(\vecf_t)^2\right) - \frac{2}{\sumj \ajs} \sum_{j: \lambda_j > R(\vecf_t)} \alpha_j^2 (\lambda_j^2 - R(\vecf_t)^2) \\
        & \leq - \epsilon \left(\epsilon^2 - 2 \epsilon R(\vecf_t)\right) - \frac{2}{\sumj \ajs} \sum_{j: \lambda_j < R(\vecf_t)} 2 \ajs R(\vecf_t) (\lambda_j - R(\vecf_t)) \\
        & = 2 \epsilon^2 g(t) - \epsilon^3 - 2 \epsilon^2 R(\vecf_t) \\
        & = - \epsilon^3
    \end{align*}
    where the fourth line follows by \eqref{eq:ljminusgt2}.
    These two cases establish \eqref{eq:gt_decreases} and complete the proof of the theorem.
\end{proof}

\paragraph{The eigenvector is at most the minimum eigenvector of the clique graph.}
 We now show that the eigenvector to which the algorithm converges is at most the minimum eigenvector of $\signlapg$ where $\graphg$ is the clique reduction of $\graphh$.
 We start by showing the following technical lemma.
\begin{lemma} \label{lem:cliqueedge}
    For any hypergraph $\graphh = (\vertexset_\graphh, \edgeset_\graphh, \weight)$, vector $\vecf \in \R^n$, and edge $e \in \edgeset_\graphh$, it holds that
    \[
        \left(\max_{u \in e} \vecf(u) + \min_{v \in e} \vecf(v)\right)^2 \leq \sum_{u, v \in e} \frac{1}{\rank(e) - 1} (\vecf(u) + \vecf(v))^2.
    \]
    The equality holds iff there is exactly one vertex $v \in e$ with $\vecf(v) \neq 0$ or $\rank(e) = 2$.
 
\end{lemma}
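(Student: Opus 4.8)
Set $r=\rank(e)$ (we may assume $r\ge 2$), and let $a=\max_{u\in e}\vecf(u)$ and $b=\min_{u\in e}\vecf(u)$, so that every $w\in e$ satisfies $b\le\vecf(w)\le a$ and the left-hand side is exactly $(a+b)^2$. The plan is first to dispose of $r=2$ — there the single pair $\{u,v\}\subseteq e$ makes the right-hand side equal to $(a+b)^2$ as well (since $\tfrac1{r-1}=1$), which is the $\rank(e)=2$ equality case — and then to reduce the case $r\ge 3$ to the one-variable inequality
\[
(a+x)^2+(x+b)^2\ \ge\ (a+b)^2 \qquad\text{for all } x\in[b,a].
\]
I would prove this by splitting on the sign of $a+b$: if $a+b\ge 0$, then $x\ge b$ gives $a+x\ge a+b\ge 0$, hence $(a+x)^2\ge(a+b)^2$; if $a+b<0$, then $x\le a$ gives $x+b\le a+b<0$, hence $(x+b)^2\ge(a+b)^2$. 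In either case the remaining square on the left is simply dropped. Tracking when these estimates are tight shows that equality holds precisely when $x=0$ and one of $a,b$ vanishes.

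Next I would fix a vertex $\hat u\in e$ with $\vecf(\hat u)=a$ and a vertex $\hat v\in e\setminus\{\hat u\}$ with $\vecf(\hat v)=b$; this is possible because the two are automatically distinct when $a>b$, and when $a=b$ all $r\ge 2$ vertices of $e$ share that value. Among the $\binom r2$ distinct vertex pairs of $e$, I would retain only the $2r-3$ pairs $\{\hat u,\hat v\}$ together with $\{\hat u,w\}$ and $\{w,\hat v\}$ for each of the $r-2$ vertices $w\in e\setminus\{\hat u,\hat v\}$; these are pairwise distinct, and $2r-3\le\binom r2$ exactly because $r\ge 3$. Dropping the remaining (nonnegative) terms and applying the one-variable claim with $x=\vecf(w)$ to each middle vertex yields
\[
\sum_{\{u,v\}\subseteq e}(\vecf(u)+\vecf(v))^2\ \ge\ (a+b)^2+\sum_{w\in e\setminus\{\hat u,\hat v\}}\bigl[(a+\vecf(w))^2+(\vecf(w)+b)^2\bigr]\ \ge\ (r-1)(a+b)^2,
\]
and dividing by $r-1$ completes the inequality.

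For the equality characterisation when $r\ge 3$, I would argue that equality above forces every dropped term to vanish and every use of the one-variable claim to be tight; the latter, applied to each middle vertex $w$, forces $\vecf(w)=0$ and forces $a=0$ or $b=0$, so (say $b=0$, the other case being symmetric) the edge has $r-1$ vertices of value $0$ and the single vertex $\hat u$ of value $a$ — i.e.\ at most one vertex of $e$ carries a nonzero value. The converse is a one-line computation: if only one vertex is nonzero with value $c$, both sides equal $c^2$. The only genuinely delicate points I anticipate are bookkeeping: checking that the $2r-3$ retained pairs are distinct and contained in the full pair set (this is exactly where $r\ge 3$ is needed), and carrying the tightness conditions of the one-variable claim back to the stated ``one nonzero vertex'' condition — with the fully degenerate case $\vecf|_e\equiv 0$ being the only spot where ``exactly one'' should be read as ``at most one''.
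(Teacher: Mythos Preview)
Your proof is correct and hinges on the same one--variable inequality $(a+x)^2+(x+b)^2\ge(a+b)^2$ for $x\in[b,a]$ that the paper isolates (their display~\eqref{eq:stronginequality}). The difference is organizational: the paper argues by induction on the number $k$ of vertices, showing $C_k\ge(a+b)^2$ and at each step peeling off the new vertex $u_{k+1}$, dropping its cross terms with the earlier middle vertices, and applying the one--variable bound once. You instead argue directly: keep the $2r-3$ pairs forming the double star through $\hat u$ and $\hat v$, drop the middle--middle pairs in one go, and apply the one--variable bound $r-2$ times. Unwinding the paper's induction shows that both arguments drop exactly the same terms and invoke the one--variable bound on exactly the same vertices, so the content is identical; your packaging is simply flatter and avoids the induction scaffolding. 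Your remark that the fully degenerate case $\vecf|_e\equiv 0$ forces ``exactly one'' to be read as ``at most one'' is also correct and flags a genuine (if harmless) imprecision in the paper's statement of the equality condition.
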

\begin{proof}
    We will consider some ordering of the vertices in $e$,
    \[
    u_1, u_2, \ldots, u_{\rank(e)},
    \]
    such that $u_1 = \argmax_{u \in e} \vecf(u)$ and $u_2 = \argmin_{u \in e} \vecf(u)$ and the remaining vertices are ordered arbitrarily. 
    Then, for any $2 \leq k \leq \rank(e)$, we define
    \[
        C_k = \sum_{u, v \in \{u_1, \ldots, u_k\}} \frac{1}{k - 1} (\vecf(u) + \vecf(v))^2
    \]
    and we will show by induction on $k$ that 
    \begin{equation} \label{eq:induction}
    C_k \geq \left(\max_{u \in e} \vecf(u) + \min_{v \in e} \vecf(v)\right)^2
    \end{equation}
    for all $2 \leq k \leq \rank(e)$ with equality iff $k = 2$ or there is exactly one vertex $u_i \in e$ with $\vecf(v) \neq 0$.
    The lemma follows by setting $k = \rank(e)$.

    The base case when $k = 2$ follows trivially by the definitions and the choice of $u_1$ and $u_2$.

    For the inductive step, we assume that~\eqref{eq:induction} holds for some $k$ and will show that it holds for $k + 1$. 
    We have that
    \[
        C_{k + 1} = \sum_{u, v \in \{u_1, \ldots, u_{k+1}\}} \frac{1}{k}\cdot (\vecf(u) + \vecf(v))^2,
    \]
    which is equivalent to
    \begin{align*}
        C_{k + 1} & = \frac{1}{k} \sum_{i = 1}^k (\vecf(u_i) + \vecf(u_{k+1}))^2 + \frac{1}{k} \sum_{u, v \in \{u_1, \ldots, u_k\}} (\vecf(u) + \vecf(v))^2 \\
        & = \frac{1}{k} \sum_{i = 1}^k (\vecf(u_i) + \vecf(u_{k + 1}))^2 + \frac{k - 1}{k} C_k \\
        & \geq \left(1 - \frac{1}{k}\right)\left(\max_{u \in e}\vecf(u) + \min_{v \in e} \vecf(v)\right)^2 + \frac{1}{k} \sum_{i = 1}^k (\vecf(u_1) + \vecf(u_{k + 1}))^2
    \end{align*}
    where the final inequality holds by the induction hypothesis.
    Therefore, it is sufficient to show that
    \[
        \sum_{i = 1}^k \left(\vecf(u_i) + \vecf(u_{k + 1})\right)^2 \geq \left(\max_{u \in e} \vecf(u) + \min_{v \in e} \vecf(v)\right)^2.
    \]
    We will instead show the stronger fact that
    \begin{equation} \label{eq:stronginequality}
        \left(\max_{v \in e}\vecf(v) + \vecf(u_{k + 1})\right)^2 +
        \left(\min_{v \in e}\vecf(v) + \vecf(u_{k + 1})\right)^2 
        \geq
        \left(\max_{u \in e}\vecf(u) + \min_{v \in e}\vecf(v)\right)^2.
    \end{equation}
    The proof is by case distinction.
    The first case is when $\sign(\max_{u \in e}\vecf(v)) = \sign(\min_{u \in e}\vecf(u))$.
    Assume w.l.o.g.\ that the sign is positive.
    Then, since $\vecf(u_{k+1}) \geq \min_{v \in e} \vecf(v)$, we have
    \[
        \left(\max_{v \in e} \vecf(v) + \vecf(u_{k + 1})\right)^2 \geq \left(\max_{v \in e} \vecf(v) + \min_{u \in e} \vecf(v)\right)^2
    \]
    and~\eqref{eq:stronginequality} holds.
    Moreover, the inequality is strict if $\abs{\min_{u \in e}\vecf(u)} > 0$ or $\abs{\vecf(u_{k+1})} > 0$.

    For the second case, we assume that $\sign(\min_{u \in e}\vecf(u)) \neq \sign(\max_{v \in e}\vecf(v))$.
    Expanding~\eqref{eq:stronginequality}, we would like to show
    \begin{align*}
        & \left(\max_{u \in e} \vecf(u)\right)^2 + \left(\min_{v \in e}\vecf(v)\right)^2 + 2 \vecf(u_{k + 1})\left(\max_{u \in e}\vecf(u)\right) + 2 \vecf(u_{k + 1})\left(\min_{v \in e}\vecf(v)\right) + 2 \vecf(u_{k + 1})^2 \\
        & \geq \left(\max_{u \in e}\vecf(u)\right)^2 + \left(\min_{u \in e} \vecf(u)\right)^2 - 2\left(\max_{u \in e}\vecf(u)\right)\abs{\min_{u \in e}\vecf(u)},
    \end{align*}
    which is equivalent to
    \[
        2 \vecf(u_{k + 1})^2 + 2 \vecf(u_{k + 1})\left(\max_{u \in e}\vecf(u)\right) + 2 \vecf(u_{k + 1})\left(\min_{v \in e}\vecf(v)\right) \geq -2 \left(\max_{u \in e}\vecf(u)\right) \abs{\min_{v \in e}\vecf(v)}.
    \]
    Notice that exactly one of the terms on the left hand side is negative.
    Recalling that $\min_{u \in e}\vecf(u) \leq \vecf(u_{k + 1}) \leq \max_{v \in e} \vecf(v)$, it is clear that
    \begin{itemize}
        \item if $\vecf(u_{k+1}) < 0$, then $-2\left(\max_{u \in e}\vecf(u)\right)\abs{\min_{v \in e}\vecf(v)} \leq 2 \vecf(u_{k + 1})\left(\max_{v \in e}\vecf(v)\right) \leq 0$ and the inequality holds.
        \item if $\vecf(u_{k+1}) \geq 0$, then $-2\left(\max_{u \in e}\vecf(u)\right)\abs{\min_{v \in e}\vecf(v)} \leq 2 \vecf(u_{k + 1})\left(\min_{u \in e}\vecf(u)\right) \leq 0$ and the inequality holds.
    \end{itemize}
    Moreover, in both cases the inequality is strict if $-2 \left(\max_{v \in e}\vecf(v)\right) \abs{\min_{u \in e}\vecf(u)}< 0$ or $\abs{\vecf(u_{k_1})} > 0$.
\end{proof}

 Now, we can show that we always find an eigenvector which is at most the minimum eigenvector of the clique reduction.

\begin{lemma} \label{lem:bettereigenvalue}
        For any hypergraph $\graphh = (\vertexset_\graphh, \edgeset_\graphh, \weight)$ with clique reduction $\graphg$, if $\vecf$ is the eigenvector corresponding to $\lambda_1(\degm_\graphg^{-1} \signlapg)$, then
        \[
            \frac{\vecf^\transpose \signlaph \vecf}{\vecf^\transpose \degm_\graphh \vecf} \leq \lambda_1(\degm_\graphg^{-1} \signlapg)
        \]
        and the inequality is strict if $\min_{e \in \edgeset_\graphh} \rank(e) > 2$.
\end{lemma}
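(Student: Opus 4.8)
The plan is to compare the quadratic form of $\signlaph$ with that of $\signlapg$ hyperedge by hyperedge and then invoke Lemma~\ref{lem:cliqueedge}. Recall from Section~\ref{sec:algorithm} that for every $\vecf\in\R^n$,
\[
\vecf^\transpose\signlaph\vecf = \sum_{e\in\edgeset_\graphh}\weight_\graphh(e)\left(\max_{u\in e}\vecf(u)+\min_{v\in e}\vecf(v)\right)^2 = \sum_{e\in\edgeset_\graphh}\weight_\graphh(e)\,\discrep_{\vecf}(e)^2 ,
\]
and that in the clique reduction every hyperedge $e$ is replaced by a clique on its $\rank(e)$ vertices in which each edge carries weight $\weight_\graphh(e)/(\rank(e)-1)$. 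A short degree count shows $\deg_\graphg(u)=\sum_{e\ni u}\weight_\graphh(e)=\deg_\graphh(u)$ for every $u\in\vertexset_\graphh$, so $\degm_\graphg=\degm_\graphh$, and expanding the graph quadratic form hyperedge by hyperedge gives
\[
\vecf^\transpose\signlapg\vecf = \sum_{\{u,v\}\in\edgeset_\graphg}\weight_\graphg(u,v)(\vecf(u)+\vecf(v))^2 = \sum_{e\in\edgeset_\graphh}\weight_\graphh(e)\sum_{u,v\in e}\frac{1}{\rank(e)-1}(\vecf(u)+\vecf(v))^2 .
\]

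I would then apply Lemma~\ref{lem:cliqueedge} to each hyperedge $e$, multiply the resulting inequality by $\weight_\graphh(e)>0$, and sum over $e\in\edgeset_\graphh$; this gives $\vecf^\transpose\signlaph\vecf\le\vecf^\transpose\signlapg\vecf$. Dividing both sides by $\vecf^\transpose\degm_\graphh\vecf=\vecf^\transpose\degm_\graphg\vecf>0$ and using that $\vecf$ is the eigenvector of $\degm_\graphg^{-1}\signlapg$ associated with $\lambda_1(\degm_\graphg^{-1}\signlapg)$ — so $\signlapg\vecf=\lambda_1(\degm_\graphg^{-1}\signlapg)\,\degm_\graphg\vecf$ and hence $\vecf^\transpose\signlapg\vecf/\vecf^\transpose\degm_\graphg\vecf=\lambda_1(\degm_\graphg^{-1}\signlapg)$ — yields the claimed inequality.

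For the strict case, assume $\min_{e\in\edgeset_\graphh}\rank(e)>2$ and suppose for contradiction that equality holds. Each summand $\weight_\graphh(e)\bigl[\discrep_{\vecf}(e)^2-\sum_{u,v\in e}\tfrac{1}{\rank(e)-1}(\vecf(u)+\vecf(v))^2\bigr]$ is nonpositive by Lemma~\ref{lem:cliqueedge} and $\weight_\graphh(e)>0$, so equality forces equality in Lemma~\ref{lem:cliqueedge} for every hyperedge $e$; since $\rank(e)>2$, the equality condition of that lemma says every $e$ contains exactly one vertex with nonzero $\vecf$-value. Hence, for any vertex $u$ with $\vecf(u)\neq 0$, every $\graphg$-neighbour of $u$ shares a hyperedge with $u$ and therefore has $\vecf$-value $0$, so $(\adj_\graphg\vecf)(u)=0$ and $(\degm_\graphg^{-1}\signlapg\vecf)(u)=\vecf(u)$. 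Since $\vecf\neq 0$ is a $\lambda_1(\degm_\graphg^{-1}\signlapg)$-eigenvector, this forces $\lambda_1(\degm_\graphg^{-1}\signlapg)=1$.

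It remains to contradict $\lambda_1(\degm_\graphg^{-1}\signlapg)=1$. The matrix $\degm_\graphg^{-1}\signlapg$ is similar to $\signlapn_\graphg=\identity+\degmhalfneg_\graphg\adj_\graphg\degmhalfneg_\graphg$, so its smallest eigenvalue equals $1+\lambda_{\min}(\degmhalfneg_\graphg\adj_\graphg\degmhalfneg_\graphg)$; the normalised adjacency matrix $\degmhalfneg_\graphg\adj_\graphg\degmhalfneg_\graphg$ has trace $0$ and, since $\edgeset_\graphh\neq\emptyset$ (which we may assume, as otherwise the statement is vacuous), it is nonzero, so its smallest eigenvalue is strictly negative and $\lambda_1(\degm_\graphg^{-1}\signlapg)<1$, a contradiction. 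I expect the main obstacle to be this last step — extracting the global consequence of equality from the per-hyperedge equality case of Lemma~\ref{lem:cliqueedge} and then ruling it out — together with the (largely implicit) bookkeeping that the clique reduction is degree-preserving under the edge weighting forced by Lemma~\ref{lem:cliqueedge}.
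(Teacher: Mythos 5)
Your proposal is correct and follows essentially the same route as the paper's proof: decompose the clique quadratic form hyperedge by hyperedge, apply Lemma~\ref{lem:cliqueedge} termwise, use that $\degm_\graphg=\degm_\graphh$, and for strictness argue that equality would force $\lambda_1(\degm_\graphg^{-1}\signlapg)=1$ and then refute this via a trace argument. The only cosmetic differences are that you derive $\lambda_1=1$ by inspecting a single coordinate of the eigenvector equation (the paper evaluates the Rayleigh quotient directly), and you obtain the final contradiction from $\tr(\degmhalfneg_\graphg\adj_\graphg\degmhalfneg_\graphg)=0$ with the matrix nonzero, rather than the paper's equivalent observation that the trace of $\degm_\graphg^{-1}\signlapg$ is $n$ while $\lambda_n=2$ would force it to be at least $n+1$.
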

\begin{proof}
    Since $\lambda_1(\degm_\graphg^{-1} \signlapg) = (\vecf^\transpose \signlapg \vecf) / (\vecf^\transpose \degm_\graphg \vecf)$ and $(\vecf^\transpose \degm_\graphg \vecf) = (\vecf^\transpose \degm_\graphh \vecf)$ by the construction of the clique graph, it suffices to show that
    \[
        \vecf^\transpose \signlaph \vecf \leq \vecf^\transpose \signlapg \vecf,
    \]
    which is equivalent to
    \begin{align*}
        \sum_{e \in \edgeset_\graphh} \weight(e) \left(\max_{v \in e} \vecf(v) + \min_{u \in e}\vecf(u)\right)^2 & \leq \sum_{(u, v) \in \edgeset_\graphg} \weight_\graphg(u, v) (\vecf(u) + \vecf(v))^2 \\
        & = \sum_{e \in \edgeset_\graphh} \weight(e) \sum_{u, v \in e} \frac{1}{\rank(e) - 1} (\vecf(u) + \vecf(v))^2
    \end{align*}
    which holds by Lemma~\ref{lem:cliqueedge}.
    
    Furthermore, if $\min_{e \in \edgeset_\graphh} \rank(e) > 2$, then by Lemma~\ref{lem:cliqueedge} the inequality is strict unless every edge $e \in \edgeset_\graphh$ contains at most one $v \in e$ with $\vecf(v) \neq 0$.
    Suppose the inequality is not strict, then it must be that
    \begin{align*}
        \lambda_1(\degm_\graphg^{-1} \signlapg) & = \frac{\sum_{(u, v) \in \edgeset_\graphg} \weight_\graphg(u, v)(\vecf(v) + \vecf(u))^2}{\sum_{v \in \vertexset_\graphg} \deg_\graphg(v) \vecf(v)^2} \\
        & =  \frac{\sum_{v \in \vertexset_\graphg} \deg_\graphg(v) \vecf(v)^2}{\sum_{v \in \vertexset_\graphg} \deg_\graphg(v) \vecf(v)^2} \\
        & = 1,
    \end{align*}
    since for every edge $(u, v) \in \edgeset_\graphg$, at most one of $\vecf(u)$ or $\vecf(v)$ is not equal to $0$.
    This cannot be the case, since it is a well known fact that the maximum eigenvalue $\lambda_n(\degm_\graphg^{-1} \signlapg) = 2$ and so $\sum_{i = 1}^n \lambda_i(\degm_\graphg^{-1} \signlapg) \geq (n - 1) + 2 = n + 1$ which contradicts the fact that the trace $\tr(\degm_\graphg^{-1} \signlapg)$ is equal to $n$. This proves the final statement of the lemma.
\end{proof}

\subsection{Cheeger-type inequality for hypergraph bipartiteness}
Before proving Lemma~\ref{lem:rulesimplydiffusion}, we will prove some intermediate facts about the new hypergraph operator $\signlaph$ which will allow us to show that the operator $\signlaph$ has a well-defined minimum eigenvector.
Given a hypergraph $\graphh = (\vertexset_\graphh, \edgeset_\graphh, \weight)$, for any edge $e \in \edgeset_\graphh$ and weighted measure vector $\vecf_t$, let $r_e^{\sets} \triangleq \max_{v \in \sfte} \{\vecr(v)\}$ and $r_e^{\set{I}} \triangleq \min_{v \in \ifte} \{\vecr(v)\}$,  and recall that $c_{\vecf_t}(e) = \weight(e) \abs{\discrep_{\vecf_t}(e)}$.
\begin{lemma} \label{lem:rnorm}
Given a hypergraph $\graphh = (\vertexset_\graphh, \edgeset_\graphh, \weight)$ and normalised measure vector $\vecf_t$, let
\[
    \vecr = \dfdt = - \degm_\graphh^{-1} \signlaph \vecf_t.
\]
Then, it holds that 
    \[
        \norm{\vecr}_\weight^2 = - \sum_{e \in \edgeset} \weight(e) \discrep_{\vecf_t}(e) \left(r_e^{\sets} + r_e^{\set{I}}\right).
    \]
\end{lemma}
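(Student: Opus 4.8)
The plan is to expand $\norm{\vecr}_\weight^2$ through the edgewise decomposition $\vecr(v) = \sum_{e \in \edgeset_\graphh} \vecr_e(v)$ guaranteed by Lemma~\ref{lem:rulesimplydiffusion}, and then to evaluate the contribution of each hyperedge separately using Rules~(1) and~(2). The point is that Rule~(2) makes $\vecr(v)$ constant across exactly the vertices of $\sfte$ (or $\ifte$) that carry nonzero flow from $e$, which is precisely what is needed to collapse the weighted sums, and Rule~(1) then evaluates the residual sum in terms of $\weight(e)\discrep_{\vecf_t}(e)$.

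First I would write
\[
\norm{\vecr}_\weight^2 = \vecr^\transpose \degm_\graphh \vecr = \sum_{v \in \vertexset_\graphh} \deg(v)\, \vecr(v)^2 = \sum_{e \in \edgeset_\graphh} \sum_{v \in \vertexset_\graphh} \deg(v)\, \vecr(v)\, \vecr_e(v),
\]
the last equality by substituting $\vecr(v) = \sum_e \vecr_e(v)$ and swapping the order of summation. Then I fix a hyperedge $e$. Since $\vecr_e(v) = 0$ unless $v \in \sfte \cup \ifte$, the inner sum ranges only over $\sfte \cup \ifte$. A hyperedge with $\discrep_{\vecf_t}(e) = 0$ lies in none of the families $\cals_\setu^{\pm}$ or $\cali_\setu^{\pm}$ and so is never assigned flow by Algorithm~\ref{algo:computechangerate}, i.e.\ $\vecr_e \equiv 0$; such an edge contributes $0$ to both sides of the claimed identity and may be discarded. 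For the remaining edges I would first treat the case $\sfte \cap \ifte = \emptyset$, so that the inner sum splits as $\sum_{v \in \sfte}(\cdot) + \sum_{v \in \ifte}(\cdot)$.

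The crux is to collapse each of these two sums. In $\sum_{v \in \sfte} \deg(v)\,\vecr(v)\,\vecr_e(v)$ the terms with $\vecr_e(v) = 0$ vanish, and whenever $\vecr_e(v) \neq 0$ Rule~(2a) pins $\vecr(v)$ to $r_e^{\sets}$: if $\discrep_{\vecf_t}(e) > 0$ this is immediate since $\vecr(v) = \max_{u \in \sfte}\{\vecr(u)\}$, and if $\discrep_{\vecf_t}(e) < 0$ then all of $\sfte$ carries a common value, which is therefore $r_e^{\sets}$. Hence $r_e^{\sets}$ factors out and, reinstating the vanishing terms, $\sum_{v \in \sfte} \deg(v)\,\vecr(v)\,\vecr_e(v) = r_e^{\sets} \sum_{v \in \sfte} \deg(v)\,\vecr_e(v) = - \weight(e)\, \discrep_{\vecf_t}(e)\, r_e^{\sets}$, the last step by Rule~(1). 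Running the identical argument with $\ifte$ in place of $\sfte$ and Rule~(2b) in place of Rule~(2a) (the $\discrep_{\vecf_t}(e) < 0$ case now giving $\vecr(v) = \min_{u \in \ifte}\{\vecr(u)\} = r_e^{\set{I}}$, the $\discrep_{\vecf_t}(e) > 0$ case a common value) yields $\sum_{v \in \ifte}\deg(v)\,\vecr(v)\,\vecr_e(v) = - \weight(e)\,\discrep_{\vecf_t}(e)\, r_e^{\set{I}}$. Adding these and summing over $e \in \edgeset_\graphh$ gives $\norm{\vecr}_\weight^2 = - \sum_{e \in \edgeset_\graphh} \weight(e)\, \discrep_{\vecf_t}(e)\, (r_e^{\sets} + r_e^{\set{I}})$.

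The step I expect to be most delicate is this collapsing argument: one must verify that $r_e^{\sets}$ (resp.\ $r_e^{\set{I}}$) may legitimately be pulled out of a sum formally indexed by all of $\sfte$ (resp.\ $\ifte$) — which works only because Rule~(2) equalises $\vecr(v)$ over precisely the vertices carrying nonzero flow from $e$, the rest being zero — and both sub-cases of Rule~(2), according to the sign of $\discrep_{\vecf_t}(e)$, have to be checked. The one genuinely separate case is a hyperedge all of whose vertices share a single $\vecf_t$-value $a \neq 0$, so that $\sfte = \ifte = e$: then $e$ sits on both the $\sets$-side and the $\set{I}$-side, Rule~(2) forces every vertex of $e$ to carry one common value $\rho$ so that $r_e^{\sets} = r_e^{\set{I}} = \rho$, and the doubled flow through $e$ (it meets the Rule~(1) constraint for the same vertex set on both sides) is exactly matched by the factor $r_e^{\sets} + r_e^{\set{I}} = 2\rho$ on the right-hand side; I would dispatch this case by the same two applications of Rule~(1) over the coinciding sets.
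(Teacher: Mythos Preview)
Your proof is correct and takes a genuinely different route from the paper's. The paper argues through the algorithmic structure of Algorithm~\ref{algo:computechangerate}: it partitions the vertices into the level sets $\setp$ produced by the linear program, uses that every $u \in \setp$ has $\vecr(u) = \delta(\setp)$, writes $\sum_{u \in \setp} \deg(u)\vecr(u)^2 = \vol(\setp)\delta(\setp)^2 = C(\setp)\delta(\setp)$, and then identifies $\delta(\setp)$ with $r_e^{\sets}$ or $r_e^{\set{I}}$ according to which of the four edge classes $\spp{U}{P}, \ipp{U}{P}, \spm{U}{P}, \ipm{U}{P}$ the edge $e$ belongs to. Your argument instead bypasses the algorithm entirely: you decompose $\norm{\vecr}_\weight^2$ edge-by-edge via $\vecr(v) = \sum_e \vecr_e(v)$ and invoke Rules~(1) and~(2) directly to collapse each edge's contribution. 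This is cleaner and more self-contained, since it relies only on the abstract properties the diffusion must satisfy rather than on the particular partition the linear program returns; the paper's version, on the other hand, falls out almost for free once the machinery of Section~\ref{sec:thm2proof} is in place, so it is natural in context. Your handling of the degenerate case $\sfte = \ifte$ is slightly informal (the phrase ``doubled flow'' presumes a reading of Rule~(1) that the paper does not make explicit either), but the intended argument is sound and matches what the formula requires.
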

\begin{proof}
Let $\setp \subset \vertexset$ be one of the densest vertex sets defined with respect to the solution of the linear program~\eqref{eq:lp}. By the description of Algorithm~\ref{algo:computechangerate}, we have $\vecr(u)=\delta(\setp)$ for every $u\in \setp$, and therefore
\begin{align*}
    \sum_{u \in \setp} \deg(u) \vecr(u)^2 & = \vol(\setp)\cdot \delta(\setp)^2 \\
    & = \left(c_{\vecf_t}\left(\spp{U}{P}\right) + c_{\vecf_t}\left(\ipp{U}{P}\right) - c_{\vecf_t}\left(\spm{U}{P}\right) - c_{\vecf_t}\left(\ipm{U}{P}\right)\right)\cdot \delta(\setp) \\
     & = \left( \sum_{e\in\spp{U}{P}} c_{\vecf_t}(e) + \sum_{e\in\ipp{U}{P}} c_{\vecf_t}(e) - \sum_{e\in\spm{U}{P}} c_{\vecf_t}(e) - \sum_{e\in\ipm{U}{P}} c_{\vecf_t}(e) \right)\cdot\delta(\setp)\\
     & =   \sum_{e\in\spp{U}{P}} c_{\vecf_t}(e) \cdot r_e^{\sets} + \sum_{e\in\ipp{U}{P}} c_{\vecf_t}(e)\cdot r_e^{\set{I}} - \sum_{e\in\spm{U}{P}} c_{\vecf_t}(e)\cdot r_e^{\sets} - \sum_{e\in\ipm{U}{P}} c_{\vecf_t}(e)\cdot r_e^{\set{I}}.
\end{align*}
Since each vertex is included in exactly one set $\setp$ and each edge will appear either in one each of $\spp{U}{P}$ and $\ipp{U}{P}$ or in one each of $\spm{U}{P}$ and $\ipm{U}{P}$, it holds that 
    \begin{align*}
    \|\vecr\|^2_\weight & = 
        \sum_{v \in \vertexset} \deg(v) \vecr(v)^2\\
        & = \sum_{\setp} \sum_{v\in \setp} \deg(v) \vecr(v)^2\\
        & = \sum_{\setp} \left( \sum_{e\in\spp{U}{P}} c_{\vecf_t}(e) \cdot r_e^\sets + \sum_{e\in\ipp{U}{P}} c_{\vecf_t}(e)\cdot r_e^{\set{I}} - \sum_{e\in\spm{U}{P}} c_{\vecf_t}(e)\cdot r_e^\sets - \sum_{e\in\ipm{U}{P}} c_{\vecf_t}(e)\cdot r_e^{\set{I}}\right) \\
         & = - \sum_{e \in \edgeset} \weight(e) \discrep_{\vecf_t}(e) \left(r_e^\sets + r_e^{\set{I}}\right),
    \end{align*}
    which proves the lemma.
\end{proof}

Next, we define $\gamma_1 = \min_\vecf D(\vecf)$ and show that any vector $\vecf$ that satisfies $\gamma_1 = D(\vecf)$  is an eigenvector of $\signlaph$ with eigenvalue $\gamma_1$. 
We start by showing that the Raleigh quotient of the new operator is equivalent to the discrepancy ratio of the hypergraph.
\begin{lemma} \label{lem:rqdisc}
    For any hypergraph $\graphh$ and vector $\vecf_t \in \R^n$, it holds that $D(\vecf_t) = R_{\signlaph}(\vecf_t)$.
\end{lemma}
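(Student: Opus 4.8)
The plan is to treat this as a direct computation, since both sides are explicit ratios and the only real content is matching numerators. First I would note that the two denominators already coincide: by the definition of the weighted norm, $\norm{\vecf_t}_\weight^2 = \inner{\vecf_t}{\vecf_t}_\weight = \vecf_t^\transpose \degm_\graphh \vecf_t = \sum_{v \in \vertexset_\graphh} \deg_\graphh(v)\vecf_t(v)^2$, which is exactly the denominator of $D(\vecf_t)$. Hence it suffices to prove the numerator identity
\[
    \vecf_t^\transpose \signlaph \vecf_t = \sum_{e \in \edgeset_\graphh} \weight(e)\,\discrep_{\vecf_t}(e)^2 .
\]

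To establish this I would invoke Lemma~\ref{lem:rulesimplydiffusion}, which provides the unique vector $\vecr = \mathrm{d}\vecf_t/\mathrm{d}t = -\degm_\graphh^{-1}\signlaph\vecf_t$ together with the decomposition $\vecr(v) = \sum_{e \in \edgeset_\graphh}\vecr_e(v)$, where $\vecr_e(v) = 0$ unless $v \in \sfte \cup \ifte$. Thus $\signlaph\vecf_t = -\degm_\graphh\vecr$, so $\vecf_t^\transpose\signlaph\vecf_t = -\inner{\vecf_t}{\vecr}_\weight = -\sum_{v \in \vertexset_\graphh}\deg_\graphh(v)\vecf_t(v)\vecr(v)$. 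Exchanging the order of summation gives $\vecf_t^\transpose\signlaph\vecf_t = -\sum_{e \in \edgeset_\graphh}\sum_{v \in \vertexset_\graphh}\vecf_t(v)\deg_\graphh(v)\vecr_e(v)$, and for each fixed $e$ only vertices in $\sfte \cup \ifte$ contribute. Since every $v \in \sfte$ has $\vecf_t(v) = \max_{u\in e}\vecf_t(u)$ and every $v \in \ifte$ has $\vecf_t(v) = \min_{u\in e}\vecf_t(u)$, the inner sum equals $\big(\max_{u\in e}\vecf_t(u)\big)\sum_{v\in\sfte}\deg_\graphh(v)\vecr_e(v) + \big(\min_{u\in e}\vecf_t(u)\big)\sum_{v\in\ifte}\deg_\graphh(v)\vecr_e(v)$. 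Now Rule~(1) states that both of these inner sums equal $-\weight(e)\discrep_{\vecf_t}(e)$, and $\max_{u\in e}\vecf_t(u) + \min_{u\in e}\vecf_t(u) = \discrep_{\vecf_t}(e)$ by definition, so the $e$-term equals $-\weight(e)\discrep_{\vecf_t}(e)^2$. Summing over $e$, negating, and dividing by $\norm{\vecf_t}_\weight^2$ yields $R_{\signlaph}(\vecf_t) = D(\vecf_t)$.

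There is essentially no hard step; the only points requiring care are (i) appealing to Lemma~\ref{lem:rulesimplydiffusion} so that $\signlaph\vecf_t$, hence $\vecf_t^\transpose\signlaph\vecf_t$, is well defined, and (ii) using that $\vecr_e$ is supported on $\sfte \cup \ifte$, which is part of the diffusion setup stated before Rule~(0). (One could alternatively argue through the graph $\graphg_t$ representing $\signlaph$ at $\vecf_t$: every edge of $\graphg_t$ obtained from splitting a hyperedge $e$ joins a vertex of $\sfte$ to a vertex of $\ifte$ and the total weight of such edges is $\weight(e)$, so $\vecf_t^\transpose\signlap_{\graphg_t}\vecf_t = \sum_e \weight(e)\discrep_{\vecf_t}(e)^2$; but the Rule~(1) computation above is shorter and self-contained.)
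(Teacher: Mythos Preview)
Your proof is correct, but it takes a different route from the paper's. The paper argues through the graph representation: it writes $\signlaph\vecf_t = (\degm_{\graphg_t}+\adj_{\graphg_t})\vecf_t$ for the graph $\graphg_t$ obtained by splitting the weight of each hyperedge $e$ among pairs in $\sfte\times\ifte$, computes the quadratic form as $\sum_{\{u,v\}\in\edgeset_{\graphg_t}}\weight_{\graphg_t}(u,v)(\vecf_t(u)+\vecf_t(v))^2$, and then regroups by hyperedge, using that all such pairs share the same value $\max_{u\in e}\vecf_t(u)+\min_{v\in e}\vecf_t(v)$ and that the total split weight is $\weight(e)$. In other words, the paper's argument is exactly the ``alternative'' you sketched in parentheses.

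Your primary route instead works intrinsically with Rule~(1) and the edgewise decomposition $\vecr_e$ from Lemma~\ref{lem:rulesimplydiffusion}, avoiding any mention of $\graphg_t$. Both arguments are short and essentially equivalent in strength; the graph-based one is marginally more self-contained because it only needs the existence of $\graphg_t$ with the correct total weight per hyperedge, whereas yours unpacks the $\vecr_e$ values and Rule~(1) explicitly. Conversely, your approach is closer in spirit to how later lemmas (e.g.\ Lemma~\ref{lem:rnorm}) are proved, so it meshes well with the surrounding analysis.
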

\begin{proof}
    Since $\vecf_t^\transpose \degm_\graphh \vecf_t = \sum_{v \in \vertexset} \deg(v) \vecf_t(v)^2$, it is sufficient to show that
    \[
        \vecf_t^\transpose \signlaph \vecf_t = \sum_{e \in \edgeset_\graphh} \weight(e) \left(\max_{u \in e} \vecf_t(u) + \min_{v \in e} \vecf_t(v)\right)^2.
    \]
    Recall that for some graph $\graphg_t$, $\signlaph = \degm_{\graphg_t} + \adj_{\graphg_t}$.
    Then, we have that
    \begin{align*}
        \vecf_t^\transpose \signlaph \vecf_t & = \vecf_t^\transpose (\degm_{\graphg_t} + \adj_{\graphg_t}) \vecf_t \\
        & = \sum_{(u, v) \in \edgeset_\graphg} \weight_\graphg(u, v) (\vecf_t(u) + \vecf_t(v))^2 \\
        & = \sum_{e \in \edgeset_\graphh} \left(\sum_{(u, v) \in \sfte \times \ifte} \weight_{\graphg_t}(u, v) (\vecf_t(u) + \vecf_t(v))^2\right) \\
        & = \sum_{e \in \edgeset_\graphh} \weight(e) \left(\max_{u \in e} \vecf_t(u) + \min_{v \in e} \vecf_t(v)\right)^2, 
    \end{align*}
    which follows since the graph $\graphg_{t}$ is constructed by splitting the weight of each hyperedge $e \in \edgeset_\graphh$ between the edges $\sfte \times \ifte$.
\end{proof}

\begin{lemma} \label{lem:derivatives}
    For a hypergraph $\graphh$, operator $\signlaph$, and vector $\vecf_t$, the following statements hold:
    \begin{enumerate}
        \item $\frac{\mathrm{d}}{\mathrm{d} t} \norm{\vecf_t}_\weight^2 = -2 \vecf_t^\transpose \signlaph \vecf_t$;
        \item $\frac{\mathrm{d}}{\mathrm{d} t} (\vecf_t^\transpose \signlaph \vecf_t) = -2 \norm{\degm_\graphh^{-1} \signlaph \vecf_t}_\weight^2$;
        \item $\frac{\mathrm{d}}{\mathrm{d} t} R(\vecf_t) \leq 0$ with equality if and only if $\degm_\graphh^{-1} \signlaph \vecf_t \in \mathrm{span}(\vecf_t)$.
    \end{enumerate}
\end{lemma}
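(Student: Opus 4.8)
The plan is to prove the three identities in the stated order, since the third follows from the first two by the quotient rule.

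\textbf{Statement 1} is pure bookkeeping. Since $\norm{\vecf_t}_\weight^2 = \vecf_t^\transpose \degm_\graphh \vecf_t$ and $\degm_\graphh$ does not depend on $t$, differentiating and substituting $\mathrm{d}\vecf_t/\mathrm{d}t = -\degm_\graphh^{-1}\signlaph\vecf_t$ gives $\frac{\mathrm{d}}{\mathrm{d}t}\norm{\vecf_t}_\weight^2 = 2\vecf_t^\transpose \degm_\graphh \frac{\mathrm{d}\vecf_t}{\mathrm{d}t} = -2\vecf_t^\transpose\degm_\graphh\degm_\graphh^{-1}\signlaph\vecf_t = -2\vecf_t^\transpose\signlaph\vecf_t$.

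\textbf{Statement 2} is where the real care is needed, and I would route it through Lemma~\ref{lem:rqdisc} and Lemma~\ref{lem:rnorm} rather than through a naive product rule on $\vecf_t^\transpose\signlaph\vecf_t$ (which is problematic because the underlying graph $\graphg_t$ representing $\signlaph$ varies). By Lemma~\ref{lem:rqdisc}, $\vecf_t^\transpose\signlaph\vecf_t = \sum_{e\in\edgeset}\weight(e)\,\discrep_{\vecf_t}(e)^2$. The diffusion continuity condition (Section~\ref{sec:natural_assumption}) guarantees that over an infinitesimal interval the sets $\sfte$ and $\ifte$ do not change, so that the rate of change of $\max_{u\in e}\vecf_t(u)$ is $r_e^{\sets}$ and the rate of change of $\min_{v\in e}\vecf_t(v)$ is $r_e^{\set{I}}$; hence $\frac{\mathrm{d}}{\mathrm{d}t}\discrep_{\vecf_t}(e) = r_e^{\sets} + r_e^{\set{I}}$. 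Applying the chain rule termwise then yields $\frac{\mathrm{d}}{\mathrm{d}t}(\vecf_t^\transpose\signlaph\vecf_t) = 2\sum_{e\in\edgeset}\weight(e)\,\discrep_{\vecf_t}(e)\bigl(r_e^{\sets}+r_e^{\set{I}}\bigr)$, and Lemma~\ref{lem:rnorm}, applied with $\vecr = -\degm_\graphh^{-1}\signlaph\vecf_t$, identifies the right-hand side with $-2\norm{\vecr}_\weight^2 = -2\norm{\degm_\graphh^{-1}\signlaph\vecf_t}_\weight^2$, as required.

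\textbf{Statement 3} then follows from the quotient rule together with Cauchy--Schwarz. Write $R(\vecf_t) = A/B$ with $A = \vecf_t^\transpose\signlaph\vecf_t$ and $B = \norm{\vecf_t}_\weight^2$, and set $\vecr = -\degm_\graphh^{-1}\signlaph\vecf_t$. Statements 1 and 2 give $\dot B = -2A$ and $\dot A = -2\norm{\vecr}_\weight^2$, so $\frac{\mathrm{d}}{\mathrm{d}t}R(\vecf_t) = \frac{\dot A B - A\dot B}{B^2} = \frac{2\bigl(A^2 - B\,\norm{\vecr}_\weight^2\bigr)}{B^2}$. Since $A = \vecf_t^\transpose\signlaph\vecf_t = -\vecf_t^\transpose\degm_\graphh\degm_\graphh^{-1}\signlaph\vecf_t = -\inner{\vecf_t}{\vecr}_\weight$, Cauchy--Schwarz in the weighted inner product gives $A^2 = \inner{\vecf_t}{\vecr}_\weight^2 \leq \norm{\vecf_t}_\weight^2\,\norm{\vecr}_\weight^2 = B\,\norm{\vecr}_\weight^2$, so $\frac{\mathrm{d}}{\mathrm{d}t}R(\vecf_t)\leq 0$; equality holds precisely when Cauchy--Schwarz is tight, i.e.\ when $\vecr$ and $\vecf_t$ are linearly dependent, which (for $\vecf_t\neq 0$, the only case in which $R$ is defined) is exactly the condition $\degm_\graphh^{-1}\signlaph\vecf_t\in\mathrm{span}(\vecf_t)$. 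The main obstacle is the differentiation step inside Statement 2: the discrepancy $\discrep_{\vecf_t}(e)$ involves a maximum and a minimum over the vertices of $e$ and is only piecewise smooth in $t$, so the argument genuinely relies on the diffusion continuity condition keeping $\sfte$ and $\ifte$ fixed over an infinitesimal step; once that is granted, the rest is a direct computation plus a single application of Cauchy--Schwarz.
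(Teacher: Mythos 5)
Your proposal is correct and follows essentially the same route as the paper: Statement~1 by direct differentiation of $\vecf_t^\transpose\degm_\graphh\vecf_t$; Statement~2 by rewriting $\vecf_t^\transpose\signlaph\vecf_t$ via Lemma~\ref{lem:rqdisc}, differentiating the discrepancy terms using $r_e^{\sets}$ and $r_e^{\set{I}}$, and invoking Lemma~\ref{lem:rnorm}; and Statement~3 by the quotient rule plus Cauchy--Schwarz in the weighted inner product. The only cosmetic difference is that you introduce $A$, $B$ shorthand for the quotient-rule step where the paper writes $\vecf_t^\transpose\signlaph\vecf_t$ as $\inner{\vecf_t}{\degm_\graphh^{-1}\signlaph\vecf_t}_\weight$ and expands directly, but the substance is identical.
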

\begin{proof}
 
By definition, we have that 
\begin{align*}
    \frac{\mathrm{d} \norm{\vecf_t}_\weight^2}{\mathrm{d}t} & = \frac{\mathrm{d}}{\mathrm{d}t} \sum_{v \in \vertexset} \deg(v) \vecf_t(v)^2 \\
    & = \sum_{v \in \vertexset} \deg(v)\cdot \frac{\mathrm{d} \vecf_t(v)^2}{\mathrm{d} t} \\
    & = \sum_{v \in \vertexset} \deg(v)\cdot \frac{\mathrm{d} \vecf_t(v)^2}{\mathrm{d} \vecf_t(v)} \frac{\mathrm{d} \vecf_t(v)}{\mathrm{d}t} \\
    & = 2 \sum_{v \in \vertexset} \deg(v) \vecf_t(v)\cdot \frac{\mathrm{d} \vecf_t(v)}{\mathrm{d}t} \\
    & = 2 \inner{\vecf_t}{\dfdt}_\weight = -2 \inner{\vecf_t}{\degm_\graphh^{-1} \signlaph \vecf_t}_\weight,
\end{align*}
which proves the first statement.

For the second statement, by Lemma~\ref{lem:rqdisc} we have
\[
\vecf_t^\transpose \signlaph \vecf_t = \sum_{e \in \edgeset} \weight(e) \left(\max_{u \in e} \vecf_t(u) + \min_{v \in e} \vecf_t(v)\right)^2,
\]
and therefore 
\begin{align}
    \frac{\mathrm{d}}{\mathrm{d}t} \vecf_t^\transpose \signlaph \vecf_t & = \frac{\mathrm{d}}{\mathrm{d}t} \sum_{e \in \edgeset} \weight(e) \left(\max_{u \in e} \vecf_t(u) + \min_{v \in e} \vecf_t(v) \right)^2 \nonumber\\
    & = 2 \sum_{e \in \edgeset} \weight(e) \discrep_{\vecf_t}(e)\cdot \frac{\mathrm{d}}{\mathrm{d}t} \left(\max_{u \in e} \vecf_t(u) + \min_{v \in e} \vecf_t(v)\right)\nonumber\\
    &= 2 \sum_{e \in \edgeset} \weight(e) \discrep_{\vecf_t}(e)\cdot  \left(r_e^{\sets} + r_e^{\set{I}} \right), \label{eq:calculation1}
\end{align}
where the last equality holds by the way that all the vertices receive their $\vecr$-values by the algorithm and the definitions of $r_e^{\sets}$ and $r_e^{\set{I}}$.  On the other side, 
by definition $\vecr=- \degm_\graphh^{-1} \signlaph \vecf_t$ and so by Lemma~\ref{lem:rnorm},
\begin{equation}\label{eq:calculation2}
\norm{\degm_\graphh^{-1} \signlaph \vecf_t}_\weight^2 = 
\norm{\vecr}_\weight^2 = - \sum_{e\in \edgeset} \weight(e)\discrep_{\vecf_t}(e) \left( r_e^{\sets}+r_e^{\set{I}}
\right).
\end{equation}
By combining \eqref{eq:calculation1} with \eqref{eq:calculation2}, we have the second statement.  

 For the third statement, notice that we can write $\vecf_t^\transpose \signlaph \vecf_t$ as $\inner{\vecf_t}{\degm_\graphh^{-1} \signlaph \vecf_t}_\weight$. Then, we have that
\begin{align*}
\frac{\mathrm{d}}{\mathrm{d}t} \frac{ \langle \vecf_t, \degm_\graphh^{-1} \signlaph \vecf_t\rangle_\weight}{ \norm{\vecf_t}_\weight^2} & = \frac{1}{ \norm{\vecf_t}_\weight^2} \cdot \frac{\mathrm{d}~\langle \vecf_t, \degm_\graphh^{-1} \signlaph \vecf_t\rangle_\weight}{\mathrm{d}t}    -  \langle \vecf_t, \degm_\graphh^{-1} \signlaph \vecf_t\rangle_\weight \cdot   \frac{1}{ \norm{\vecf_t}_\weight^4} \frac{\mathrm{d}~\norm{\vecf_t}^2_\weight}{\mathrm{d}t} \\
& = -\frac{1}{\norm{\vecf_t}_\weight^4}\cdot \left(2\cdot \norm{\vecf_t}_\weight^2\cdot  \norm{\degm_\graphh^{-1} \signlaph \vecf_t}_\weight^2 + \left\langle \vecf_t, \degm_\graphh^{-1} \signlaph \vecf_t\right\rangle_\weight  \cdot \frac{\mathrm{d}~\norm{\vecf_t}^2_\weight}{\mathrm{d}t} \right) \\
& = - \frac{2}{\norm{\vecf_t}_\weight^4}\cdot \left( \norm{\vecf_t}_\weight^2\cdot  \norm{\degm_\graphh^{-1} \signlaph \vecf_t}_\weight^2 -\left\langle \vecf_t, \degm_\graphh^{-1} \signlaph \vecf_t\right\rangle_\weight^2  \right)\\
& \leq 0,
\end{align*}
where the last inequality holds by the Cauchy-Schwarz inequality on the inner product $\inner{\cdot}{\cdot}_\weight$ with the equality if and only if $\degm_\graphh^{-1} \signlaph \vecf_t\in\mathrm{span}(\vecf_t)$.
\end{proof}

This allows us to establish the following key fact.

\begin{lemma} \label{lem:eigenvalueexists}
    For any hypergraph $\graphh$, $\gamma_1 = \min_{\vecf} D(\vecf)$ is an eigenvalue of $\signlaph$ and any minimiser $\vecf$ is its corresponding eigenvector.
\end{lemma}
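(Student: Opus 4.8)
The plan is to exploit the two facts established in Lemma~\ref{lem:derivatives}: that the diffusion process can never increase the Rayleigh quotient $R_{\signlaph}$, and that it keeps the quotient momentarily fixed only at points $\vecf_t$ with $\degm_\graphh^{-1}\signlaph\vecf_t \in \mathrm{span}(\vecf_t)$. So if we start the diffusion from a minimiser of $D$, the quotient cannot decrease any further, and hence the starting point must already be such a fixed point, i.e.\ an eigenvector.

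Concretely, let $\vecf$ be any vector attaining $D(\vecf)=\gamma_1$ (necessarily $\vecf$ is not the zero vector, since $D$ is defined only for nonzero arguments). By Lemma~\ref{lem:rqdisc}, $R_{\signlaph}(\vecg)=D(\vecg)$ for every $\vecg$, so $\gamma_1 = \min_{\vecg} R_{\signlaph}(\vecg)$ and $R_{\signlaph}(\vecf)=\gamma_1$. I would then run the diffusion process with $\vecf_0=\vecf$ — which is well defined from any starting vector by Lemma~\ref{lem:rulesimplydiffusion} — and track the map $t \mapsto R_{\signlaph}(\vecf_t)$. On one hand, this map attains its minimum over $[0,\infty)$ at the left endpoint $t=0$, because $R_{\signlaph}(\vecf_t)=D(\vecf_t)\geq \gamma_1 = R_{\signlaph}(\vecf_0)$ for all $t$; hence its one-sided derivative at $t=0$ is $\geq 0$. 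On the other hand, the third statement of Lemma~\ref{lem:derivatives} says this derivative is $\leq 0$. Therefore $\frac{\mathrm{d}}{\mathrm{d}t}R_{\signlaph}(\vecf_t)$ equals $0$ at $t=0$, and the equality characterisation in Lemma~\ref{lem:derivatives}(3) forces $\degm_\graphh^{-1}\signlaph\vecf \in \mathrm{span}(\vecf)$.

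It then remains to pin down the scalar. Writing $\degm_\graphh^{-1}\signlaph\vecf = \mu\vecf$ and pairing with $\vecf$ in the weighted inner product gives $\mu\norm{\vecf}_\weight^2 = \inner{\vecf}{\degm_\graphh^{-1}\signlaph\vecf}_\weight = \vecf^\transpose\signlaph\vecf$, so $\mu = (\vecf^\transpose\signlaph\vecf)/(\vecf^\transpose\degm_\graphh\vecf) = R_{\signlaph}(\vecf)=\gamma_1$. Hence $\degm_\graphh^{-1}\signlaph\vecf = \gamma_1\vecf$, which is exactly the assertion that $(\gamma_1,\vecf)$ is an eigen-pair of $\signlaph$ with eigenvalue $\gamma_1$, and since $\vecf$ was an arbitrary minimiser, the lemma follows.

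The step I expect to need the most care is the derivative argument at the boundary $t=0$: one must check that $t \mapsto R_{\signlaph}(\vecf_t)$ is (right-)differentiable there, and that the equality case of Lemma~\ref{lem:derivatives}(3) — which ultimately comes from the equality case of Cauchy--Schwarz on $\inner{\cdot}{\cdot}_\weight$ — is legitimately invoked at $t=0$ rather than only for $t>0$. Everything else (the diffusion being well defined from $\vecf$, the nonvanishing of $\vecf$, and the inner-product computation of $\mu$) is immediate from the earlier results.
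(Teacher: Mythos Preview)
Your proposal is correct and follows essentially the same route as the paper: use Lemma~\ref{lem:rqdisc} to identify $D$ with $R_{\signlaph}$, observe that at a minimiser the derivative $\frac{\mathrm{d}}{\mathrm{d}t}R(\vecf_t)$ must vanish, and invoke the equality case of Lemma~\ref{lem:derivatives}(3) to get $\degm_\graphh^{-1}\signlaph\vecf\in\mathrm{span}(\vecf)$. Your write-up is in fact more complete than the paper's, since you spell out why the derivative is sandwiched between $0$ and $0$ and you explicitly compute that the eigenvalue equals $\gamma_1$; the paper leaves both of these implicit.
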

\begin{proof}
By Lemma~\ref{lem:rqdisc}, it holds that $R(\vecf)= D(\vecf)$ for any $\vecf\in\R^n$. When $\vecf$ is a minimiser of $D(\vecf)$, it must hold that
\[
\frac{\mathrm{d} R(\vecf)}{\mathrm{d}t}=0,
\]
which implies by Lemma~\ref{lem:derivatives} that $\degm_\graphh^{-1} \signlaph \vecf\in\mathrm{span}(\vecf)$ and proves that $\vecf$ is an eigenvector.
\end{proof}

We are now able to prove a Cheeger-type inequality for our operator and the hypergraph bipartiteness.
We first show that the minimum eigenvalue of $\signlaph$ is at most twice the hypergraph bipartiteness.

\begin{lemma} \label{lem:trev_cheeg_easy}
    Given a hypergraph $\graphh = (\vertexset_\graphh, \edgeset_\graphh)$ with sets $\setl, \setr \subset \vertexset_\graphh$ such that $\bipart(\setl, \setr) = \bipart$, it is the case that
    \[
        \gamma_1 \leq 2 \bipart
    \]
    where $\gamma_1$ is the smallest eigenvalue of $\signlaph$.
\end{lemma}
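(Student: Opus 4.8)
The plan is to invoke Lemma~\ref{lem:eigenvalueexists}, which gives $\gamma_1 = \min_{\vecg} D(\vecg)$, so it suffices to exhibit a single test vector $\vecf$ with $D(\vecf) \leq 2\bipart$. I would take the signed indicator of the pair $(\setl,\setr)$: set $\vecf(v) = -1$ for $v \in \setl$, $\vecf(v) = +1$ for $v \in \setr$, and $\vecf(v) = 0$ otherwise. The denominator of $D(\vecf)$ is then $\sum_{v \in \vertexset_\graphh}\deg_\graphh(v)\vecf(v)^2 = \vol(\lur)$, which already matches the denominator of $\bipart_\graphh(\setl,\setr)$.

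The remaining work is to bound the numerator $\sum_{e \in \edgeset_\graphh}\weight(e)\discrep_{\vecf}(e)^2$, which I would handle by a case analysis on how each edge $e$ meets $\setl$, $\setr$, and $\comp{\lur}$, using that $\vecf$ takes only the values $-1,0,+1$. (i) If $e$ meets both $\setl$ and $\setr$, then $\max_{u\in e}\vecf(u) = 1$ and $\min_{v\in e}\vecf(v) = -1$, so $\discrep_{\vecf}(e) = 0$, irrespective of whether $e$ also meets $\comp{\lur}$. (ii) If $e \subseteq \setl$ or $e \subseteq \setr$, then $\discrep_{\vecf}(e) = \mp 2$, so $\discrep_{\vecf}(e)^2 = 4$; these are exactly the edges counted by $\weight(\setl | \comp{\setl})$ and $\weight(\setr | \comp{\setr})$. (iii) If $e$ meets $\setr$ and $\comp{\lur}$ but not $\setl$ (and symmetrically with $\setl$ in place of $\setr$), then $\discrep_{\vecf}(e) = \pm 1$, so $\discrep_{\vecf}(e)^2 = 1$; these are counted by $\weight(\setr,\comp{\lur} | \setl)$ and $\weight(\setl,\comp{\lur} | \setr)$. (iv) If $e$ misses $\lur$ entirely, then $\vecf$ vanishes on $e$ and $\discrep_{\vecf}(e) = 0$. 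Summing over all edges gives
\[
\sum_{e\in\edgeset_\graphh}\weight(e)\discrep_{\vecf}(e)^2 = 4\weight(\setl | \comp{\setl}) + 4\weight(\setr | \comp{\setr}) + \weight(\setl,\comp{\lur} | \setr) + \weight(\setr,\comp{\lur} | \setl).
\]

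To finish, I would use the elementary bound $4a + 4b + c + d \leq 2(2a + 2b + c + d)$, which holds since $c,d \geq 0$, applied with $a = \weight(\setl | \comp{\setl})$, $b = \weight(\setr | \comp{\setr})$, $c = \weight(\setl,\comp{\lur} | \setr)$, and $d = \weight(\setr,\comp{\lur} | \setl)$. Then the numerator is at most $2\vol(\lur)\,\bipart_\graphh(\setl,\setr)$, hence $D(\vecf) \leq 2\bipart$, and therefore $\gamma_1 = \min_{\vecg}D(\vecg) \leq D(\vecf) \leq 2\bipart$. I do not expect a genuine obstacle; the one point needing care is matching the four edge cases to the four terms of $\bipart_\graphh(\setl,\setr)$ — in particular confirming that an edge meeting both $\setl$ and $\setr$ contributes zero no matter how it meets $\comp{\lur}$, and that the ``subset'' edges are precisely those with $\discrep_{\vecf}(e)^2 = 4$, so that the factor-$2$ coefficients come out right.
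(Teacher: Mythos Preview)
Your proposal is correct and follows essentially the same approach as the paper: both use the signed indicator vector of $(\setl,\setr)$, compute the same numerator $4\weight(\setl|\comp{\setl}) + 4\weight(\setr|\comp{\setr}) + \weight(\setl,\comp{\lur}|\setr) + \weight(\setr,\comp{\lur}|\setl)$ over $\vol(\lur)$, and bound it by $2\bipart(\setl,\setr)$ via the same elementary inequality, then invoke Lemma~\ref{lem:eigenvalueexists}. Your case analysis is more explicit than the paper's, and your sign convention on $\setl$ versus $\setr$ is swapped, but neither difference is material.
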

\begin{proof}
    Let $\vec{\chi}_{\setl, \setr}\in\{-1,0,1\}^n$ be the indicator vector of the cut $\setl, \setr$ such that
    \[
        \vec{\chi}_{\setl, \setr}(u) = \threepartdefow{1}{u \in \setl}{-1}{u \in \setr}{0}.
    \]
    Then, by Lemma~\ref{lem:rqdisc}, the Rayleigh quotient is given by
    \begin{align*}
        R_{\signlaph}\left(\vec{\chi}_{\setl,\setr}\right) & = \frac{\sum_{e \in \edgeset} \weight(e) \left(\max_{u \in e} \vec{\chi}_{\setl,\setr}(u) + \min_{v \in e} \vec{\chi}_{\setl,\setr}(v)\right)^2}{\sum_{v \in \vertexset} \deg(v) \vec{\chi}_{\setl,\setr}(v)^2} \\
        & = \frac{4 \weight(\setl | \setcomplement{\setl}) + 4 \weight(\setr | \setcomplement{\setr}) + \weight(\setl, \setcomplement{\lur} | \setr) + \weight(\setr, \setcomplement{\lur} | \setl)}{\vol(\lur)} \\
        & \leq 2 \bipart(\setl, \setr),
    \end{align*}
    which completes the proof since $\gamma_1 = \min_{\vecf} R_{\signlaph}(\vecf)$ by Lemma~\ref{lem:eigenvalueexists}.
\end{proof}

The proof of the other direction of the Cheeger-type inequality is more involved, and forms the basis of the second claim in Theorem~\ref{thm:mainalg}.

\begin{theorem} \label{thm:trev-cheeg}
For any hypergraph $\graphh =(\vertexset,\edgeset,\weight)$, let $\signlaph$ be the operator defined with respect to $\graphh$, and $\gamma_1 = \min_{\vecf} D(\vecf) = \min_{\vecf} R(\vecf)$ be the minimum eigenvalue of $\degm_\graphh^{-1} \signlaph$. Then, there are disjoint sets $\setl,\setr \subset \vertexset$ such that
\[
\bipart(\setl,\setr)\leq\sqrt{2\gamma_1}.
\]
\end{theorem}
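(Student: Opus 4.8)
The plan is to mimic Trevisan's spectral rounding for graph bipartiteness~\cite{trevisanMaxCutSmallest2012}, applying a single random threshold to the squared entries of a minimiser of $D$. Let $\vecf$ be a minimiser of $D$, so that $R_{\signlaph}(\vecf) = D(\vecf) = \gamma_1$ by Lemma~\ref{lem:rqdisc}, and rescale $\vecf$ so that $\max_{v \in \vertexset}\abs{\vecf(v)} = 1$; this changes neither $D(\vecf)$ nor the sweep sets defined below. For $t \in [0,1]$ put
\[
    \setl_t = \{\, v \in \vertexset : \vecf(v) < 0,\ \vecf(v)^2 \ge t \,\}, \qquad
    \setr_t = \{\, v \in \vertexset : \vecf(v) \ge 0,\ \vecf(v)^2 \ge t \,\},
\]
which are exactly the two‑sided sweep sets of Algorithm~\ref{algo:main} with threshold $\sqrt{t}$. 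Drawing $t$ uniformly at random from $[0,1]$, I would bound separately the expected numerator and denominator of $\bipart_\graphh(\setl_t,\setr_t)$ and then extract a good threshold by averaging.

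\textbf{The two expectations.} A vertex $v$ lies in $\setl_t \cup \setr_t$ iff $\vecf(v)^2 \ge t$, so $\vol(\setl_t \cup \setr_t) = \sum_{v : \vecf(v)^2 \ge t}\deg(v)$ and hence $\E_t\!\left[\vol(\setl_t\cup\setr_t)\right] = \sum_{v}\deg(v)\vecf(v)^2 = \norm{\vecf}_\weight^2$. The crux is the numerator. By linearity it suffices to bound, for each $e \in \edgeset_\graphh$, the expected contribution of $e$ to $2\weight(\setl_t \mid \comp{\setl_t}) + 2\weight(\setr_t \mid \comp{\setr_t}) + \weight(\setl_t, \comp{\lur} \mid \setr_t) + \weight(\setr_t, \comp{\lur} \mid \setl_t)$. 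Writing $a_e = \max_{u \in e}\vecf(u)$ and $b_e = \min_{v \in e}\vecf(v)$, so that $\discrep_\vecf(e) = a_e + b_e$, I claim
\[
    \E_t\!\left[\text{contribution of } e\right] \le \weight(e)\,\abs{a_e + b_e}\bigl(\abs{a_e} + \abs{b_e}\bigr).
\]
This follows from a case analysis on the signs of $a_e$ and $b_e$, using the following facts: a vertex of $e$ lies outside $\lur$ iff it is \emph{inactive} (its squared value is below $t$), so $e$ meets $\comp{\lur}$ iff some vertex of $e$ is inactive; $e$ meets $\setr_t$ (resp.\ $\setl_t$) iff $a_e \ge 0$ and $a_e^2 \ge t$ (resp.\ $b_e < 0$ and $b_e^2 \ge t$), because the extremal vertex of $e$ is the one most likely to enter the sweep set; and when $a_e, b_e$ have opposite signs $e$ can never be contained in $\setl_t$ or $\setr_t$, while whenever $e$ meets exactly one of the two sweep sets the opposite extremal vertex is already inactive, so $e$ then automatically meets $\comp{\lur}$ and contributes with coefficient $1$. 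Integrating over $t$, the expected contribution is $\weight(e)(a_e^2 + b_e^2) \le \weight(e)(a_e + b_e)^2$ when $a_e$ and $b_e$ have the same sign (or one is $0$), and exactly $\weight(e)\abs{a_e^2 - b_e^2} = \weight(e)(\abs{a_e} + \abs{b_e})\abs{a_e + b_e}$ when they have opposite signs; in both cases the claimed bound holds.

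\textbf{Combining.} Summing over $e$ and applying the Cauchy--Schwarz inequality with weights $\weight(e)$,
\begin{align*}
    \E_t[\text{numerator}]
    &\le \sum_{e \in \edgeset_\graphh}\weight(e)\,\abs{a_e + b_e}\bigl(\abs{a_e} + \abs{b_e}\bigr) \\
    &\le \left(\sum_{e}\weight(e)(a_e + b_e)^2\right)^{1/2}\left(\sum_{e}\weight(e)(\abs{a_e} + \abs{b_e})^2\right)^{1/2}.
\end{align*}
By Lemma~\ref{lem:rqdisc} and the choice of $\vecf$, the first factor equals $\bigl(\vecf^\transpose \signlaph \vecf\bigr)^{1/2} = \sqrt{\gamma_1}\,\norm{\vecf}_\weight$. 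For the second factor I would use $(\abs{a_e} + \abs{b_e})^2 \le 2(a_e^2 + b_e^2) \le 2\sum_{v \in e}\vecf(v)^2$, the last step because $\rank(e) \ge 2$ lets us pick distinct vertices of $e$ witnessing $a_e$ and $b_e$; summing over $e$ and swapping the order of summation gives $\sum_{e}\weight(e)(\abs{a_e} + \abs{b_e})^2 \le 2\sum_{v}\deg(v)\vecf(v)^2 = 2\norm{\vecf}_\weight^2$. Hence $\E_t[\text{numerator}] \le \sqrt{2\gamma_1}\,\norm{\vecf}_\weight^2 = \sqrt{2\gamma_1}\,\E_t[\text{denominator}]$. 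Since the numerator vanishes whenever the denominator does and the denominator is positive on a set of thresholds of positive measure, there is a threshold $t_0$ with $\vol(\setl_{t_0}\cup\setr_{t_0}) > 0$ whose numerator is at most $\sqrt{2\gamma_1}$ times it; the disjoint sets $\setl = \setl_{t_0}$, $\setr = \setr_{t_0}$ then satisfy $\bipart_\graphh(\setl,\setr) \le \sqrt{2\gamma_1}$.

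\textbf{Main obstacle.} The delicate step is the per‑hyperedge expectation bound. Unlike the graph case, whether a hyperedge is ``monochromatic'', ``on the boundary'', or ``properly split'' with respect to $(\setl_t,\setr_t)$ depends on all of its vertices, and one has to verify that only the extremal pair $(a_e,b_e)$ governs the expectation: in particular that interior vertices of a mixed‑sign hyperedge never force an extra boundary term beyond the $\abs{a_e^2 - b_e^2}$ already accounted for, and that the three coefficients $2$, $1$, $0$ are assigned correctly across every sign pattern of $(a_e,b_e)$. This bookkeeping, together with hyperedge‑specific inequalities such as $a_e^2 + b_e^2 \le \sum_{v\in e}\vecf(v)^2$, is where the argument is genuinely more involved than in~\cite{trevisanMaxCutSmallest2012}.
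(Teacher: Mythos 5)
Your proposal is correct and follows essentially the same route as the paper's proof: rescale a minimiser of $D$, draw a random threshold $t$ uniformly on the squared values, show the per-hyperedge expected numerator is at most $\weight(e)\abs{a_e+b_e}(\abs{a_e}+\abs{b_e})$ via the same sign case analysis, and finish with Cauchy--Schwarz plus the bound $\sum_e \weight(e)(\abs{a_e}+\abs{b_e})^2 \le 2\norm{\vecf}_\weight^2$. The bookkeeping about interior vertices that you flag as the ``main obstacle'' does go through exactly as you sketch, because with same-sign extrema every interior value is squeezed between $b_e^2$ and $a_e^2$, and with opposite-sign extrema every interior squared value is at most $\max(a_e^2,b_e^2)$, so only the extremal pair matters.
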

\begin{proof}
    \newcommand{\maxfu}{a_e}
    \newcommand{\minfu}{b_e}
    Let $\vecf\in\R^n$ be the vector such that $D(\vecf)=\gamma_1$.
    For any threshold $t \in [0, \max_u \vecf(u)^2]$, define $\vecx_t$ such that
    \[
        \vecx_t(u) = \threepartdefow{1}{\vecf(u) \geq \sqrt{t}}{-1}{\vecf(u) \leq -\sqrt{t}}{0}.
    \]
    We will show that choosing $t \in \left[0, \max_u \vecf(u)^2\right]$ uniformly at random gives
    \begin{equation} \label{eq:expectation_goal}
        \E\left[\sum_{e \in \edgeset_\graphh} \weight(e) \deltaeabs{\vecx_t} \right] \leq \sqrt{2 \gamma_1}\cdot \E\left[\sum_{v \in \vertexset_\graphh} \deg(v) \abs{\vecx_t(v)} \right].
    \end{equation}
    Notice that every such $\vecx_t$ defines disjoints vertex sets $\setl$ and $\setr$ that satisfy
     \[
        \bipart(\setl, \setr) = \frac{\sum_{e \in \edgeset_\graphh}\weight(e) \deltaeabs{\vecx_t}}{\sum_{v \in \vertexset_\graphh} \deg(v) \abs{\vecx_t(v)}}.
    \]
    Hence, \eqref{eq:expectation_goal} would imply that there is some $\vecx_t$ such that the disjoint $\setl,\setr$ defined by $\vecx_t$ would satisfy
    \[
     \sum_{e \in \edgeset_\graphh}\weight(e) \deltaeabs{\vecx_t}   \leq \sqrt{2 \gamma_1}\cdot  \sum_{v \in \vertexset_\graphh} \deg(v) \abs{\vecx_t(v)},
    \]
    which implies that
    \[
    \bipart(\setl,\setr)\leq\sqrt{2\gamma_1}.
    \]
    Hence, it suffices to prove \eqref{eq:expectation_goal}.
        We assume without loss of generality that $\max_u\left\{\vecf(u)^2\right\}=1$, so $t$ is chosen uniformly from $[0,1]$.
    First of all, we have that
    \begin{align*}
        \E \left[\sum_{v \in \vertexset_\graphh} \deg(v) \abs{\vecx_t(v)} \right] = \sum_{v \in \vertexset_\graphh} \deg(v) \E\left[\abs{\vecx_t(v)}\right] = \sum_{v \in \vertexset_\graphh} \deg(v) \vecf(v)^2.
    \end{align*}
    To analyse the left-hand side of \eqref{eq:expectation_goal}, we will focus on a particular edge $e \in \edgeset_\graphh$. Let $\maxfu = \max_{u \in e} \vecf(u)$ and $\minfu = \min_{v \in e} \vecf(v)$.
    We will drop the subscript $e$ when it is clear from context.
    We will show that
    \begin{equation} \label{eq:edge_expectation}
        \E\left[\deltaeabs{\vecx_t}\right] \leq \abs{\maxfu + \minfu}(\abs{\maxfu} + \abs{\minfu}).
    \end{equation}
    \begin{enumerate}
        \item Suppose $\sign(\maxfu) = \sign(\minfu)$. Our analysis is by case distinction:
        \begin{itemize}
            \item $\deltaeabs{\vecx_t} = 2$ with probability $\min(\maxfu^2, \minfu^2)$;
            \item $\deltaeabs{\vecx_t} = 1$ with probability $\abs{\maxfu^2 - \minfu^2}$;
            \item $\deltaeabs{\vecx_t} = 0$ with probability $1 - \max(\maxfu^2, \minfu^2)$.
        \end{itemize}
        Assume without loss of generality that $\maxfu^2 = \min\left(\maxfu^2, \minfu^2\right)$. Then, it holds hat 
        \[\E \left[\deltaeabs{\vecx}\right] = 2 \maxfu^2 + \abs{\maxfu^2 - \minfu^2} = \maxfu^2 + \minfu^2 \leq \abs{\maxfu + \minfu}(\abs{\maxfu} + \abs{\minfu}).\]
        \item Suppose $\sign(\maxfu) \neq \sign(\minfu)$. Our analysis is by case distinction: 
        \begin{itemize}
            \item $\deltaeabs{\vecx_t} = 2$ with probability $0$;
            \item $\deltaeabs{\vecx_t} = 1$ with probability $\abs{\maxfu^2 - \minfu^2}$;
            \item $\deltaeabs{\vecx_t} = 0$ with probability $\min(\maxfu^2, \minfu^2)$.
        \end{itemize}
        Assume without loss of generality that $\maxfu^2 = \min\left(\maxfu^2, \minfu^2\right)$. Then, it holds that 
        \[\E \left[\deltaeabs{\vecx_t}\right] = \abs{\maxfu^2 - \minfu^2} = (\abs{\maxfu} - \abs{\minfu})(\abs{\maxfu} + \abs{\minfu}) = \abs{\maxfu + \minfu}(\abs{\maxfu} + \abs{\minfu}),\]
        where the final equality follows because $\maxfu$ and $\minfu$ have different signs.
    \end{enumerate}
    These two cases establish \eqref{eq:edge_expectation}.
    Now, we have that 
    \begin{align*}
        \E & \left[\sum_{e \in \edgeset_\graphh} \weight(e) \deltaeabs{\vecx_t}\right] \leq \sum_{e \in \edgeset_\graphh} \weight(e) \abs{\maxfu + \minfu}(\abs{\maxfu} + \abs{\minfu}) \\
        & \leq \sqrt{\sum_{e \in \edgeset_\graphh}\weight(e)\abs{\maxfu + \minfu}^2} \sqrt{\sum_{e \in \edgeset_\graphh}\weight(e) (\abs{\maxfu} + \abs{\minfu})^2} \\
        & = \sqrt{\sum_{e \in \edgeset_\graphh} \weight(e) \deltaesquare{\vecf}} \sqrt{\sum_{e \in \edgeset_\graphh}\weight(e) (\abs{\maxfu} + \abs{\minfu})^2}.
    \end{align*}
    By our assumption that $\vecf$ is the eigenvector corresponding to the eigenvalue $\gamma_1$, it holds that
    \[
        \sum_{e \in \edgeset_\graphh}\weight(e) \deltaesquare{\vecf} \leq \gamma_1 \sum_{v \in \vertexset_\graphh}\deg(v) \vecf(v)^2.
    \]
    On the other side, we have that 
    \[
        \sum_{e \in \edgeset_\graphh}\weight(e)(\abs{\maxfu} + \abs{\minfu})^2 \leq 2 \sum_{e \in \edgeset_\graphh}\weight(e)(\abs{\maxfu}^2 + \abs{\minfu}^2) \leq 2 \sum_{v \in \vertexset_\graphh}\deg(v) \vecf(v)^2.
    \]
    This gives us that 
    \begin{align*}
        \E \left[\sum_{e \in \edgeset_\graphh}\weight(e) \deltaeabs{\vecx}\right] & \leq \sqrt{2 \gamma_1} \sum_{v \in \vertexset_\graphh}\deg(v) \vecf(v)^2 = \sqrt{2 \gamma_1}\cdot  \E \left[\sum_{v \in \vertexset_\graphh}\deg(v) \abs{\vecx(v)}\right],
    \end{align*}
    which proves the statement.
\end{proof}
 Now, we are able to combine these results to prove Theorem~\ref{thm:mainalg}.
\begin{proof}[Proof of Theorem~\ref{thm:mainalg}.]
    The first statement of the theorem follows by setting the starting vector $\vecf_0$ of the diffusion to be the minimum eigenvector of the clique graph $\graphg$. By Lemma~\ref{lem:bettereigenvalue}, we have that $R_{\signlaph}(\vecf_0) \leq \lambda_1(\degm_\graphg^{-1} \signlapg)$ and the inequality is strict if $\min_{e \in \edgeset_\graphh} \rank(e) > 2$.
    Then, Theorem~\ref{thm:convergence} shows that the diffusion process converges to an eigenvector and that the Rayleigh quotient only decreases during convergence, and so the inequality holds.
    The algorithm runs in polynomial time since, by Theorem~\ref{thm:convergence}, the diffusion process converges in polynomial time, and each step of Algorithm~\ref{algo:main} can be computed in polynomial time using a standard algorithm for solving the linear programs.
 
    The second statement is a restatement of Theorem~\ref{thm:trev-cheeg}. The sweep set algorithm runs in polynomial time since there are $n$ different sweep sets, and computing the hypergraph bipartiteness for each one also takes only polynomial time.
\end{proof}

\subsection{Further discussion on the number of eigenvectors of \texorpdfstring{$\signlaph$}{Jh} and \texorpdfstring{$\lap_\graphh$}{Lh}}
In this subsection, we investigate the spectrum of the non-linear hypergraph Laplacian operator introduced in~\cite{chanSpectralPropertiesHypergraph2018} and our new operator $\signlaph$ by considering some example hypergraphs.
In particular, we show that the hypergraph Laplacian $\lap_\graphh$ can have more than $2$ eigenvalues, which answers an open question of Chan et al.\ \cite{chanSpectralPropertiesHypergraph2018}.
Furthermore, we show that the new operator $\signlaph$ can have an exponentially large number of eigenvectors.

\paragraph{The hypergraph Laplacian $\lap_\graphh$ can have more than $2$ eigenvalues.}
\begin{figure}[t]
    \centering
    \scalebox{0.8}{
    \tikzfig{hypergraphs/laplacian_eigenvalues}
    }
\caption{\small{Given the hypergraph $\graphh$, there are three graphs $\graphg_1$, $\graphg_2$, and $\graphg_3$ which correspond to eigenvalues of the hypergraph Laplacian $\lap_\graphh$.}
\label{fig:laplacian_eigenvalues}}
\end{figure}
Similar to our new operator $\signlaph$, for some vector $\vecf$, the operator $\lap_\graphh$ behaves like the graph Laplacian $\lap_\graphg = \degm_\graphg - \adj_\graphg$ for some graph $\graphg$ constructed by splitting the weight of each hyperedge $e$ between the edges in $\sfe \times \ife$.
We refer the reader to~\cite{chanSpectralPropertiesHypergraph2018} for the full details of this construction.

Now, with reference to Figure~\ref{fig:laplacian_eigenvalues}, we consider the simple hypergraph $\graphh$ where $\vertexset_\graphh = \{v_1, v_2, v_3\}$ and $\edgeset_\graphh = \{\{v_1, v_2, v_3\}\}$.
Letting $\vecf_1 = [1, 1, 1]^\transpose$, notice that the graph $\graphg_1$ in Figure~\ref{fig:laplacian_eigenvalues} is the graph constructed such that $\lap_\graphh \vecf_1$ is equivalent to $\lap_{\graphg_1} \vecf_1$.
In this case,
\[
\renewcommand\arraystretch{1.6}
    \lap_{\graphg_1} = \begin{bmatrix}
        \frac{2}{3} & -\frac{1}{3} & -\frac{1}{3} \\
        -\frac{1}{3} & \frac{2}{3} & -\frac{1}{3} \\
        -\frac{1}{3} & -\frac{1}{3} & \frac{2}{3}
    \end{bmatrix}
\]
and we have
\[
\renewcommand\arraystretch{1.6}
    \degm_\graphh^{-1} \lap_\graphh \vecf_1 = \lap_{\graphg_1} \vecf_1 = [0, 0, 0]^\transpose 
\]
which shows that $\vecf_1$ is the trivial eigenvector of $\degm_{\graphh}^{-1} \lap_\graphh$ with eigenvalue $0$.

Now, consider $\vecf_2 = [1, -2, 1]^\transpose$. In this case, $\graphg_2$ shown in Figure~\ref{fig:laplacian_eigenvalues} is the graph constructed such that $\lap_\graphh \vecf_2$ is equivalent to $\lap_{\graphg_2} \vecf_2$.
Then, we have
\[
\renewcommand\arraystretch{1.6}
    \lap_{\graphg_2} = \begin{bmatrix}
        \frac{1}{2} & -\frac{1}{2} & 0 \\
        -\frac{1}{2} & 1 & -\frac{1}{2} \\
        0 & -\frac{1}{2} & \frac{1}{2}
    \end{bmatrix}
\]
and
\[
    \degm_{\graphh}^{-1} \lap_\graphh \vecf_2 = \lap_{\graphg_2} \vecf_2 = \left[\frac{3}{2}, -3, \frac{3}{2} \right]^\transpose
\]
and so $\vecf_2$ is an eigenvector of $\degm_\graphh^{-1} \lap_\graphh$ with eigenvalue $3/2$.

Finally, we consider $\vecf_3 = [1, -1, 0]$ and notice that the graph $\graphg_3$ is the constructed graph.
Then,
\[
\renewcommand\arraystretch{1.6}
    \lap_{\graphg_3} = \begin{bmatrix}
        \frac{1}{2} & -\frac{1}{2} & 0 \\
        - \frac{1}{2} & \frac{1}{2} & 0 \\
        0 & 0 & 0
    \end{bmatrix}
\]
and
\[
    \degm_{\graphh}^{-1} \lap_\graphh \vecf_3 = \lap_{\graphg_3} \vecf_3 = \left[2, -2, 0 \right]^\transpose
\]
which shows that $\vecf_3$ is an eigenvector of $\lap_\graphh$ with eigenvalue $2$.

Through an exhaustive search of the other possible constructed graphs on $\{v_1, v_2, v_3\}$, we find that these are the only eigenvalues.
By the symmetries of $\vecf_2$ and $\vecf_3$ this means that the operator $\lap_\graphh$ has a total of $7$ different eigenvectors and $3$ distinct eigenvalues.
It is useful to point out that, since the $\lap_\graphh$ operator is non-linear, a linear combination of eigenvectors with the same eigenvalue is \emph{not}, in general, an eigenvector.
Additionally, notice that the constructions of $\vecf_1$ and $\vecf_2$ generalise to hypergraphs with more than $3$ vertices which shows that the number of eigenvectors of $\lap_\graphh$ can grow faster than the number of vertices.
This example answers an open question in~\cite{chanSpectralPropertiesHypergraph2018} which showed that there are always two eigenvalues and asked whether there can be any more, although further investigation of the spectrum of this operator would be very interesting.

\paragraph{The hypergraph operator $\signlap_\graphh$ can have an exponential number of eigenvectors.}
\begin{figure}[t]
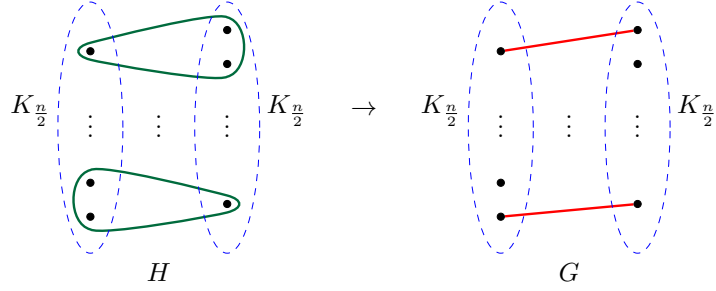

    \centering
    \scalebox{0.9}{
    \tikzfig{hypergraphs/new_operator_eigenvalues}
    }
\caption{\small{Given the hypergraph $\graphh$, there are $2^{n/3}$ possible graphs $\graphg$, each of which corresponds to a different eigenvector of the hypergraph operator $\signlaph$.}
\label{fig:new_operator_eigenvalues}}
\end{figure}
To study the spectrum of our new operator $\signlaph$, we construct a hypergraph $\graphh$ in the following way:
\begin{itemize}
    \item There are $n$ vertices split into two clusters $\setl$ and $\setr$ of size $n/2$. There is a clique on each cluster.
    \item The cliques are joined by $n/3$ edges of rank $3$, such that every vertex is a member of exactly one such edge.
\end{itemize}
Now, we can construct a vector $\vecf$ as follows.
For each edge $e$ of rank $3$ in the hypergraph, let $u \in e$ be the vertex alone in one of the cliques, and $v, w \in e$ be the two vertices in the other clique.
Then, we set $\vecf(u) = 1$ and one of $\vecf(v)$ or $\vecf(w)$ to be $-1$ and the other to be $0$.
Notice that there are $2^{n/3}$ such vectors.
Each one corresponds to a different graph $\graphg$, as illustrated in Figure~\ref{fig:new_operator_eigenvalues}, which is the graph constructed such that $\signlaph$ is equivalent to $\signlapg$ when applied to the vector $\vecf$.

Notice that, by the construction of the graph $\graphh$, half of the edges of rank $3$ must have one vertex in $\setl$ and two vertices in $\setr$ and half of the rank-$3$ edges must have two vertices in $\setl$ and one vertex in $\setr$.
This means that, within each cluster $\setl$ or $\setr$, one third of the vertices have $\vecf$-value $1$, one third have $\vecf$-value $-1$ and one third have $\vecf$-value $0$.

Now, we have that
\begin{align*}
    \left(\degm_\graphh^{-1} \signlaph \vecf\right)(u) & = \left(\degm_\graphh^{-1} \signlapg \vecf\right)(u) \\
    & = \frac{2}{n} \sum_{u \sim_\graphg v} (\vecf(u) + \vecf(v))
\end{align*}
where $u \sim_\graphg v$ means that $u$ and $v$ are adjacent in the graph $\graphg$.
Suppose that $\vecf(u) = 0$, meaning that it does not have an adjacent edge in $\graphg$ from its adjacent rank-$3$ edge in $\graphh$.
In this case,
\begin{align*}
    \left(\degm_\graphh^{-1} \signlaph \vecf\right)(u) & = \frac{2}{n} \sum_{u \sim_\graphg v} \vecf(v) \\
    & = \frac{2}{n} \left[ \frac{n}{3 \cdot 2} \cdot 1 + \frac{n}{3 \cdot 2} \cdot (-1) \right] \\
    & = 0.
\end{align*}
Now, suppose that $\vecf(u) = 1$, and so it has an adjacent edge in $\graphg$ from its adjacent rank-$3$ edge in $\graphh$.
Then,
\begin{align*}
    \left(\degm_\graphh^{-1} \signlaph \vecf\right)(u) & = \frac{2}{n} \sum_{u \sim_\graphg v} (1 + \vecf(v)) \\
    & = \frac{2}{n} \left[ \frac{n}{2} + \frac{n}{3 \cdot 2} \cdot (-1) + \left(\frac{n}{3 \cdot 2} - 1\right) \cdot 1 - 1 \right] \\
    & = 1 - \frac{4}{n}.
\end{align*}
Similarly, if $\vecf(u) = -1$, we find that
\[
    \left(\degm_\graphh^{-1} \signlaph \vecf\right)(u) = \frac{4}{n} - 1,
\]
and so we can conclude that $\vecf$ is an eigenvector of the operator $\signlaph$ with eigenvalue $(n - 4)/n$.
Since there are $2^{n/3}$ possible such vectors $\vecf$, there are an exponential number of eigenvectors with this eigenvalue.
Once again, we highlight that due to the non-linearity of $\signlaph$, a linear combination of these eigenvectors is in general not an eigenvector of the operator.

\section{Experiments} \label{sec:experiments}
In this section, we evaluate the performance of our new algorithm on synthetic and real-world datasets. 
All algorithms are implemented in Python 3.6, using the \texttt{scipy} library for sparse matrix representations and linear programs. The experiments are performed using an Intel(R) Core(TM) i5-8500 CPU @ 3.00GHz processor, with 16 GB RAM. Our code can be downloaded from 
\begin{center}
\href{https://github.com/pmacg/hypergraph-bipartite-components}{https://github.com/pmacg/hypergraph-bipartite-components}.
\end{center}

Since ours is the first proposed algorithm for approximating hypergraph bipartiteness, we will compare it to a simple and natural baseline algorithm, which we call \textsc{CliqueCut} (\textsc{CC}).
In this algorithm, we construct the clique reduction of the hypergraph and 
use the two-sided sweep-set algorithm described in \cite{trevisanMaxCutSmallest2012} to find a set with low bipartiteness in the clique reduction.\footnote{We choose to use the algorithm in \cite{trevisanMaxCutSmallest2012} here since, as far as we know, this  is the only non-\textsf{SDP} based algorithm for solving the MAX-CUT problem for graphs. Notice that, although SDP-based algorithms achieve a better approximation ratio for the MAX-CUT problem, they are not practical even for hypergraphs of medium sizes.}

Additionally, we will compare two versions of our proposed algorithm.  \diffalgname\ (\diffalgshortname) is our new algorithm described in Algorithm~\ref{algo:main} and \diffalgapproxname\ (\diffalgapproxshortname) is an approximate version in which we do not solve the linear programs in Lemma~\ref{lem:rulesimplydiffusion} to compute the graph $\graphg$.
Instead, at each step of the algorithm, we construct $\graphg$ by splitting the weight of each hyperedge $e$ evenly between the edges in $\sets(e) \times \set{I}(e)$.

We always set the parameter $\epsilon = 1$ for \diffalgshortname\ and \diffalgapproxshortname, and we set the starting vector $\vecf_0 \in\R^n$ for the diffusion to be the eigenvector corresponding to the minimum eigenvalue of $\signlapg$, where $\graphg$ is the clique reduction of the hypergraph $\graphh$.

\subsection{Synthetic datasets}
We first evaluate the algorithms using a random hypergraph model.
Given the parameters $n$, $r$, $p$, and $q$, we generate an $n$-vertex $r$-uniform hypergraph in the following way:
the vertex set $\vertexset$ is divided into two clusters $\setl$ and $\setr$ of size $n/2$.
For every set $\sets \subset \vertexset$ with $\cardinality{\sets} = r$, if $\sets \subset \setl$ or $\sets \subset \setr$ we add the hyperedge $\sets$ with probability $p$ and otherwise we add the hyperedge with probability $q$.
 We remark that this is a special case of the hypergraph stochastic block model (e.g., \cite{chienCommunityDetectionHypergraphs2018}). We limit the number of free parameters for simplicity while maintaining enough flexibility to generate random hypergraphs with a wide range of optimal $\bipart_\graphh$-values.

We will compare the algorithms' performance using four metrics: the hypergraph bipartiteness ratio $\bipart_\graphh(\setl, \setr)$, the clique graph bipartiteness ratio $\bipart_\graphg(\setl, \setr)$, the $F_1$-score~\cite{vanrijsbergenGeometryInformationRetrieval2004} of the returned clustering, and the runtime of the algorithm.
Throughout this subsection, we always report the average result on $10$ hypergraphs randomly generated with each parameter configuration.

\paragraph{Comparison of \diffalgshortname\ and \diffalgapproxshortname}
We first fix the values $n = 200$, $r = 3$, and $p = 10^{-4}$ and vary the ratio of $q/p$ from $2$ to $6$ which produces hypergraphs with $250$ to $650$ edges.
The performance of each algorithm on these hypergraphs is shown in Figure~\ref{fig:synthetic_rank3} from which we can make the following observations:
\begin{itemize} \itemsep -2pt
    \item From Figure~\ref{fig:synthetic_rank3}~(a) we observe that \diffalgshortname\ and \diffalgapproxshortname\ find sets with very similar bipartiteness and they perform better than the \textsc{CliqueCut} baseline.
    \item From Figure~\ref{fig:synthetic_rank3}~(b) we can see that our proposed algorithms produce output with a lower
    $\beta_G$-value than the output of the \textsc{CliqueCut} algorithm. This is a surprising result given that \textsc{CliqueCut} operates directly on the clique graph.
    \item Figure~\ref{fig:synthetic_rank3}~(c) shows that the \diffalgapproxshortname\ algorithm is much faster than \diffalgshortname.
\end{itemize}
From these observations, we conclude that in practice it is sufficient to use the much faster \diffalgapproxshortname\ algorithm in place of the \diffalgshortname\ algorithm.

\begin{figure}[th]
\captionsetup[subfigure]{justification=centering}
\centering
    \begin{subfigure}{0.32\textwidth}
        \includegraphics[width=\textwidth]{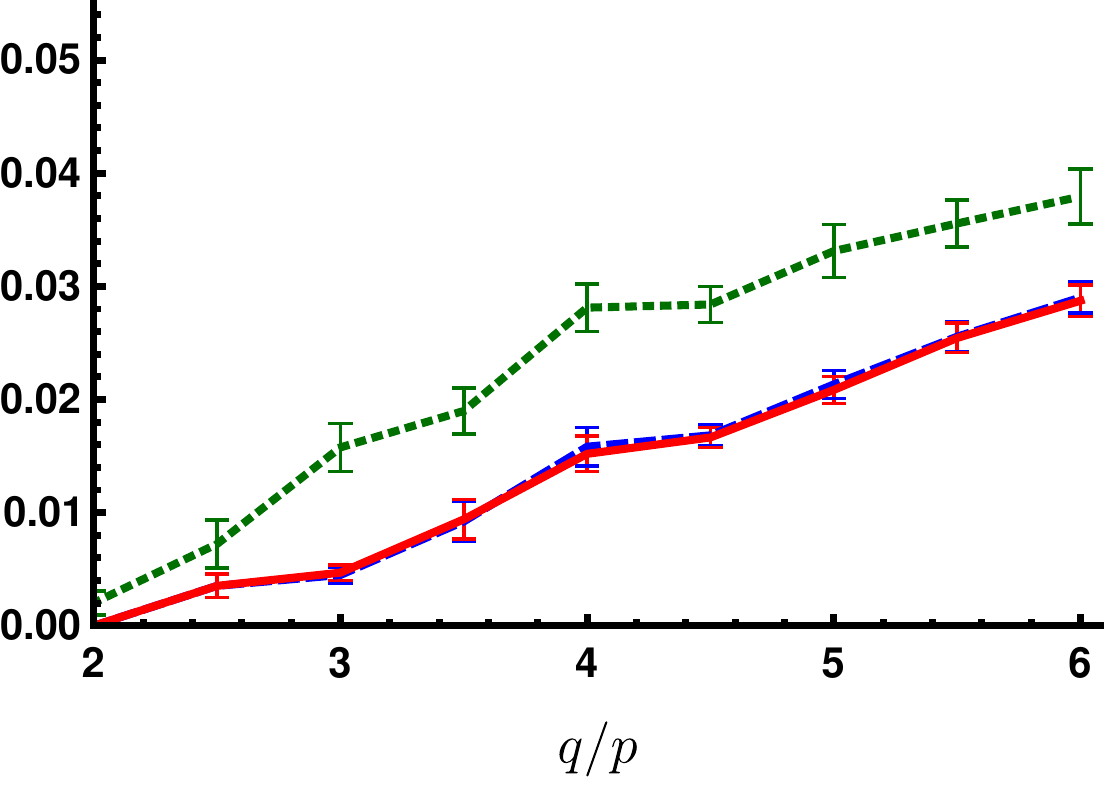}
        \caption{\centering $\bipart_\graphh$-value}
    \end{subfigure}
    \hspace{3pt}
    \begin{subfigure}{0.32\textwidth}
        \includegraphics[width=\textwidth]{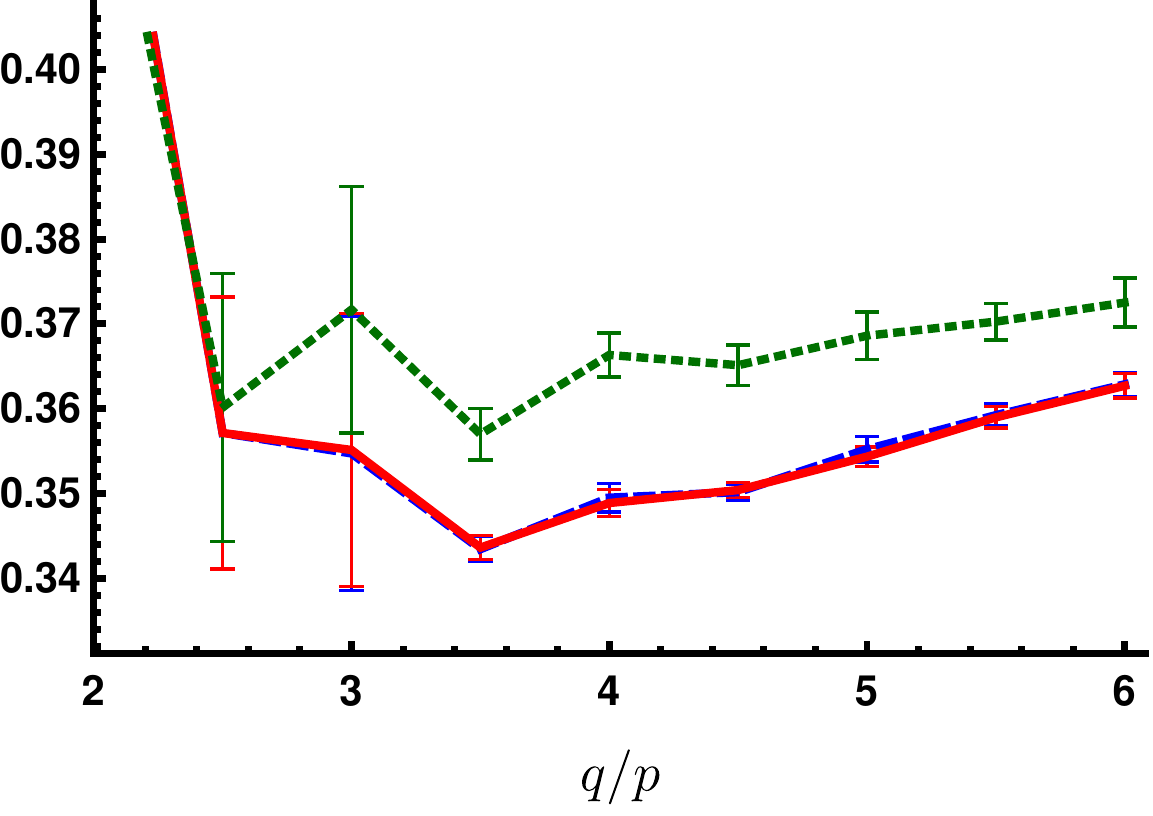}
        \caption{$\bipart_\graphg$-value}
    \end{subfigure}
    \hspace{3pt}
    \begin{subfigure}{0.31\textwidth}
        \includegraphics[width=\textwidth]{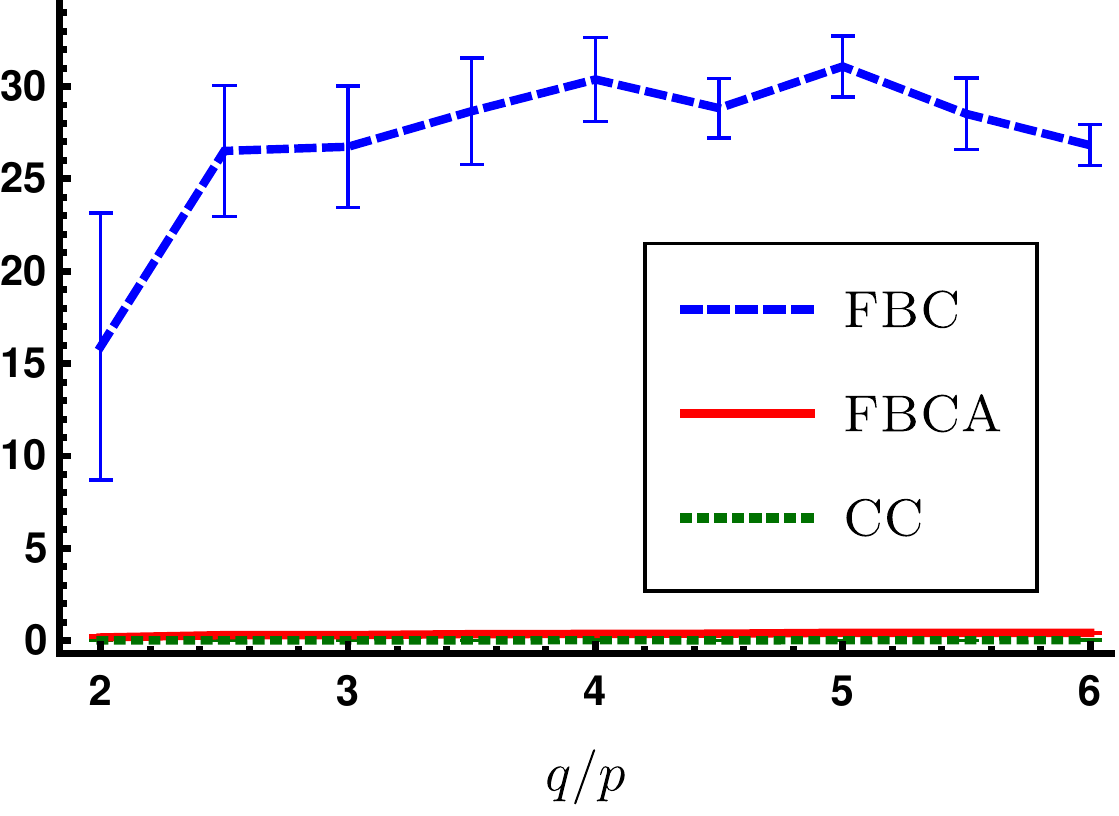}
        \caption{Runtime}
    \end{subfigure}
    \caption{The average performance and standard error of each algorithm when $n = 200$, $r = 3$ and $p = 10^{-4}$. 
    \label{fig:synthetic_rank3}
    }
\end{figure}

\paragraph{Experiments on larger graphs.}

\begin{table} 
  \caption{The runtime in seconds of the \diffalgapproxshortname\ and \textsc{CC} algorithms.}
  \label{tab:runtime}
  \centering
  \begin{tabular}{lllll}
    \toprule
    & & & \multicolumn{2}{c}{Avg.\ Runtime} \\
    \cmidrule(r){4-5}
    $r$ & $p$ & Avg.\ $\cardinality{\edgeset_\graphh}$ & \diffalgapproxshortname\ & \textsc{CC} \\
    \midrule
    & $10^{-9}$ & $1239$ & $1.15$ & $0.12$ \\
    $4$ & $10^{-8}$ & $12479$ & $10.14$ & $0.86$ \\
    & $10^{-7}$ & $124717$ & $89.92$ & $9.08$ \\
    \midrule
    & $10^{-11}$ & $5177$ & $3.99$ & $0.62$ \\
    $5$ & $10^{-10}$ & $51385$ & $44.10$ & $6.50$ \\
    & $10^{-9}$ & $514375$ & $368.48$ & $69.25$ \\
    \bottomrule
  \end{tabular}
\end{table}

We now compare only the \diffalgapproxshortname\ and \textsc{CliqueCut} algorithms, which allows us to run on hypergraphs with higher rank and number of vertices.
We fix the parameters $n = 2000$, $r = 5$, and $p = 10^{-11}$, producing hypergraphs with between $5000$ and $75000$ edges\footnote{In our model, a very small value of $p$ and $q$ is needed since in an $n$-vertex, $r$-uniform hypergraph there are $\binom{n}{r}$ possible edges which can be a very large number. In this case, $\binom{2000}{5} \approx 2.6 \times 10^{14}$.} and show the results in Figure~\ref{fig:synthetic_rank5}.
Our algorithm consistently and significantly outperforms the baseline on every metric and across a wide variety of input hypergraphs.

To compare the algorithms' runtime, we fix the parameter $n = 2000$ and the ratio $q = 2p$, and report the runtime of the \diffalgapproxshortname\ and \textsc{CC} algorithms on a variety of hypergraphs in Table~\ref{tab:runtime}.
Our proposed algorithm takes more time than the baseline \textsc{CC} algorithm but both appear to scale linearly in the size of the input hypergraph\footnote{Although $n$ is fixed, the \textsc{CliqueCut} algorithm's runtime is not constant since the time 
to compute an eigenvalue of the sparse adjacency matrix scales with the number and rank of the hyperedges.}
which suggests that our algorithm's runtime is roughly a constant factor multiple of the baseline.

\begin{figure}[th]
\centering
    \begin{subfigure}{0.3\textwidth}
        \includegraphics[width=\textwidth]{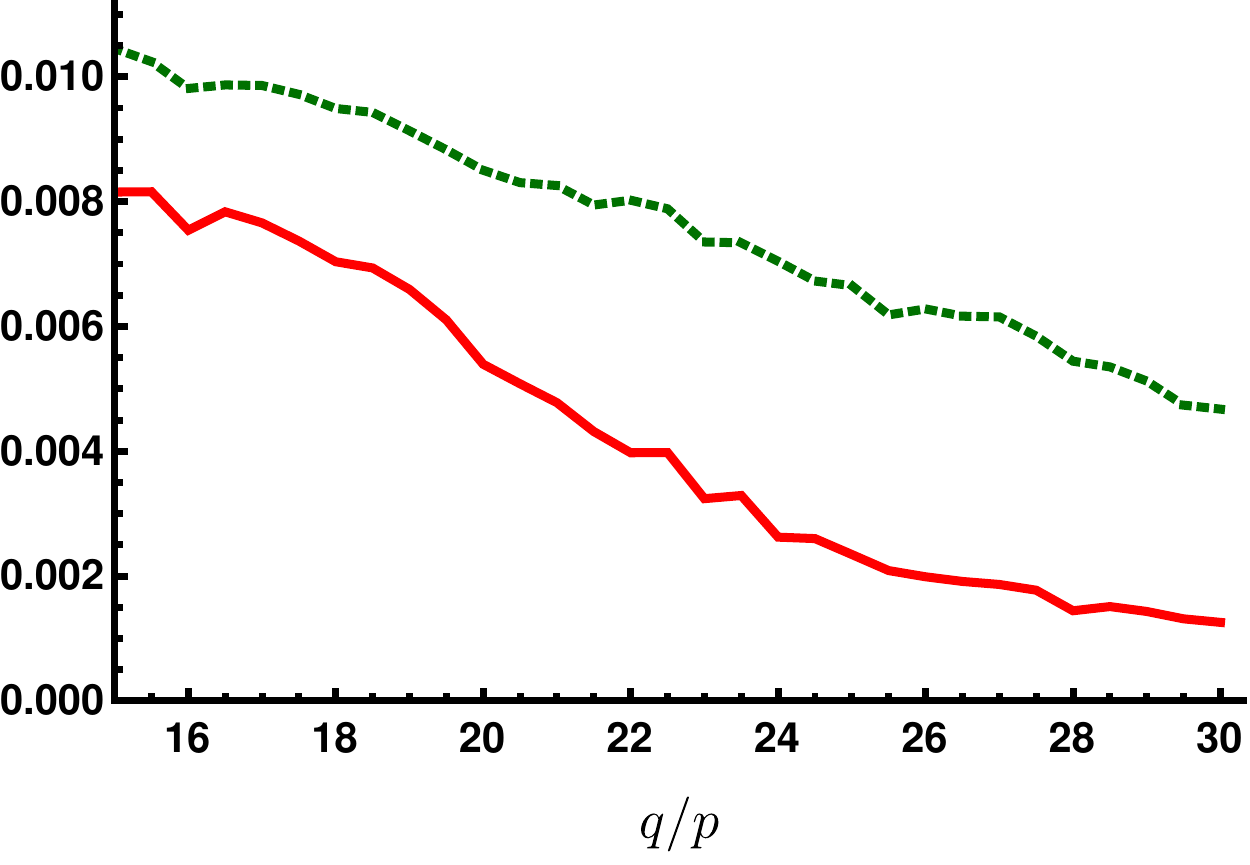}
        \caption{$\beta_H$-value}
    \end{subfigure}
    \hspace{3pt}
    \begin{subfigure}{0.3\textwidth}
        \includegraphics[width=\textwidth]{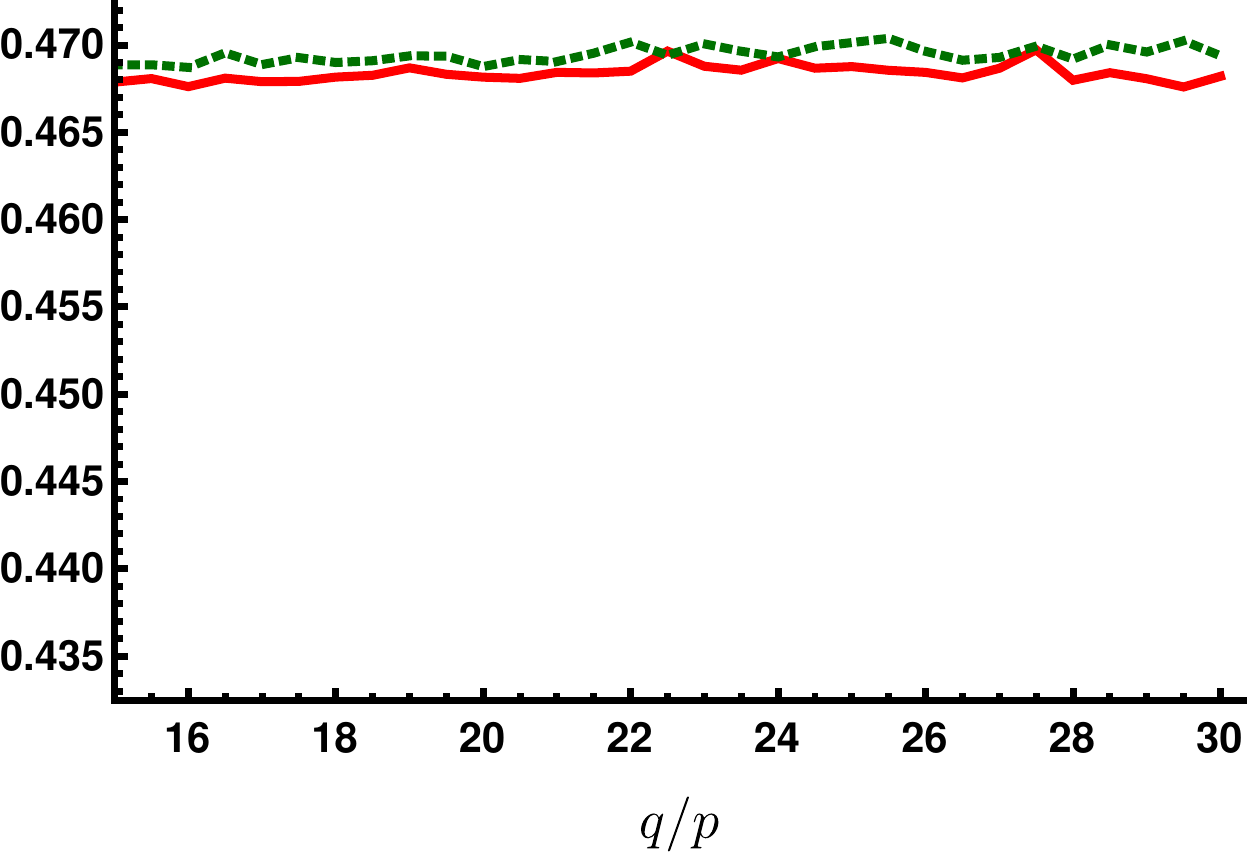}
        \caption{$\beta_G$-value}
    \end{subfigure}
    \hspace{3pt}
    \begin{subfigure}{0.3\textwidth}
        \includegraphics[width=\textwidth]{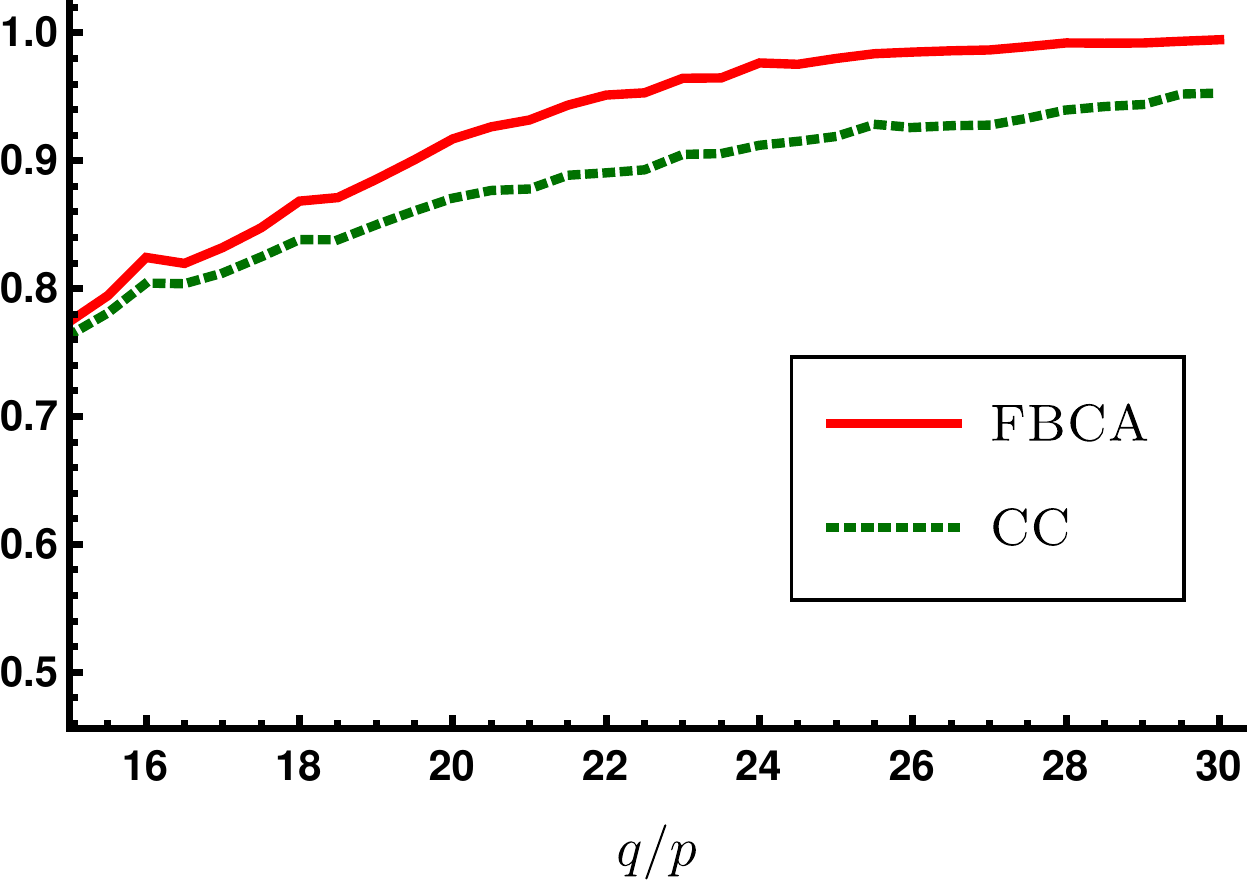}
        \caption{$F_1$-score}
    \end{subfigure}
    \caption{The average performance of each algorithm when $n = 2000$, $r = 5$, and $p = 10^{-11}$. We omit the error bars because they are too small to read.
    \label{fig:synthetic_rank5}
    }
\end{figure}

\subsection{Real-world datasets} \label{sec:real_world_experiments}
 
Next, we demonstrate the broad utility of our algorithm on complex real-world datasets with higher-order relationships which are most naturally represented by hypergraphs.
Moreover, the hypergraphs are \emph{inhomogeneous}, meaning that they contain vertices of different types, although this information is not available to the algorithm and so an algorithm has to treat every vertex identically.
We demonstrate that our algorithm is able to find clusters which correspond to the vertices of different types.
Table~\ref{tab:real_world} shows the $F_1$-score of the clustering produced by our algorithm on each dataset, and demonstrates that it consistently outperforms the \textsc{CliqueCut} algorithm.

\begin{table}
\vspace{-5pt}
  \caption{The performance of the \diffalgapproxshortname\ and \textsc{CC} algorithms on real-world datasets.} 
  \label{tab:real_world}
  \centering
  \begin{tabular}{llll}
    \toprule
    & & \multicolumn{2}{c}{$F_1$-Score} \\
    \cmidrule(r){3-4}
    Dataset & Cluster & \diffalgapproxshortname\ & \textsc{CC} \\
    \midrule
    \multirow{2}{1.6cm}{Penn Treebank} & Verbs & $\mathbf{0.73}$ & $0.69$ \\
      & Non-Verbs & $\mathbf{0.59}$ & $0.56$ \\
      \midrule
    \multirow{2}{*}{DBLP} & Conferences & $\mathbf{1.00}$ & $0.25$ \\
      & Authors & $\mathbf{1.00}$ & $0.98$ \\
    \bottomrule
  \end{tabular}
\end{table}

\paragraph[Penn Treebank.]{Penn Treebank.}
The Penn Treebank dataset is an English-language corpus with examples of written American English from several sources, including fiction and journalism~\cite{marcusBuildingLargeAnnotated1993}.
The dataset contains $49,208$ sentences and over $1$ million words, which are labelled with their part of speech.
We construct a hypergraph in the following way: the vertex set consists of all the  verbs, adverbs, and adjectives which occur at least $10$ times in the corpus, and for every $4$-gram (a sequence of $4$ words) we add a hyperedge containing the co-occurring words.
This results in a hypergraph with $4,686$ vertices and $176,286$ edges. 
The clustering returned by our algorithm correctly distinguishes between verbs and non-verbs with an accuracy of $67$\%.
This experiment demonstrates that our \emph{unsupervised} general purpose algorithm is capable of recovering non-trivial structure in a dataset which would ordinarily be clustered using significant domain knowledge, or a complex pre-trained model~\cite{akbikContextualStringEmbeddings2018, goldwaterFullyBayesianApproach2007}.

\paragraph{DBLP.}
We construct a hypergraph from a subset of the DBLP network consisting of $14,376$ papers published in artificial intelligence and machine learning conferences~\cite{DBLPComputerScience, wangHeterogeneousGraphAttention2019}.
For each paper, we include a hyperedge linking the authors of the paper with the conference in which it was published, giving a hypergraph with $14,495$ vertices and $14,376$ edges.  The clusters returned by our algorithm successfully separate the authors from the conferences with an accuracy of $100$\%.

\section{Concluding remarks}
In this paper, we introduce a new hypergraph Laplacian-type operator and
apply this operator to design an algorithm that finds almost bipartite components in hypergraphs. 
Our experimental results demonstrate the potentially wide applications of spectral hypergraph theory, and so we believe that designing faster spectral hypergraph algorithms is an important future research direction in algorithms and machine learning.
This will allow spectral hypergraph techniques to be applied more effectively to analyse the complex datasets which occur with increasing frequency in the real world.

\bibliographystyle{alpha}
\bibliography{arxiv_references}

\end{document}